\newcommand{\defn}[1]{\textbf{\emph{#1}}}
\renewcommand{\epsilon}{\varepsilon}
\newcommand{\poly}{\operatorname{poly}}
\newcommand{\prob}[1]{\Pr\left[#1\right]}
\DeclareMathOperator{\E}{\mathbb{E}}
\newcommand{\expect}[1]{\E\left[#1\right]}
\newcommand{\ceiling}[1]{\left\lceil #1 \right\rceil}
\newcommand\numberthis{\addtocounter{equation}{1}\tag{\theequation}}
\newcommand{\calA}{\mathcal{A}}
\newcommand{\calB}{\mathcal{B}}
\newcommand{\calR}{\mathcal{R}}
\newcommand{\calX}{\mathcal{X}}
\newcommand{\calY}{\mathcal{Y}}
\newcommand{\LRU}{\text{LRU}\xspace}
\newcommand{\OPT}{\text{OPT}\xspace}
\newcommand{\universe}{\mathcal{U}}
\newcommand{\local}[1]{\langle{#1}\rangle}
\newcommand{\fflocal}[1]{\langle{#1}\rangle^{\text{FF}}}
\newcommand{\iflocal}[1]{\langle{#1}\rangle^{\text{IF}}}
\newcommand{\ways}{\alpha}
\newcommand{\evict}{\textsf{Out}}
\newtheorem{theorem}{Theorem}
\newtheorem{lemma}{Lemma}
\newtheorem{corollary}{Corollary}
\newtheorem{proposition}{Proposition}
\newtheorem{claim}{Claim}
\crefname{equation}{eq.}{eq.}
\Crefname{equation}{Eq.}{Eq.}
\newenvironment{tbox}{\begin{tcolorbox}[
		enlarge top by=5pt,
		enlarge bottom by=5pt,
		 boxsep=0pt,
                  left=4pt,
                  right=4pt,
                  top=10pt,
                  arc=0pt,
                  boxrule=1pt,toprule=1pt,
                  colback=white
                  ]
}
{\end{tcolorbox}}
\begin{document}

\title{An Associativity Threshold Phenomenon in Set-Associative Caches}

\date{}

\author{
Michael A. Bender\thanks{Stony Brook University. \href{mailto:bender@cs.stonybrook.edu}{\texttt{bender@cs.stonybrook.edu}}}
\and
Rathish Das \thanks{University of Liverpool. \href{mailto:rathish.das@liverpool.ac.uk}{\texttt{rathish.das@liverpool.ac.uk}}}
\and
Mart\'{\i}n Farach-Colton \thanks{Rutgers University. \href{mailto:martin@farach-colton.com}{\texttt{martin@farach-colton.com}}}
\and
Guido Tagliavini \thanks{Rutgers University. \href{mailto:guido.tag@rutgers.edu}{\texttt{guido.tag@rutgers.edu}}}
}

\maketitle

\begin{abstract}
In an $\ways$-way set-associative cache, the cache is partitioned into disjoint sets of size $\ways$, and each item can only be cached in one set, typically selected via a hash function. Set-associative caches are widely used and have many benefits, e.g., in terms of latency or concurrency, over fully associative caches, but they often incur more cache misses. As the set size $\ways$ decreases, the benefits increase, but the paging costs worsen.

In this paper we characterize the performance of an $\ways$-way set-associative LRU cache of total size $k$, as a function of $\ways = \ways(k)$. We prove the following, assuming that sets are selected using a fully random hash function:
\begin{itemize}
    \item For $\ways = \omega(\log k)$, the paging cost of an $\ways$-way set-associative LRU cache is within \emph{additive} $O(1)$ of that a fully-associative LRU cache of size $(1-o(1))k$, with probability $1 - 1/\poly(k)$, for all request sequences of length $\poly(k)$.
    \item For $\ways = o(\log k)$, and for all $c = O(1)$ and $r = O(1)$, the paging cost of an $\ways$-way set-associative LRU cache is \emph{not} within a \emph{factor} $c$ of that a fully-associative LRU cache of size $k/r$, for some request sequence of length $O(k^{1.01})$.
    \item For $\ways = \omega(\log k)$, if the hash function can be occasionally changed, the paging cost of an $\ways$-way set-associative LRU cache is within a \emph{factor} $1 + o(1)$ of that a fully-associative LRU cache of size $(1-o(1))k$, with probability $1 - 1/\poly(k)$, for request sequences of arbitrary (e.g., super-polynomial) length.
\end{itemize}

Some of our results generalize to other paging algorithms besides LRU, such as least-frequently used (LFU).
\end{abstract}
\section{Introduction}

In an \defn{$\ways$-way set-associative cache}, the cache is partitioned into disjoint sets, each of size $\ways$, and there is an \defn{indexing function}~\cite{Topham1999RandomizedCache} that selects the unique set into which each item can be placed. The set size $\ways$ is also known as the \defn{associativity} of the cache.

Set-associative caches are extensively used in hardware~\cite{Aingaran2015M7, Konstadinidis2016M7, Kurd2014Haswell, Bowhill2015Xeon, Tam2018SkyLakeSP, Rotem2022AlderLake}, because they are faster, use simpler circuitry, and consume less power~\cite{Hill1988DirectMapping, Hill1989Associativity, Przybylski1988Tradeoffs, Kessler1989SetAssociativity} than \defn{fully associative caches}, which can store any item in any slot in the cache. Set-associative caches are also used as a building block in the design of highly-concurrent software caches~\cite{Adas2022LimitedAssociativity, RocksDB2022LRUCache}, because the sets in a set-associative cache are independent and can operate in parallel. The smaller the set size $\ways$ is, the bigger the benefits are.

However, the smaller the set size $\ways$, the worse the \defn{paging cost}, that is, the number of cache misses. This is because an item can only be placed in one of the cache slots in its allocated set, so when more than $\ways$ frequently accessed items land in the same set, they contend for the slots in the set, causing increased paging.

Thus, when a set-associative cache is being designed, $\ways$ must be chosen small enough to provide the desired latency, power consumption, chip area or concurrency; and $\ways$ must be large enough to keep the paging cost low.  How can a cache designer optimize for $\ways$ given this tradeoff?  Measuring the benefits of a small $\ways$ depends on details that are beyond the scope of this paper.  Here, we focus on the other side of the tradeoff: \textit{what is the increase in paging cost of an $\ways$-way set-associative cache, relative to a fully associative cache, as a function of $\alpha$?}

Answering this question is a well-known open problem in the theory~\cite{Smith1976SetAssociativity, Smith1978SetAssociativity, Sen2013Models, Xiang2013HOTL} and practice~\cite{Hill1989Associativity, Przybylski1988Tradeoffs, Agarwal1988CachePerformance, Bell1974CacheOrganizations} of set-associative caches. The only prior theoretical work that partially answered the question did so under the assumption that item requests come from a distribution~\cite{Smith1976SetAssociativity, Smith1978SetAssociativity, Sen2013Models, Xiang2013HOTL}; their bounds do not hold against an adversary. Proving a worst-case bound in the adversarial setting has remained open.

Existing theory literature has also analyzed alternative types of caches with limited associativity, namely companion caches~\cite{Brehob2003RestrictedCaching, Mendel2004CompanionCaching, Buchbinder2014RestrictedCaching} and two-level caches~\cite{Peserico2003ArbitraryAssociativity}. However, because these analyses compare the paging cost against the optimal algorithm for the limited-associative cache in consideration, they do not provide any information about the paging cost relative to a fully associative cache, so they do not help the cache designer pick the right associativity to trade off benefits against paging costs.

\subsection{This paper}

In this paper we solve the open problem of analyzing the paging cost of a set-associative cache as compared to a fully associative cache, with a focus on the case of set-associative least-recently used (LRU) caches. In these caches, when an item is fetched but the designated set is full, the least-recently accessed item from that set is evicted. We assume that the indexing function $h$ is a hash function that selects sets uniformly and independently at random.

Our main results establish a remarkable threshold phenomenon in set-associative caches.  Specifically, if a cache of size $k$ has associativity $\ways = o(\log k)$, then the paging costs are large, even compared to that of a fully-associative cache whose size is only a small fraction of $k$; and if $\ways = \omega(\log k)$, the paging costs are similar to that of a fully-associative cache of roughly the same size $k$. Thus, on the one hand, cache designers should make sure the associativity is at least logarithmic, and, on the other hand, they should be aware that increasing the associativity beyond logarithmic offers diminishing returns.

\paragraph*{Competitive analysis for long request sequences}

Let $k$ be the size of the cache. We show that if $\ways = \omega(\log k)$, then $\ways$-way set-associative LRU is $1$-competitive with fully associative LRU with probability $1 - 1/\poly(k)$, using $(1 + \Theta(\sqrt{\log (k) / \ways}))$-resource augmentation, on any request sequence of length at most $\poly(k)$.\footnote{In this paper, $\poly(k)$ means \emph{any} polynomial in $k$, as long as we choose the hidden constants appropriately.} (See \Cref{sec:preliminaries} for the formal definitions of competitiveness and resource augmentation.)

In fact, this analysis follows from a more general theorem that establishes a tradeoff among all the parameters involved, namely the set size, the resource augmentation factor, the success probability and the length of the longest supported request sequence. Moreover, the theorem works for a class of paging algorithms that we call \defn{stable}. Roughly speaking, when a stable algorithm is run on a small cache and is given only part of the request sequence, its paging decisions approximate that of the same algorithm running on a large cache and given the whole request sequence. Besides LRU, this class contains least-frequently used (LFU), and a generalization of LRU, known as LRU-$K$~\cite{ONeil1993LRUK}. This paper introduces and studies the class of stable algorithms, which may be of independent interest.

\paragraph*{The limits of set-associative caches}
Notice that the aforementioned competitive analysis holds for all request sequences up to certain length (which is a function of the parameters involved), but not \emph{arbitrarily long} request sequences. Is our analysis to blame for this or is it a limitation of set-associative caches? More generally, can any of the parameters of the analysis be improved? That is, can we prove $1$-competitiveness (or even $O(1)$-competitiveness) on sub-logarithmic associativity, or on super-polynomial input sequences, or using less than $1 + \Theta(\sqrt{\log(k) / \ways})$ resource augmentation?

We show that the answer to all of these questions is essentially negative, via a more general tradeoff. We prove that given a set size $\ways$, a resource augmentation factor $r$, and a competitive ratio $c$, and provided $r$ is not extremely large, there exists an adversarial request sequence that is not too long (as a function of $\ways$, $r$ and $c$), on which $\ways$-way set-associative LRU is \emph{not} $c$-competitive with fully associative LRU using $r$-resource augmentation. Using this tradeoff we establish that our analysis is tight: 
\begin{itemize}
    \item \textit{Arbitrary length:} Set-associative LRU is \emph{not} competitive on arbitrarily long request sequences.
    \item \textit{Associativity:} $\ways$-way set-associative LRU is \emph{not} $c$-competitive with fully associative LRU using $r$-resource augmentation on request sequences of length as small as $O(k^{1.01})$, for any $\ways = o(\log k)$, $r = O(1)$ and $c = O(1)$.
    \item \textit{Resource augmentation:} $\ways$-way set-associative LRU is \emph{not} $c$-competitive with fully associative LRU using $r$-resource augmentation on request sequences of length as small as $O(k^{2.01})$, for any $\ways = \Omega(\log k)$, $r = 1 + o(\sqrt{\log(k)/\ways})$ and $c = O(1)$.
    \item \textit{Super-polynomial length:} $\ways$-way set-associative LRU is \emph{not} $c$-competitive with fully-associative LRU using $r$-resource augmentation on request sequences of length $\ell$, for any $\ways = \Omega(\log k)$, $r = 1 + \Theta(\sqrt{\log(k)/\ways})$, $c = \poly(k)$ and super-polynomial $\ell$.
\end{itemize}

\noindent
In fact, these negative results hold not only for LRU, but for all conservative paging algorithms, which includes FIFO, LFU and clock.

\paragraph*{Competitive analysis for arbitrarily long request sequences} Set-associative caches fail on long enough input sequences because it does not take too long for an adversary (even one that is oblivious to the random choices of the indexing function) to produce a malicious collection of items that are unevenly distributed into the sets. Such a collection will cause some set to become oversubscribed, thus forcing unnecessary cache misses due to the associativity constraints; these are called \defn{conflict misses}~\cite{Hill1989Associativity}. Repeatedly querying the items from one of these malicious collections drives the miss count up in the set-associative cache, but, importantly, not in a fully associative cache (as long as the collection is smaller than the size of the cache), so the performance gap can be arbitrarily large.

To circumvent this issue, we use \defn{rehashing}: We equip the set-associative cache with the ability to flush all the items and draw a new hash function. We show that rehashing every $\poly(k)$ \emph{cache misses} makes set-associative LRU almost $1$-competitive with fully associative LRU, on request sequences of arbitrary length. Specifically, for $\ways = \omega(\log k)$, $\ways$-way set-associative LRU with rehashing is $(1+1/\poly(k))$-competitive with fully associative LRU with probability $1 - 1/\poly(k)$, using $(1 + \Theta(\sqrt{\log(k)/\ways}))$-resource augmentation, on request sequences on arbitrary length. Furthermore, we show that the rehashing can be done gradually, attaining the same performance guarantees as a stop-the-world-type rehashing.

\subsection{Organization}

The paper is organized as follows. In \Cref{sec:related} we describe the related work. In \Cref{sec:preliminaries} we review basic facts about paging algorithms, and present the terminology and probability tools that we will use in the rest of the paper. In \Cref{sec:local} we present the competitive analysis of set-associative caches for long (but not arbitrarily long) request sequences. In \Cref{sec:lower-bounds} we establish the performance limits of set-associative caches. In \Cref{sec:long} we augment set-associative LRU with the ability to rehash, and analyze it on arbitrarily long request sequences. Finally, in \Cref{sec:stable-algorithms} we investigate stable algorithms in depth.

\section{Related work}
\label{sec:related}

The foremost algorithmic technique to analyze the paging cost is competitive analysis~\cite{Borodin1998Online}. There is a long line of work on competitive analysis of paging algorithms, which mostly focuses on fully associative algorithms~\cite{Sleator1985Paging, Fiat1991Marking, Boyar2007Paging, Dorrigiv2008OnlineAnalysis, Dorrigiv2009Paging, Young1994LooseCompetitiveness, Young2002FileCaching, AgrawalBeDa21, AgrawalBeDa20, DasAB20, DeLayoZA22, agrawal2022online}.

In the reduced-associativity setting (sometimes called \textit{restricted caching}), existing work focuses mostly on two types types of caches, namely \textit{two-level caches}~\cite{Peserico2003ArbitraryAssociativity} and \textit{companion caches}~\cite{Brehob2003RestrictedCaching, Mendel2004CompanionCaching, Buchbinder2014RestrictedCaching}, which are generalizations of set-associative caches. Two-level caches are, essentially, set-associative caches where items can be mapped into multiple sets, and copied between any of them for a fraction of the cost of a cache miss. This means that, occasionally, it may be convenient to make room in a set by moving one of its items to another set, instead of evicting it. Companion caches are a combination of a set-associative cache (the \textit{main} cache) with a, typically smaller, fully associative cache (the \textit{companion}).\footnote{In the hardware architecture literature, companion caches are known as \textit{victim caches}~\cite{Jouppi1990VictimCache}.} In every case, the aim is to design online algorithms that are competitive against an optimal offline algorithm for the specific cache organization at hand. The resulting algorithms are not necessarily practical---they are complex, and may not meet the extremely tight latency requirements of hardware caches.

Recent work has considered paging algorithms that are subject to the placement rules of set-associative caches, but not their eviction rules---pages can be evicted from any set, at any time. The idea is to have a set-associative cache mirror the paging decisions of a fully associative paging algorithm, which is simulated on the side. Every time the fully associative algorithm evicts a page, the set-associative cache evicts the same page, even if the the set the victim is occupying is underfull. Crucially, the set-associative cache is resource-augmented with respect to the simulation, so it is unlikely that a set ever becomes full. Using this technique, Bender et al.~\cite{Bender2021AddressTranslation} showed that every online fully associative algorithm can be made to work with the associativity restrictions of set-associative caches. The downside is that simulating a fully associative algorithm is computationally expensive. In a very different context, Frigo et al.~\cite[Lemma 16]{Frigo1999CacheOblivious} proved a similar simulation result specifically for LRU, albeit using larger resource augmentation.

As the associativity decreases, the performance gap between set- and full-associativity widens. In the extreme case of $1$-way set-associative caches, also known as \textit{direct-mapped caches}, Agrawal et al.~\cite{Agrawal2009WorstPaging} showed that their paging cost has the undesired property of being a linear factor away from that of the \emph{worst} offline fully associative algorithm, in expectation.

Before competitive analysis became the de facto standard for analyzing online algorithms, probabilistic models were the prominent tool. For the analysis of storage systems, the most popular one is the \textit{independent reference model} (IRM)~\cite{Coffman1973OS, King1971DemandPaging, Rao1978CachePerformance}, in which every item in the system is accessed independently with some fixed probability. Using the IRM, Smith~\cite{Smith1976SetAssociativity, Smith1978SetAssociativity} compared the probability of a cache miss in a fully associative LRU and in a set-associative LRU, assuming an access distribution called \textit{linear paging model}. He concluded that the two probabilities are asymptotically equal, as the size of the sets grows. Of course, because the IRM is a distributional model of the input, IRM analyses do not hold for adversarial workloads, and their conclusions strongly depend on the choice of the access distribution.

Other lines of research include analyzing the paging cost of set-associative caches for specific workloads~\cite{Sanders1999SetAssociativity, Fix2003SetAssociativityTrees, Harper1999NestedLoops}, and emulating external-memory algorithms~\cite{Aggarwal1988ExternalMemory} that were originally designed for fully associative caches in set-associative caches~\cite{Sen2002TheoryCacheEfficient}.
\section{Preliminaries}
\label{sec:preliminaries}

\paragraph{Paging algorithms}

A paging algorithm $\calA$ is parametrized by the size of the cache it operates on. Given a positive integer $k$, we denote $\calA_k$ the algorithm $\calA$ using a cache of size $k$. Let $\universe$ be the universe of items. Let $\sigma \in \universe^*$ be a sequence of item requests. The \defn{cost} of $\calA_k$ on $\sigma$, denoted $C(\calA_k, \sigma)$, is defined as the number of cache misses that $\calA_k$ incurs throughout $\sigma$.

We denote $\sigma_i$ the $i$-th element in $\sigma$. For any $x \in \universe$, we denote $\sigma x$ the sequence that results from appending $x$ to $\sigma$. For any $X \subseteq \universe$, we write $\sigma[X]$ the subsequence of $\sigma$ that results from removing every request that is not for an item in $X$. The notation $x \in \sigma$ means that item $x$ is present in the request sequence $\sigma$.

We denote $\calA_k(\sigma)$ the set of items cached by $\calA_k$ after serving the sequence $\sigma$. When the input sequence $\sigma$ is clear from the context, we denote $\calA_k(i)$ the set of items cached by $\calA_k$ right after $\sigma_i$ is served. Denote $\evict(\calA_k, \sigma, x) := \calA_k(\sigma) \setminus \calA_k(\sigma x)$ the set of items evicted by $\calA_k$ in response to an access to $x$ right after the sequence $\sigma$ has been served.

\paragraph{Competitive analysis}

We compare paging algorithms using competitive analysis~\cite{Sleator1985Paging}. A paging algorithm $\calA$ is \defn{$c$-competitive} with paging algorithm $\calB$ \defn{with probability $p$} using \defn{$r$-resource augmentation} on request sequences of \defn{length $\ell$} if, for every request sequence $\sigma$ with $|\sigma| \leq \ell$,
\[
\prob{C(\calA_k, \sigma) \leq c \cdot C(\calB_{k'}, \sigma) + O(1)} \geq p,
\]
where $k = rk'$. Here, the probability is taken over the randomness of the algorithms. In this paper, the randomness typically comes from the indexing hash function. We say that $\calA$ is \defn{not $c$-competitive} with $\calB$ using $r$-resource augmentation on request sequences of length $\ell$ if $\calA$ is \emph{not} $c$-competitive with $\calB$ with probability $1/2$ using $r$-resource augmentation on request sequences of length $\ell$. We will always assume that the input $\sigma$ is oblivious to the random bits used by the algorithms, that is, $\sigma$ is produced by an \defn{oblivious adversary}. All paging algorithms in this paper are \defn{online}, that is, each request arrives only after the previous one has been served by the algorithm.

\paragraph{Classes of paging algorithms}

Among the prominent deterministic paging algorithms are least-recently used (LRU), least-frequently used (LFU), first-in first-out (FIFO), and clock. A well-known generalization of LRU, called LRU-$K$~\cite{ONeil1999LRUKOptimality, ONeil1993LRUK, Boyar2010LRUK}, always evicts the item whose $K$-th most recent access is the least recent. In particular, regular LRU is LRU-$1$.\footnote{LRU-$2$ often outperforms LRU~\cite{ONeil1993LRUK}, because it is less sensitive to isolated accesses, as it only deems an item important when it has been recently accessed at least two times.} A paging algorithm is \defn{lazy}~\cite{Manasse1990Server} if it fetches an item from slow memory only when it causes a cache miss, evicts at most one item upon a cache miss, and only evicts an item when the cache is full. LRU-$K$, FIFO, LFU and clock all fall into this category. For comparison, flush-when-full~\cite{Karlin1986FWF} is a non-lazy algorithm that flushes the whole cache when an item is fetched but every slot is taken. A paging algorithm is \defn{conservative}~\cite{Young1992Thesis} if it incurs at most $k$ cache misses on any consecutive subsequence of the input that contains at most $k$ distinct items. LRU, FIFO, LFU and clock are all conservative; flush-when-full is \emph{not} conservative.

\paragraph{Concentration bounds}

We will use the following standard variant of the Chernoff bound (see, e.g., ~\cite{Devdatt2009Concentration}).

\begin{theorem}
\label{thm:chernoff}
Let $X_1, \dots, X_n$ be negatively associated $0$/$1$ random variables. Let $X := \sum_i X_i$. For every $\epsilon \in (0, 1)$,
\[
    \prob{X \geq (1 + \epsilon)\expect{X}}\leq \exp(-\epsilon^2 \expect{X} / 3),
\]
and
\[
\prob{X \leq (1 - \epsilon)\expect{X}} \leq \exp(-\epsilon^2 \expect{X} / 2).
\]
\end{theorem}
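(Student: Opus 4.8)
The plan is to follow the standard moment-generating-function route to multiplicative Chernoff bounds, supplying the one extra ingredient that negative association requires: a sub-multiplicative bound on the MGF of the sum. Recall that $X_1, \dots, X_n$ being negatively associated means that for every pair of disjoint index sets $I, J \subseteq \{1, \dots, n\}$ and every pair of coordinatewise non-decreasing functions $f$ and $g$, one has $\expect{f(X_I)\,g(X_J)} \le \expect{f(X_I)}\,\expect{g(X_J)}$, where $X_I$ denotes the subvector $(X_i)_{i \in I}$; replacing $f, g$ by $-f, -g$ yields the same inequality for non-increasing $f, g$. The first step is to bootstrap this two-bundle inequality into an $n$-fold product inequality: for any coordinatewise non-decreasing (respectively, non-increasing) $f_1, \dots, f_n$,
\[
\expect{\prod_{i=1}^{n} f_i(X_i)} \le \prod_{i=1}^{n} \expect{f_i(X_i)} .
\]
This follows by induction on $n$, peeling off $f_n(X_n)$ via the defining inequality with $I = \{1, \dots, n-1\}$ and $J = \{n\}$, using that a product of non-negative monotone (same-direction) functions is monotone in that direction; for $0$/$1$ variables all the $f_i$ we need are non-negative, so this is fine.

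Second, I would specialize $f_i(x) = e^{tx}$, which is non-decreasing for $t > 0$ and non-increasing for $t < 0$, so the displayed inequality gives $\expect{e^{tX}} \le \prod_i \expect{e^{tX_i}}$ for every real $t$. From here the argument is textbook. Writing $p_i := \prob{X_i = 1}$ and $\mu := \expect{X} = \sum_i p_i$, we have $\expect{e^{tX_i}} = 1 + p_i(e^t - 1) \le \exp\!\paren{p_i(e^t - 1)}$, hence $\expect{e^{tX}} \le \exp\!\paren{\mu(e^t - 1)}$. Applying Markov's inequality to $e^{tX}$ and optimizing over $t$: taking $t = \ln(1 + \eps) > 0$ gives $\prob{X \ge (1+\eps)\mu} \le \bigl(e^{\eps} / (1+\eps)^{1+\eps}\bigr)^{\mu}$, and taking $t = \ln(1 - \eps) < 0$ gives $\prob{X \le (1-\eps)\mu} \le \bigl(e^{-\eps} / (1-\eps)^{1-\eps}\bigr)^{\mu}$.

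Third and last, I would invoke the elementary inequalities $e^{\eps}/(1+\eps)^{1+\eps} \le e^{-\eps^2/3}$ and $e^{-\eps}/(1-\eps)^{1-\eps} \le e^{-\eps^2/2}$, both valid for $\eps \in (0,1)$ and both proved by taking logarithms and comparing against a quadratic (a one-line check that the log-difference is non-positive with zero value and derivative at $\eps = 0$). These turn the two tail bounds into $\exp(-\eps^2 \mu / 3)$ and $\exp(-\eps^2 \mu / 2)$, as claimed.

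The only step that is not purely mechanical is the first one, promoting the two-bundle definition of negative association to the $n$-fold product inequality for the exponential functions we need; but this is a short induction and presents no real obstacle, which is exactly why the statement can be cited rather than reproven. If one wished to skip even that, an equally valid route is to quote directly the known fact (see~\cite{Devdatt2009Concentration}) that negatively associated variables obey the same Chernoff--Hoeffding bounds as independent ones, of which this theorem is the special case of $0$/$1$ variables.
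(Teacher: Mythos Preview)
Your argument is correct and is the standard route to this result, but note that the paper does not actually prove this theorem: it is stated in the preliminaries as a known concentration bound with a citation to~\cite{Devdatt2009Concentration}, and no proof is given. So there is no ``paper's own proof'' to compare against; what you wrote is exactly the kind of derivation the cited reference contains, and your closing remark---that one can simply quote the fact that negatively associated variables satisfy the same Chernoff--Hoeffding bounds as independent ones---is precisely what the paper does.
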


We will also use the well-known fact that Chernoff bounds are tight.

\begin{theorem}[{\cite{Mousavi2012Chernoff}}]
\label{thm:reverse-chernoff}
Let $X_1, \dots, X_n$ be independent $0$/$1$ random variables with $\prob{X_i = 1} = p \leq 1/2$ for every $i$. Let $X := \sum_i X_i$. For every $\epsilon \in [0, 1/p - 2]$,
\[
\prob{X \geq (1+\epsilon)\expect{X}} \geq \frac{1}{4} \exp(-2 \epsilon^2 \expect{X}).
\]
\end{theorem}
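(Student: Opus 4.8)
The plan is to work directly with the binomial distribution, lower-bound its upper tail by the probability mass it places in a short window just above the target value, and estimate each point probability via Stirling's formula.

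Since the $X_i$ are i.i.d.\ Bernoulli$(p)$, $X\sim\mathrm{Bin}(n,p)$; write $\mu:=\expect{X}=np$. As $X$ is integer-valued, $\prob{X\ge(1+\epsilon)\mu}=\prob{X\ge k}$ for $k:=\ceiling{(1+\epsilon)\mu}$, and the hypothesis $\epsilon\le 1/p-2$ is exactly $(1+\epsilon)p\le 1-p$, so $k\le n$. (We assume throughout that $\mu$ is at least a suitable absolute constant; this holds wherever the lemma is applied, and the sub-constant regime is elementary.) Put $L:=\ceiling{\sqrt{k(1-k/n)}}$, of the order of the standard deviation of $X$ near level $k$. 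Since the binomial pmf is nonincreasing past its mode and $k\ge\mu$,
\[
\prob{X\ge k}\ \ge\ \sum_{j=k}^{k+L}\prob{X=j}\ \ge\ (L+1)\,\prob{X=k+L}.
\]

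To estimate the point probability, Stirling's bounds on $\binom{n}{m}$ give, for an absolute $c_0>0$ and $1\le m\le n-1$,
\[
\prob{X=m}=\binom{n}{m}p^{m}(1-p)^{n-m}\ \ge\ \frac{c_0}{\sqrt{m(1-m/n)}}\,\exp\!\paren{-n\,D(m/n\,\|\,p)},
\]
where $D(a\,\|\,b):=a\ln(a/b)+(1-a)\ln\tfrac{1-a}{1-b}$. Apply this with $m=k+L$. Because the window $[k,k+L]$ is short relative to both $k$ and $n-k$ (away from the trivial small-$\mu$ cases), $L+1=\Theta\!\paren{\sqrt{(k+L)(1-(k+L)/n)}}$, so the prefactor cancels up to a constant and
\[
\prob{X\ge k}\ \ge\ c_1\,\exp\!\paren{-n\,D\paren{(k+L)/n\,\|\,p}}
\]
for an absolute $c_1>0$. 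It then remains to show $n\,D\paren{(k+L)/n\,\|\,p}\le 2\epsilon^2\mu+\ln(4c_1)$. Here I would use that $q\mapsto D(q\|p)$ vanishes to second order at $q=p$ with $D''(q)=1/(q(1-q))\le 1/(p(1-p))$ on $[p,1-p]$, so that $D(q\|p)\le (q-p)^2/\paren{2p(1-p)}$ there. With $q=(k+L)/n$ one has $q-p\le\epsilon p+(L+1)/n$, and since $L^2\le k(1-k/n)+O(1)\le(1+\epsilon)\mu$ and $p\le\tfrac12$, this gives $n\,D\paren{(k+L)/n\,\|\,p}\le \epsilon^2\mu+O(\epsilon)+O(1)$, hence $\prob{X\ge k}\ge c_1\exp\!\paren{-\epsilon^2\mu-O(\epsilon)-O(1)}$. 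Now the exponent gap $2\epsilon^2\mu-\paren{\epsilon^2\mu+O(\epsilon)+O(1)}=(\epsilon\sqrt\mu)^2-O(\epsilon)-O(1)$ exceeds $\ln\!\paren{1/(4c_1)}$ once $\epsilon\sqrt\mu$ is larger than an absolute constant $T_0$, proving the claim in that regime; and when $\epsilon\sqrt\mu\le T_0$, the right-hand side $\tfrac14\exp(-2\epsilon^2\mu)\ge\tfrac14\exp(-2T_0^2)$ is an absolute constant while $k=\mu+O(\sqrt\mu)$, so $\prob{X\ge k}$ is bounded below by an absolute constant by anti-concentration of the binomial around its mean. (The constant $2$ in the exponent is generous precisely because the true exponent $n\,D$ is $\approx\tfrac12\epsilon^2\mu$, leaving room to absorb the polynomial-in-$\mu$ losses.)

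I expect the main obstacle to be bookkeeping the absolute constants uniformly over all admissible $(p,\epsilon)$: near $p=\tfrac12$, where $\epsilon\to 0$ and $q=(1+\epsilon)p\to 1-p$ sits at the edge of the quadratic entropy bound; near $\epsilon\to 1/p-2$ with $p$ small, where $k$ approaches $n$ and one must verify $[k,k+L]\subseteq\{0,\dots,n\}$; and in choosing $T_0$ so that the two regimes above agree. Each step is elementary, but this is where the constants $\tfrac14$ and $2$ get pinned down.
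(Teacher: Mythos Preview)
The paper does not prove this theorem; it is quoted as a known tightness result for Chernoff bounds and attributed to \cite{Mousavi2012Chernoff} without any argument. So there is no ``paper's own proof'' to compare your proposal against.

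As for the proposal itself: the overall architecture is the standard one (restrict to a short window above the target, apply Stirling to get a KL-divergence exponent, then bound $D(q\|p)$ quadratically), and it is essentially how such reverse bounds are proved. That said, what you have written is a plan rather than a proof. You explicitly set aside the sub-constant-$\mu$ regime, the matching of constants at the seam between the two cases, and the edge cases $p\to 1/2$ and $\epsilon\to 1/p-2$; these are exactly where the specific constants $\tfrac14$ and $2$ in the statement are determined, and the theorem as stated has no side conditions on $\mu$. If you intend to actually establish the inequality with those exact constants for all admissible $(n,p,\epsilon)$, you will need to carry out that bookkeeping in full (or simply cite Mousavi, as the paper does).
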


\section{Set-Associative Caches Are Competitive on Long Sequences}
\label{sec:local}

We model set-associative and fully associative caches as paging algorithms. For any paging algorithm $\calA$ and any $\ways = \ways(k)$ that divides $k$, we define \defn{$\ways$-way set-associative $\calA$} as the following paging algorithm.

\begin{tbox}
\textbf{Algorithm:} $\ways$-way set-associative $\calA$ with cache size $k$
\begin{itemize}
    \item Pick a fully random hash function $h : \universe \to [k/\ways]$.
    \item Divide the cache into $k / \ways$ sets of size $\ways$ slots each, and run a separate instance of $\calA_{\ways}$ on each set.
    \item When an item $x$ is requested, execute the request on set $h(x)$.
\end{itemize}
\end{tbox}

We denote by $\local{\calA}$ the $\ways$-way set-associative $\calA$ algorithm. We omit $\ways$ in the notation, as only a single $\ways$ will be in consideration at all times. Because the case $\ways = k$ represents a fully associative cache, we always assume $\ways < k$.  
To avoid confusion between sets in a cache and the mathematical meaning of a set, we will refer to the sets in a set-associative cache as \defn{buckets}.

The goal of this section is to show that the number of cache misses that $\local{\calA}$ incurs is at most approximately the same as that of $\calA$, for any lazy algorithm $\calA$ that belongs to a class that we call \defn{stable algorithms}, which we will define shortly (in \Cref{subsec:stable}). In particular, this class contains the LFU and LRU-$K$ algorithms. At the end of this section we will have proven the following theorem.

\begin{theorem}
\label{thm:monotone-local}
Let $\calA$ be a lazy stable paging algorithm. Let $\delta = \delta(k)$ be such that $\delta \geq \sqrt{12 \ln(k / \ways)/\ways}$ and $\delta \leq 1/2$ for all $k$ large enough.
Then, $\ways$-way set-associative $\calA$ is $1$-competitive  with $\calA$ w.p. $1 - \exp(-\delta^2 \ways/24)$ using $(1 - \delta)^{-1}$-resource augmentation on request sequences of length $O(\exp(\delta^2 \ways / 24))$.
\end{theorem}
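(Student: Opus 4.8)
The plan is to reduce the behavior of $\local{\calA}$ to that of $\calA$ bucket-by-bucket. The key structural fact we need is that each bucket runs $\calA_\ways$ on the subsequence $\sigma[h^{-1}(b)]$ of requests hashed to that bucket. So $C(\local{\calA}, \sigma) = \sum_{b \in [k/\ways]} C(\calA_\ways, \sigma[h^{-1}(b)])$, whereas on the full-associative side we have $C(\calA_{k'}, \sigma)$ with $k' = (1-\delta)k$. The natural comparison is to the ``ideal'' bucketed version: if we gave each bucket a cache of size $\ways_{\mathrm{big}} := k'/(k/\ways) = (1-\delta)\ways$, then $\sum_b C(\calA_{(1-\delta)\ways}, \sigma[h^{-1}(b)])$ should be comparable to $C(\calA_{k'}, \sigma)$, because $\calA$ does no worse when its requests are partitioned across independent caches of the appropriate total size --- this is exactly where laziness (and ideally some subadditivity property of $\calA$ under splitting the request stream, which for LRU/LFU-type algorithms holds) comes in; I would either cite or prove a short lemma that $\calA$ on a cache of size $m$, when fed only a subsequence, incurs at most as many misses as... — more precisely, that $\sum_b C(\calA_{m_b}, \sigma[h^{-1}(b)]) \le C(\calA_{k'},\sigma)$ whenever $\sum_b m_b \le k'$. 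The remaining gap is then purely combinatorial: we must go from a cache of size $(1-\delta)\ways$ per bucket down to the actual size $\ways$ per bucket, and show this costs nothing (additively $O(1)$) with high probability.

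This is where \textbf{stability} enters and where the load-balancing bound does the heavy lifting. The point of a stable algorithm is that running $\calA$ on a cache of size $\ways$ but on only part of the request sequence approximates running $\calA$ on a larger cache on the whole sequence --- so if no bucket ever receives ``too many'' distinct relevant items relative to its size $\ways$, then the size-$\ways$ bucket behaves like a size-$(1-\delta)\ways$ bucket on the same subsequence. Concretely, I would define, for each bucket, the relevant quantity to be the number of distinct items mapped to it that $\calA_{(1-\delta)\ways}$ would keep resident (or some window-based count appropriate to the definition of stable in \Cref{subsec:stable}), and argue: since $h$ is fully random, each such item lands in a given bucket with probability $\ways/k$, and the expected count in a bucket is $(1-\delta)\ways$; by the Chernoff bound (\Cref{thm:chernoff}) applied with deviation parameter $\delta$, the probability that a fixed bucket exceeds $\ways$ (i.e., exceeds $(1+\frac{\delta}{1-\delta}) \cdot (1-\delta)\ways$, and $\frac{\delta}{1-\delta} \ge \delta$) is at most $\exp(-\delta^2 (1-\delta)\ways / 3) \le \exp(-\delta^2\ways/12)$, say. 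Union-bounding over the $k/\ways$ buckets and over the $O(|\sigma|)$ time steps at which the resident set changes, the failure probability is at most $|\sigma| \cdot (k/\ways) \cdot \exp(-\delta^2\ways/12)$. Since $\delta^2 \ways \ge 12\ln(k/\ways)$, the factor $(k/\ways)$ is absorbed, leaving $|\sigma|\cdot\exp(-\delta^2\ways/12)$ roughly; matching this against the claimed bounds (success probability $1-\exp(-\delta^2\ways/24)$, sequence length $O(\exp(\delta^2\ways/24))$) requires only that we be a bit generous with constants — split the exponent, using half of it to kill the union-bound factors and the other half to bound $|\sigma|$.

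The main obstacle I anticipate is not the probabilistic calculation but the \textbf{definitional interface with ``stable''}: I need the precise statement (from \Cref{subsec:stable}) that lets me say ``if at every point the number of distinct recently-requested items in a bucket never forces $\calA_\ways$ to evict something that $\calA_{(1-\delta)\ways}$ would have kept, then the bucket incurs no extra misses.'' For LRU this is clean (it reduces to: on any window with at most $\ways$ distinct items, LRU keeps them all, so any resident item of the larger cache that is within the last $\ways$ distinct accesses is also resident in the smaller cache); for LFU and LRU-$K$ it is subtler and presumably is exactly what the stable-algorithm abstraction packages. So the real work is: (i) pin down the ``good event'' $G$ as a purely combinatorial condition on $h$ and $\sigma$ under which the per-bucket comparison holds deterministically, using the stability property; (ii) show $\prob{G}$ is at least the claimed bound via Chernoff plus a union bound; and (iii) on $G$, chain $C(\local{\calA},\sigma) \le \sum_b C(\calA_{(1-\delta)\ways}, \sigma[h^{-1}(b)]) \le C(\calA_{(1-\delta)k}, \sigma) + O(1)$. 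Step (i) is the crux; steps (ii) and (iii) are routine given it.
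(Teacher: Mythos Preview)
Your decomposition has a genuine gap at step (iii). The inequality you need,
\[
\sum_{b} C\bigl(\calA_{(1-\delta)\ways},\, \sigma[h^{-1}(b)]\bigr) \le C\bigl(\calA_{(1-\delta)k},\, \sigma\bigr),
\]
is false even for LRU. Partitioning a fully associative cache of size $k'$ into $k/\ways$ caches of size $k'/(k/\ways)$ and routing each request to its bucket can be arbitrarily worse than the unpartitioned cache: with two items $a,b$ hashing to the same size-$1$ bucket and a third item $c$ hashing elsewhere, the sequence $a,c,b,a,b,a,b,\dots$ makes that bucket miss on every access, while $\LRU_2$ stabilizes at $\{a,b\}$ after four misses. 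There is no subadditivity of paging cost under splitting, so the middle term in your chain cannot be controlled. (Relatedly, ``down to the actual size $\ways$'' is backwards: $\ways > (1-\delta)\ways$, so that step is trivial by the stack property and is not where stability is needed.)

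The paper avoids decomposing $\calA_{k'}$ altogether. It compares $\local{\calA}_k$ and $\calA_{k'}$ directly via a counting argument: call an eviction from $\local{\calA}_k$ \emph{bad} if the victim is still present in $\calA_{k'}$ at that moment, and prove (\Cref{lem:unstale-evictions}) that $C(\local{\calA}_k,\sigma) \le C(\calA_{k'},\sigma) + B$, where $B$ is the number of bad evictions. Stability is used in exactly one place: if a bucket evicts some $x \in \calA_{k'}(i)$, the stable property forces \emph{every} item currently in that bucket to lie in $\calA_{k'}(i)$. Hence a bad eviction at time $i$ implies that the \emph{deterministic} set $S_i := \calA_{k'}(i)$, of size at most $k'$ and oblivious to $h$, contains $\ways+1$ items hashing to a common bucket. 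Now the balls-and-bins bound applies cleanly with independent placements, giving $\Pr[B_i=1] \le \exp(-\delta^2\ways/12)$; linearity of expectation plus Markov's inequality then bounds $B$ by $O(1)$ on sequences of the stated length. The idea you are missing is that stability converts the bad-eviction event into an overflow event for a set that does not depend on $h$; your proposed ``relevant quantity'' (items that $\calA_{(1-\delta)\ways}$ would keep resident in the bucket) depends on the bucket's own subsequence and hence on $h$, so Chernoff would not apply to it.
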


\noindent
Roughly speaking, \Cref{thm:monotone-local} says that the paging cost of $\local{\calA}_k$ is approximately the same as that of $\calA_{k'}$, where $k' = (1 - \delta)k$, as long as $\delta$ is not too small, and as long as the request sequence is not too long. The tunable parameter $\delta$ is the capacity gap between the set-associative cache and its fully associative counterpart, and it controls the tradeoff between the resource augmentation factor, the success probability and the length of the request sequence. On the one hand, to minimize the resource augmentation factor $(1 - \delta)^{-1}$, we want $\delta$ to be as small as possible. On the other hand, to maximize the success probability and the sequence length, which are both increasing functions of $\delta$, we want $\delta$ to be as large as possible. Perhaps the most interesting region in this tradeoff curve is in the lower end of $\delta$, which asymptotically minimizes the resource augmentation while still providing polynomial success probability and support for polynomial sequences, as the following result shows.

\begin{proposition}
\label{prp:super-logarithmic}
Let $\calA$ be a lazy stable paging algorithm. Suppose $\ways = \omega(\log k)$. Then, $\ways$-way set-associative $\calA$ is $1$-competitive with $\calA$ w.h.p.\footnote{\textit{With high probability} (w.h.p.) means with probability $1 - 1/\poly(k)$.} using $(1 + \Theta(\sqrt{\log(k) / \ways}))$-resource augmentation on request sequences of length $\poly(k)$.
\end{proposition}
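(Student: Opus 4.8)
The plan is to derive this as a direct corollary of \Cref{thm:monotone-local} by choosing the capacity gap parameter $\delta$ appropriately. Since $\ways = \omega(\log k)$, we have $\ln(k/\ways)/\ways = o(1)$, so the quantity $\sqrt{12\ln(k/\ways)/\ways}$ tends to $0$. I would set $\delta = \delta(k) := \max\{\sqrt{12\ln(k/\ways)/\ways}, \; c\sqrt{\log(k)/\ways}\}$ for a suitable constant $c$ to be fixed below; equivalently, one can just take $\delta = C\sqrt{\log(k)/\ways}$ for a large enough absolute constant $C$, since $\ln(k/\ways) \le \ln k$ and hence $\sqrt{12\ln(k/\ways)/\ways} \le \sqrt{12}\,\sqrt{\log(k)/\ways}$ (up to the usual $\ln$ vs.\ $\log$ constant). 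With this choice, $\delta = \Theta(\sqrt{\log(k)/\ways})$, and because $\ways = \omega(\log k)$ we have $\delta = o(1)$, so in particular $\delta \le 1/2$ for all $k$ large enough; thus the hypotheses of \Cref{thm:monotone-local} are met.

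Next I would check that the conclusion of \Cref{thm:monotone-local} with this $\delta$ gives exactly what is claimed. The resource augmentation factor is $(1-\delta)^{-1} = 1 + \delta + O(\delta^2) = 1 + \Theta(\sqrt{\log(k)/\ways})$, as desired. For the success probability: the failure probability is $\exp(-\delta^2 \ways/24)$, and with $\delta^2 = C^2 \log(k)/\ways$ this equals $\exp(-C^2 \log(k)/24) = k^{-C^2/(24\ln 2)}$ (if $\log$ denotes $\log_2$; adjust the constant for natural log). Choosing $C$ large enough makes this $1/\poly(k)$ with an arbitrarily large polynomial degree, so the bound holds w.h.p. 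Finally, the supported sequence length is $O(\exp(\delta^2\ways/24)) = O(k^{C^2/(24\ln 2)})$, which is $\poly(k)$; again, by taking $C$ large we can make this polynomial degree as large as we wish, so the statement "on request sequences of length $\poly(k)$" holds in the sense of the paper's footnote (any fixed polynomial, with the hidden constant $C$ chosen accordingly).

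There is essentially no hard step here — the proposition is a clean instantiation of the general tradeoff theorem. The only point requiring a little care is the order of quantifiers hidden in the phrase "$\poly(k)$": the same constant $C$ must simultaneously control the polynomial degree in the success probability \emph{and} the polynomial degree of the supported sequence length, and both of these are governed by $\delta^2\ways = \Theta(\log k)$ with the same hidden constant. So one fixes the target polynomial degree $d$ for the sequence length first, sets $C$ accordingly (which forces $\delta$), and then observes that the failure probability $k^{-\Theta(C^2)}$ is also $1/\poly(k)$ — indeed of degree growing with $C$, hence certainly at least $d$ if desired. I would state this explicitly to avoid any ambiguity, and otherwise the proof is a two-line substitution.
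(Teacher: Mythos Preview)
Your proposal is correct and follows essentially the same approach as the paper: set $\delta = \Theta(\sqrt{\log(k)/\ways})$ (the paper takes $\delta = \sqrt{24c\ln(k)/\ways}$ for an arbitrary constant $c$), verify the hypotheses of \Cref{thm:monotone-local}, and read off the success probability $1 - k^{-c}$, sequence length $O(k^c)$, and resource augmentation $(1-\delta)^{-1} = 1 + \Theta(\delta)$. Your extra discussion of the quantifier issue is more explicit than the paper's, but the substance is identical.
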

\begin{proof}
For an arbitrary constant $c > 0$, let $\delta := \sqrt{24 c \ln (k) / \ways}$. Because $\ways = \omega(\log k)$, $\delta$ is in the appropriate range for all $k$ large enough. This choice yields success probability $1 - k^{-c}$ and sequence length $O(k^{c})$. The resource augmentation factor satisfies
\[
1 + \delta \leq (1 - \delta)^{-1} \leq 1 + 2\delta,
\]
where the second inequality is because $\delta \leq 1/2$. Thus, $(1 - \delta)^{-1} = 1 + \Theta(\delta) = 1 + \Theta(\sqrt{\log(k)/\ways})$.
\end{proof}

\subsection{Proof Outline}

Before diving into the details of \Cref{thm:monotone-local}, we give an outline of its proof for the particular case of $\calA = \LRU$. Call an item \textit{stale} if it is older (with respect to the LRU order) than every item in the fully associative cache. Intuitively, evicting items that are \emph{not} stale is bad, because an access to them would cause a hit in the fully associative cache but a miss in the set-associative cache. We call an eviction of a non-stale item a \emph{bad} eviction. The proof is divided into two steps: First, we prove that if few evictions are bad, then the cost of the set-associative cache is at most approximately the same as that of the fully associative cache. Second, we prove that the probability that an eviction is bad is low; this implies that the total number of bad evictions is small.

We upper bound the probability that an eviction is bad using the following chain of observations: If an item $x$ is evicted and $x$ is not stale, then none of the pages in $x$'s bucket are stale either. This implies that $x$ as well as all other items in $x$'s bucket are present in the fully associative cache. Therefore, the set of items in the fully associative cache have the undesirable property that, when hashed into the set-associative cache, they cause some bucket to become overfull. Crucially, this set of items is deterministic, so the property is provably unlikely by a balls-and-bins-type analysis of the bucket loads.

\subsection{Stable Paging Algorithms}
\label{subsec:stable}

Notice that in the proof outline we have used the following property of LRU: if a bucket in the set-associative cache evicted an item $x$ that is present in the fully associative cache, then all items in $x$'s bucket are also present in the fully associative cache. Stable algorithms formalize this property.

We say that a paging algorithm $\calA$ is \defn{stable} if, for every $\tau \in \universe^*$, $X \subseteq \universe$, $z \in X$, and positive integers $a > b$,
\begin{equation}
\label{eq:stable}
\text{$\evict(\calA_b, \tau[X], z) \cap \calA_{a}(\tau z) \neq \emptyset$ implies $\calA_{b}(\tau[X] z) \subseteq \calA_{a}(\tau z)$.}
\end{equation}
\noindent
In words, if an instance of $\calA$ is given a request sequence $\tau z$, a smaller instance is given a subsequence $\tau[X] z$, and if in response to the access to the final item $z$ the smaller cache evicts an item that is also present in the larger cache, then the smaller cache is fully contained in the larger cache.

We can also think of stability as a seamless mechanism that corrects the differences between any two caches managed by instances of $\calA$.
Roughly speaking, the contrapositive of \Cref{eq:stable} says that $\calA_b$ always evicts items that are not present in $\calA_a$ first.
Thus, intuitively, stable algorithms are effective at managing the buckets of a set-associative cache because they try to approximate the state of a larger fully associative cache.

We defer a thorough discussion of stable algorithms to \Cref{sec:stable-algorithms}, in which we investigate the scope of this class of algorithms, and provide insight into the kind of features that distinguish paging algorithms that are stable from those that are not. The following fact is a corollary of \Cref{sec:stable-algorithms}.
\begin{lemma}
\label{lem:monotone-algorithms}
LRU-$K$ and LFU are stable. FIFO and clock are \emph{not} stable.
\end{lemma}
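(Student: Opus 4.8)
The plan is to handle the two halves of the statement separately. For ``LRU-$K$ and LFU are stable'' I would verify the defining condition \cref{eq:stable} directly, using what one might call a \emph{consistent priority order}; for ``FIFO and clock are not stable'' I would exhibit explicit short counterexamples. A consistent priority order for $\calA$ is an assignment, to each item $u$ and request history $\sigma$, of a value $\mathrm{key}_\sigma(u)$ in some totally ordered set, such that: (P1) upon a miss, $\calA$ evicts a cached item whose key is minimum among the cached items other than the one just requested; and (P2) for items $u,v\in X$, the relative order of $\mathrm{key}_{\tau z}(u)$ and $\mathrm{key}_{\tau z}(v)$ equals that of $\mathrm{key}_{\tau[X]z}(u)$ and $\mathrm{key}_{\tau[X]z}(v)$. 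For LRU-$K$ I would take $\mathrm{key}_\sigma(u)=(t,\ell)$ ordered lexicographically, where $t$ is the position in $\sigma$ of the $K$-th most recent occurrence of $u$ (set to $-\infty$ if $u$ occurs fewer than $K$ times) and $\ell$ is the position of $u$'s last occurrence; evicting the minimum-key item is exactly the LRU-$K$ rule. For LFU I would take $\mathrm{key}_\sigma(u)=(n,\ell)$ with $n$ the number of occurrences of $u$ in $\sigma$ and $\ell$ as before. Property (P2) is immediate: deleting from $\tau z$ the requests to items outside $X$ removes no occurrence of any $u\in X$, so it changes neither $u$'s count nor which occurrence of $u$ is its $K$-th most recent, and it preserves the relative order of the positions that survive.

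Granting (P1)--(P2), I would derive \cref{eq:stable} thus. Suppose $\evict(\calA_b,\tau[X],z)=\{y\}$ is nonempty and $y\in\calA_a(\tau z)$; then $\calA_b$'s cache was full, $z$ caused a miss, and $y\neq z$. Fix $w\in\calA_b(\tau[X]z)$; we want $w\in\calA_a(\tau z)$. If $w=z$ this is immediate; otherwise $w\in\calA_b(\tau[X])$ survived the eviction that removed $y$, so $\mathrm{key}_{\tau[X]z}(y)\le\mathrm{key}_{\tau[X]z}(w)$, and since $y,w\in X$ (both were cached by $\calA_b$ after $\tau[X]$, hence requested in $\tau[X]$), (P2) upgrades this to $\mathrm{key}_{\tau z}(y)\le\mathrm{key}_{\tau z}(w)$. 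It therefore suffices to prove the following invariant, for $\calA_a$ run on \emph{any} sequence: at every step, if $p$ is a cached item other than the one just requested and $q$ is an already-requested item with $\mathrm{key}(q)\ge\mathrm{key}(p)$, then $q$ is cached too (then apply it with $p=y$, $q=w$). I would prove the invariant by induction on the processed prefix; the one step needing care is a miss at which $\calA_a$ evicts, where (P1) says the evicted item has minimum key, so that every surviving item other than the just-requested one has at least that key. The genuinely delicate sub-case, which I expect to be the main obstacle, is that the item just fetched may have been evicted earlier and now re-enters ``young'': here one uses that re-accessing an item can only \emph{raise} its key (for LRU-$K$ the $K$-th most recent occurrence moves forward; for LFU the count goes up), and one must also be careful that the invariant is stated to exclude just-requested pages, since a just-requested page can sit in the cache with a very stale $K$-th most recent occurrence, so that $\calA_a$'s cache is \emph{not} globally upward closed in the key order.

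For the negative half, the conceptual point is that FIFO and clock evict according to a global insertion queue, in which an item's position depends on how many \emph{other} items were fetched after it---not on any per-item statistic---so there is no analogue of (P2) and deleting the out-of-$X$ requests can reverse the eviction order among items of $X$. Concretely, for FIFO take $a=3$, $b=2$, let $X$ contain $w,y,z$ but none of the fillers below, and set $\tau=w,f_1,f_2,y,w,f_3$ with $f_1,f_2,f_3\notin X$, so $\tau[X]=w,y,w$. On $\tau[X]z=w,y,w,z$ the size-$2$ FIFO keeps $w$ the oldest item (its second request is a hit) and evicts $w$ on $z$, ending with cache $\{y,z\}$; on $\tau z$ the size-$3$ FIFO is forced by the fillers to evict $w$ and re-fetch it in the middle, so $w$ ends up younger than $y$ and a short trace gives final cache $\{w,f_3,z\}$. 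Then the item $w$ evicted by $\calA_b$ lies in $\calA_a(\tau z)$, so the hypothesis of \cref{eq:stable} holds, but its conclusion fails since $\{y,z\}\not\subseteq\{w,f_3,z\}$. For clock the same idea works, with the extra precaution that the repeated item must also be re-fetched (never merely hit while resident) in the smaller cache, so that its reference bit is never set and clock's second-chance sweep reduces to plain FIFO on the relevant items; this is achieved by spreading apart the two occurrences of the repeated item with a couple more fillers and enlarging both caches to match.
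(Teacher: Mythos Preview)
Your overall architecture for the positive half---a priority key that is preserved under passing from $\tau z$ to $\tau[X]z$---is exactly the paper's ``monotone, self-similar order family'' idea. But the invariant you propose is too strong and is \emph{not} maintained by the induction. Concretely, take $\calA=\mathrm{LRU}\text{-}2$ with cache size $3$ and run it on $a,a,b,b,c,c,d,b$. After the final (hit) request to $b$ the cache is $\{b,c,d\}$, with keys $a=(1,2)$, $b=(4,8)$, $c=(5,6)$, $d=(-\infty,7)$. Now $p=d$ is cached and is not the item just requested, yet $q=a$ has $\mathrm{key}(a)=(1,2)>(-\infty,7)=\mathrm{key}(d)$ and $a$ is \emph{not} cached. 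The point is that after a miss the ``exception'' slot is occupied by the just-fetched item, but a subsequent \emph{hit} on some other item leaves that low-key exception in place while changing who is ``just requested''; your inductive step for hits (which you dismiss as easy) is exactly where the invariant breaks. Since in the application $z$ may well be a hit in $\calA_a$ (it is only guaranteed to be a miss in $\calA_b$), you cannot sidestep this by restricting to post-miss states.

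The paper repairs this with a weaker invariant that \emph{is} inductive: the $k-1$ highest-key already-requested items are always cached (equivalently, an uncached item can dominate at most one cached item). That alone does not finish the argument---you need one more comparison, namely $\mathrm{key}_{\tau}(w)\ge\mathrm{key}_{\tau}(z)$, obtained by applying the weak invariant to the \emph{small} cache at time $\tau[X]$ (since $z$ is uncached there and both $y$ and $w$ are cached). With both $\mathrm{key}_{\tau z}(w)\ge\mathrm{key}_{\tau z}(y)$ and $\mathrm{key}_{\tau}(w)\ge\mathrm{key}_{\tau}(z)$ in hand, one assumes $w\notin\calA_a(\tau z)$, backs up to $\calA_a(\tau)$, and uses the weak invariant plus monotonicity to force $y$ to be the unique maximum-eviction-priority item in $\calA_a(\tau)$, contradicting $y\in\calA_a(\tau z)$. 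Your (P2) is the paper's self-similarity and your ``re-accessing only raises the key'' is its monotonicity, so you have all the ingredients; what is missing is the correct invariant and the extra comparison with $z$.

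For the negative half your direct counterexample for FIFO is correct (I checked the trace). This is a different route from the paper, which instead proves that every lazy stable algorithm is a stack algorithm and then invokes the classical fact that FIFO and clock exhibit Belady's anomaly and hence are not stack. The paper's route is cleaner and gives a reusable necessary condition, but your explicit witnesses are perfectly valid and arguably more self-contained; just make sure the clock example is actually written out rather than left as ``the same idea works''.
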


\subsection{Auxiliary Results}

Let $\calX$ and $\calY$ be any two paging algorithms, with arbitrary cache sizes.
We say that an item $x$ is a \defn{bad eviction} of $\calX$ with respect to $\calY$ at time $i$ (we omit $\calX$, $\calY$ or $i$ when they are clear from the context) if $x$ is evicted by $\calX$ due to access $\sigma_i$ and $x \in \calY(i)$, that is, $x \in \evict(\calX, \sigma_1 \dots \sigma_{i - 1}, \sigma_i) \cap \calY(i)$.

\begin{lemma}
\label{lem:unstale-evictions}
Let $\calX$ and $\calY$ be any two paging algorithms with arbitrary cache sizes, such that they only fetch a page when it is requested. Then,
\[
C(\calX, \sigma) \leq C(\calY, \sigma) + B,
\]
where $B$ is the number of bad evictions of $\calX$ with respect to $\calY$ throughout $\sigma$.
\end{lemma}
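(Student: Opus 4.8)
The plan is to bound the cost of $\calX$ by a charging/potential argument that compares the cache contents of $\calX$ and $\calY$ over time. The key quantity to track is $|\calX(i) \setminus \calY(i)|$, the number of items that $\calX$ holds but $\calY$ does not. Intuitively, every cache miss of $\calX$ that is \emph{not} also a miss of $\calY$ must be ``paid for'' somehow, and the only way $\calX$ can get into the situation of missing on an item that $\calY$ has cached is if, at some earlier point, $\calX$ evicted an item that $\calY$ was (or would be) holding — that is, a bad eviction. So I would set up the potential $\Phi_i := |\calX(i) \setminus \calY(i)|$ and argue step by step through the request sequence how $\Phi$ changes and how it relates to the cost difference.

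Concretely, I would fix a request $\sigma_i = x$ and split into cases. Case 1: $\calX$ hits on $x$. Then $\calX$ does not fetch or evict anything, so its cost does not increase; $\Phi$ can only decrease or stay the same if $\calY$ fetches $x$ (moving $x$ into $\calY(i)$), and can only stay the same otherwise — in any event $\Phi_i \le \Phi_{i-1}$, and no bad eviction occurs, so the inequality ``$\text{cost so far} \le \text{cost of }\calY\text{ so far} + \#\text{bad evictions} + \Phi$'' is maintained. Case 2: $\calX$ misses on $x$, so $\calX$'s cost goes up by $1$. Here I further split on whether $\calY$ also misses on $x$. Sub-case 2a: $\calY$ also misses, so $\calY$'s cost goes up by $1$ too, offsetting $\calX$'s increase; I just need to check $\Phi$ does not increase in a way that breaks the invariant, which follows because $x$ enters both caches (or at least enters $\calX$ while $\calY$ also fetches it), and any eviction $\calX$ performs is of an item that, if it were in $\calY(i)$, counts as a bad eviction and is booked against $B$. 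Sub-case 2b: $\calY$ hits on $x$ — this is the ``expensive'' case, where $\calX$ pays $1$ but $\calY$ pays $0$. Since $\calY$ hits, $x \in \calY(i-1) = \calY(i)$, but $x \notin \calX(i-1)$ (because $\calX$ missed), so $x$ contributed to $\Phi_{i-1}$... wait, it contributed to $\calY(i-1)\setminus\calX(i-1)$, not to $\Phi$; the right bookkeeping is that after $\calX$ fetches $x$, $x$ is now in both, and the net effect on the cost gap must be absorbed either by a decrease in a companion potential or charged to a bad eviction. I would likely find it cleaner to use the single invariant $C(\calX, \sigma_1\cdots\sigma_i) \le C(\calY, \sigma_1\cdots\sigma_i) + B_i + |\calX(i)\setminus\calY(i)|$ where $B_i$ is the number of bad evictions up through step $i$, verify it holds at $i=0$ (all terms zero), and verify it is preserved in each case above; then the lemma follows at $i = |\sigma|$ since $|\calX(|\sigma|)\setminus\calY(|\sigma|)| \ge 0$, giving $C(\calX,\sigma) \le C(\calY,\sigma) + B$.

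The step I expect to be the main obstacle — and the one deserving the most care — is the case analysis in sub-case 2b and, more generally, correctly accounting for evictions performed by $\calX$ on a miss. When $\calX$ misses on $x$ and its cache is full, it evicts some item $y$. If $y \in \calY(i)$, this is by definition a bad eviction, so $B$ increments and absorbs the potential increase caused by removing $y$ from $\calX$'s cache (which increases $|\calY \setminus \calX|$ but we are tracking $|\calX \setminus \calY|$ — so actually removing $y$ from $\calX$ can only \emph{decrease} $\Phi$, which is favorable). The subtlety is in the other direction: fetching $x$ into $\calX$ increases $\Phi$ by $1$ exactly when $x \notin \calY(i)$, i.e. exactly when $\calY$ also missed and did not (yet) fetch $x$ — but a lazy-style assumption or the hypothesis ``only fetch a page when it is requested'' should force $\calY$ to fetch $x$ at step $i$ as well if it missed, so in sub-case 2a, $x$ ends up in $\calY(i)$ and $\Phi$ does not increase from $x$. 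I need to double-check that the hypothesis (both algorithms fetch a page only when requested, but are not assumed lazy in the full sense) is exactly strong enough to make this go through: in particular $\calY$ might evict more than one item or might be non-lazy in other respects, but none of that hurts, since extra evictions by $\calY$ only shrink $\calY(i)$ and hence can only decrease $\Phi$ or create ``good'' situations. I would state the invariant, then write out the four (or five) cases tersely, confirming in each that $\Delta C(\calX) \le \Delta C(\calY) + \Delta B + \Delta\Phi$, and conclude.
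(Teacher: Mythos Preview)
Your potential is the wrong set difference, and the per-step inequality you propose to verify is false. Concretely, take $\calX$ and $\calY$ both with cache size $2$, and $\sigma = a,b,c,a$. Let $\calX$ evict $a$ at step $3$ (so $\calX(3)=\{b,c\}$) and $\calY$ evict $b$ at step $3$ (so $\calY(3)=\{a,c\}$). At step $4$, $\calX$ misses on $a$ and evicts $b$; since $b\notin\calY(4)=\{a,c\}$, this is \emph{not} a bad eviction. Thus $\Delta C(\calX)=1$, $\Delta C(\calY)=0$, $\Delta B=0$, and $\Phi=|\calX\setminus\calY|$ drops from $|\{b\}|=1$ to $0$, i.e.\ $\Delta\Phi=-1$. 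Your per-step check $\Delta C(\calX)\le \Delta C(\calY)+\Delta B+\Delta\Phi$ reads $1\le -1$. You actually flagged the issue yourself (``wait, it contributed to $\calY(i-1)\setminus\calX(i-1)$, not to $\Phi$'') but then kept $\Phi=|\calX\setminus\calY|$; the problem is exactly that the ``expensive'' sub-case 2b is witnessed by an element of $\calY\setminus\calX$, which your potential does not see. With $\Psi:=|\calY\setminus\calX|$ the bookkeeping does work: every bad miss removes an element from $\calY\setminus\calX$, and elements can only enter $\calY\setminus\calX$ via a bad eviction of $\calX$ (since any item $\calY$ fetches is also fetched by $\calX$ at the same step). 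Summing, $0\le\Psi_{|\sigma|}\le B-M$, hence $M\le B$ and the lemma follows.

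The paper avoids potentials altogether and argues this last inequality by a direct injection: writing $C(\calX,\sigma)-C(\calY,\sigma)=M-H$ where $M$ counts misses of $\calX$ that are hits of $\calY$ (``bad misses'') and $H$ the reverse, it suffices to show $M\le B$. Given a bad miss on $\sigma_i$, the item $\sigma_i$ was accessed earlier (else $\calY$ could not hold it), so $\calX$ must have evicted it at some most-recent time $j<i$; since $\sigma_i$ is not accessed on $(j,i)$ and is in $\calY$ at time $i$, it was already in $\calY$ at time $j$, making that eviction bad. Distinct bad misses map to distinct bad evictions because two misses on the same item must be separated by an eviction. This is shorter than the potential argument and sidesteps the case analysis entirely.
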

\begin{proof}
Observe that $C(\calX, \sigma) - C(\calY, \sigma) = M - H$, where $H$ is the number of cache hits in $\calX$ that are misses in $\calY$, and $M$ is the number of misses in $\calX$ that are hits in $\calY$. We call the latter \defn{bad misses}. Thus, it suffices to show that $M \leq B$.

Suppose $\sigma_i$ is a bad miss. Notice that $\sigma_i$ must have already been accessed before time $i$, or else it wouldn't be present in $\calY$ at time $i$, as items are only fetched when they are requested. Therefore, $\sigma_i$ must have been evicted by $\calX$ in the past. Let $j < i$ be the most recent time when $\sigma_i$ was evicted by $\calX$. It is enough to show that $\sigma_i$ was a bad eviction (of $\calX$ with respect to $\calY$ at time $j$); this will establish an injection from bad misses into bad evictions, as there cannot be two misses of the same item without being evicted in-between. Thus, to conclude the proof, we need to show that $\sigma_i \in \calY(j)$.

During the time interval $(j, i)$, item $\sigma_i$ was not accessed, or else $j$ wouldn't be the most recent eviction time of $\sigma_i$ in $\calX$. Since $\sigma_i$ is a hit in $\calY$ at time $i$, item $\sigma_i$ was not accessed during $(j,i)$, and items are only fetched upon a request, we must have $\sigma_i \in \calY(j)$. Hence, $\sigma_i$ was a bad eviction of $\calX$ with respect to $\calY$ at time $j$.
\end{proof}

In particular, \Cref{lem:unstale-evictions} holds for lazy algorithms. (Later on, in \Cref{sec:long}, we will use this result in full generality, on paging algorithms that perform multiple evictions during a time step.)

Next, we prove a standard balls-and-bins bound for our particular setting.

\begin{lemma}
\label{lem:balls-and-bins}
Let $\delta = \delta(k)$ be such that $\delta \geq \sqrt{12 \ln(k/\ways) / \ways}$ and $\delta \leq 1/2$ for all $k$ large enough. Suppose $k' = (1-\delta)k$ balls are randomly and independently thrown into $k/\ways$ bins. Then, for all $k$ large enough, the maximum load among all bins is greater than $\ways$ with probability at most $\exp(-\delta^2 \ways/12)$.
\end{lemma}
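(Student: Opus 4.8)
The statement to prove is \Cref{lem:balls-and-bins}, a balls-and-bins concentration bound: throwing $k' = (1-\delta)k$ balls uniformly and independently into $m := k/\ways$ bins, the maximum load exceeds $\ways$ only with probability at most $\exp(-\delta^2\ways/12)$. The plan is the standard union-bound-over-bins argument, with a Chernoff tail on a single bin's load. First I would fix one bin, say bin $1$, and let $L_1 = \sum_{j=1}^{k'} X_j$ where $X_j$ is the indicator that ball $j$ lands in bin $1$. These are independent $0/1$ variables (in fact for a fixed bin they are genuinely independent, so \Cref{thm:chernoff} applies directly — we do not even need negative association here), with $\prob{X_j = 1} = 1/m = \ways/k$, hence $\expect{L_1} = k'/m = (1-\delta)k/m = (1-\delta)\ways$.

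The key step is to write the event $\{L_1 > \ways\}$ as an upper-tail deviation of $L_1$ above its mean by the multiplicative factor $(1+\eps)$ with $1+\eps = \ways/\expect{L_1} = 1/(1-\delta)$, i.e.\ $\eps = \delta/(1-\delta)$. Applying the upper-tail bound of \Cref{thm:chernoff} gives $\prob{L_1 \geq (1+\eps)\expect{L_1}} \leq \exp(-\eps^2\expect{L_1}/3)$. Now $\eps^2 \expect{L_1} = \frac{\delta^2}{(1-\delta)^2}\,(1-\delta)\ways = \frac{\delta^2\ways}{1-\delta} \geq \delta^2\ways$ since $1-\delta \le 1$, so $\prob{L_1 \geq \ways} \leq \exp(-\delta^2\ways/3)$. (One technical point: $L_1 > \ways$ versus $L_1 \geq \ways$ — since $\ways$ is an integer and loads are integers, $L_1 > \ways$ implies $L_1 \geq \ways+1 \geq (1+\eps)\expect{L_1}$, so the bound as stated, for strict exceedance, is if anything only easier; I would phrase it via $L_1 \geq \ways$ to be safe, or note $\ways \geq (1+\eps)\expect{L_1}$ exactly.) Then a union bound over the $m = k/\ways$ bins yields
\[
\prob{\max_i L_i > \ways} \leq \frac{k}{\ways}\exp(-\delta^2\ways/3).
\]

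The last step is to absorb the factor $k/\ways$ into the exponent. Here is where the hypothesis $\delta \geq \sqrt{12\ln(k/\ways)/\ways}$ is used: it is equivalent to $\delta^2\ways \geq 12\ln(k/\ways)$, i.e.\ $\frac{k}{\ways} \leq \exp(\delta^2\ways/12)$. Hence
\[
\frac{k}{\ways}\exp(-\delta^2\ways/3) \leq \exp(\delta^2\ways/12)\exp(-\delta^2\ways/3) = \exp(-\delta^2\ways/4) \leq \exp(-\delta^2\ways/12),
\]
which is the claimed bound (indeed with room to spare — the $/12$ in the conclusion is not tight, which is presumably why the lemma is stated loosely so that it plugs cleanly into \Cref{thm:monotone-local}). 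I do not anticipate a genuine obstacle here; the only things to be careful about are (i) that the Chernoff hypothesis $\eps \in (0,1)$ holds — indeed $\eps = \delta/(1-\delta) \in (0,1)$ precisely because $\delta \leq 1/2$, which is exactly the second hypothesis of the lemma — and (ii) keeping the integrality/strictness bookkeeping on $L_1 > \ways$ versus $L_1 \geq \ways$ straight. Everything else is routine arithmetic with the exponents.
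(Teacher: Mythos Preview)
Your proof is correct and follows essentially the same approach as the paper's: Chernoff on a single bin's load, then a union bound over the $k/\ways$ bins, absorbing the prefactor via the hypothesis $\delta^2\ways \geq 12\ln(k/\ways)$. The only cosmetic difference is that the paper takes $\eps = \delta$ in the Chernoff bound (observing $(1+\delta)(1-\delta)\ways < \ways$), which yields a slightly weaker per-bin bound $\exp(-\delta^2\ways/6)$ but cleanly sidesteps the boundary case $\eps = \delta/(1-\delta) = 1$ that your choice hits at $\delta = 1/2$.
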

\begin{proof}
Let $m := k' = (1-\delta)k$ be the number of balls, and let $n := k/\ways$ be the number of bins. Let $h := m/n = (1-\delta)k / (k/\ways) = (1 - \delta)\ways$ be the average load of a bin. Let $L_i$ be the load of bin $i$. This is a sum of $m$ independent binary random variables, each indicating whether a ball lands in bin $i$. By the Chernoff bound from \Cref{thm:chernoff},
\begin{align*}
\prob{L_i \geq (1 + \delta) \expect{L_i}} &\leq \exp\left(-\frac{\delta^2 \expect{L_i}}{3}\right)\\
&= \exp\left(-\frac{\delta^2 (1-\delta) \ways}{3}\right) \tag{as $\expect{L_i} = h = (1 - \delta)\ways$}\\
&\leq \exp\left(-\frac{\delta^2 \ways}{6}\right). \tag{as $1 - \delta \geq 1/2$}
\end{align*}
\noindent
Notice that
\[
(1 + \delta) \expect{L_i} = (1 + \delta)(1 - \delta) \ways = (1 - \delta^2) \ways < \ways.
\]
Hence, the load of bin $i$ is larger than $\ways$ with probability at most $\exp(\delta^2 \ways / 6)$. By a union bound over all $n$ bins, the maximum load across all bins is larger than $\ways$ with probability
\begin{align*}
n\exp(-\delta^2 \ways/6) &= \frac{k}{\ways} \exp(-\delta^2 \ways/6) \tag{by definition of $n$}\\
&= \exp\left(-\frac{\delta^2 \ways}{6} + \ln\left(\frac{k}{\ways}\right)\right)\\
&\leq \exp\left(-\frac{\delta^2 \ways}{12}\right). \tag{as $\delta^2 \ways/6 \geq 2\ln(k / \ways)$, by hypothesis}
\end{align*}
This concludes the proof.
\end{proof}

\subsection{Proof of \Cref{thm:monotone-local}}

For every time step $i$, let $B_i$ be the indicator random variable that is $1$ when $\sigma_i$ causes a bad eviction in $\local{\calA}_k$ with respect to $\calA_{k'}$.

\begin{claim}
\label{clm:unstale-eviction-probability}
We have $\prob{B_i = 1} \leq \exp(-\delta^2 \ways/12)$.
\end{claim}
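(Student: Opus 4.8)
The plan is to bound the probability that the access $\sigma_i$ triggers a bad eviction in $\local{\calA}_k$ by reducing it, via the stability property, to the balls-and-bins bound of \Cref{lem:balls-and-bins}. Fix the time step $i$ and write $\tau = \sigma_1 \cdots \sigma_{i-1}$ and $z = \sigma_i$. Suppose that $z$ causes a bad eviction, i.e. some item $x$ with $x \in \calA_{k'}(i)$ is evicted by $\local{\calA}_k$ in response to $z$. By the definition of $\local{\calA}$, only the bucket $h(z)$ is touched, and the instance of $\calA_{\ways}$ running on bucket $h(z)$ sees precisely the subsequence of requests hashing to $h(z)$; that is, writing $X_j := \{u \in \universe : h(u) = j\}$, bucket $j$ runs $\calA_{\ways}$ on $\tau[X_j]\, z$ whenever $h(z) = j$. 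So the evicted item $x$ lies in $\evict(\calA_{\ways}, \tau[X_{h(z)}], z)$, and since $x \in \calA_{k'}(\tau z)$, the hypothesis of the stability condition \Cref{eq:stable} is met with $b = \ways$, $a = k'$, $X = X_{h(z)}$. (Here I use $k' = (1-\delta)k > \ways$ since $\delta \le 1/2$ and $\ways < k$, so indeed $a > b$.) Stability then gives $\calA_{\ways}(\tau[X_{h(z)}]\, z) \subseteq \calA_{k'}(\tau z)$: every item cached in bucket $h(z)$ after serving $z$ is present in the fully associative cache at time $i$.

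Now I would argue that this forces an overfull bucket among a \emph{deterministic} set of items. The items cached in bucket $h(z)$ after $z$ number exactly $\ways$, because $\calA$ is lazy so after the eviction-and-fetch the bucket is full (it had to be full for $\calA_{\ways}$ to evict). All $\ways$ of these items hash to $h(z)$ and all belong to $\calA_{k'}(i)$. Hence the set $S := \calA_{k'}(i)$, which has $|S| \le k' = (1-\delta)k$ items, has at least $\ways$ of its members mapped by $h$ to the single bucket $h(z)$. Crucially $S$ is determined by $\tau z$ alone and is independent of the hash function $h$ (the fully associative algorithm $\calA_{k'}$ uses no randomness), so the event ``$z$ causes a bad eviction'' is contained in the event ``some bucket receives $\ge \ways$ balls when the $\le k'$ items of $S$ are hashed'' — a pure balls-and-bins event. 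Throwing $k'$ balls into $k/\ways$ bins (adding phantom balls only increases loads, so we may assume exactly $k'$), \Cref{lem:balls-and-bins} bounds the probability of maximum load exceeding $\ways$… but note the lemma as stated bounds the probability that the load is \emph{strictly greater} than $\ways$, whereas here I need load $\ge \ways$. I would handle this by the same Chernoff computation with a slightly smaller slack: $(1 + \delta')\expect{L_i} = \ways$ is solvable with $\delta' = \delta/(1-\delta) \ge \delta$, and $\exp(-\delta'^2 (1-\delta)\ways/3) \le \exp(-\delta^2 \ways/3)$, which after the union bound over $k/\ways$ bins and using $\delta^2\ways/6 \ge 2\ln(k/\ways)$ still yields at most $\exp(-\delta^2\ways/12)$. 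So the claimed bound $\prob{B_i = 1} \le \exp(-\delta^2\ways/12)$ follows.

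The main obstacle is the clean extraction of the balls-and-bins event — in particular making precise that the $\ways$ items witnessing the overfull bucket are genuinely $\ways$ \emph{distinct} items all lying in the fixed set $S = \calA_{k'}(i)$, and that $S$ is hash-independent. The laziness of $\calA$ is what guarantees the bucket is full (exactly $\ways$ distinct items) right after the eviction, and stability is what places all of them in $S$; both hypotheses of the theorem are used here. A secondary, purely bookkeeping nuisance is the off-by-one between ``load $> \ways$'' in \Cref{lem:balls-and-bins} and the ``load $\ge \ways$'' I actually need, which is absorbed by adjusting the Chernoff slack as above; alternatively one can simply note that $\ways$ items in one bucket out of a total of $k' < k$ items is itself already an overload event of the type the lemma controls, since the lemma's proof shows $(1+\delta)\expect{L_i} = (1-\delta^2)\ways < \ways$, so any load reaching $\ways$ already exceeds $(1+\delta)\expect{L_i}$ and the stated Chernoff bound applies verbatim.
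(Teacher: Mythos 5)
Your proof is correct and follows essentially the same route as the paper: use stability (with $b=\ways$, $a=k'$, $X$ the preimage of the bucket) to show that a bad eviction forces an overfull bucket among the items of the deterministic set $S_i=\calA_{k'}(i)$, then apply the balls-and-bins bound. The only difference is your count: you track the $\ways$ items \emph{remaining} in the bucket after serving $\sigma_i$, which creates the ``load $\geq \ways$ vs.\ load $>\ways$'' off-by-one you then patch with a redone Chernoff computation. The paper avoids this entirely by also counting the evicted item $x$ itself --- it is in $S_i$ (that is what makes the eviction bad) and hashes to the same bucket, yet is distinct from the $\ways$ survivors --- so there are $\ways+1$ colliding items of $S_i$ and \Cref{lem:balls-and-bins} applies verbatim. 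Your patch is valid (and your alternative observation that $(1+\delta)\E[L_i]=(1-\delta^2)\ways<\ways$ already makes the lemma's proof cover load $\geq\ways$ is the cleaner of the two), but it is unnecessary work.
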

\begin{proof}
Suppose an item $x$ is evicted by $\local{\calA}_k$ in response to $\sigma_i$. Then, $\sigma_i$ must hash to the same bucket as $x$. Suppose $x$ is a bad eviction. Let $\tau := \sigma_1 \dots \sigma_{i - 1}$. Then, $x \in \evict(\local{\calA}_k, \tau, \sigma_i) \cap \calA_{k'}(i)$. Let $X := \{y \in \universe \mid h(y) = h(x)\}$ be the set of items that hash to $x$'s bucket. Then, $x \in \evict(\calA_{\ways}, \tau[X], \sigma_i) \cap \calA_{k'}(i)$, because $x$ is evicted from the set-associative cache if and only if it is evicted from its bucket. Because $\calA$ is stable, this implies that $\calA_{\ways}(\tau[X]\sigma_i) \subseteq \calA_{k'}(\tau \sigma_i)$, that is, all items in $x$'s bucket are also in $\calA_{k'}(i)$. Thus, if $S_i := \calA_{k'}(i)$, we conclude that
\begin{align*}
\prob{B_i = 1} \leq \Pr[\text{There are $\ways + 1$ items in $S_i$ that hash to the same bucket}].
\end{align*}

To upper bound this probability we model the event as a balls-and-bins game, in which each ball represents an item, and each bin represents a bucket. Crucially, the set $S_i$ is oblivious to the randomness of $\local{\calA}_k$, so the ball placements are independent and uniformly random. There are at most $|S_i| \leq k'$ balls and $k/\ways$ bins, so by \Cref{lem:balls-and-bins} this probability is at most $\exp(-\delta^2 \ways/12)$.
\end{proof}

Let $B := \sum_{i = 1}^{|\sigma|} B_i$ be the number of bad evictions; there is at most $1$ bad eviction at each step, since $\calA$ is lazy. By \Cref{clm:unstale-eviction-probability} and linearity of expectation, $\expect{B} \leq |\sigma|  \exp(-\delta^2 \ways/12)$. By Markov's inequality, $\prob{B \geq |\sigma| \exp(-\delta^2 \ways/24)} \leq \exp(-\delta^2 \ways/24)$.

Because $\calA$ is lazy, we may apply \Cref{lem:unstale-evictions} to conclude that
\[
C(\local{\calA}_k, \sigma) \leq C(\calA_{k'}, \sigma) + |\sigma| \exp(-\delta^2 \ways/24),
\]
with probability $1 - \exp(-\delta^2 \ways/24)$. When $|\sigma| = O(\exp(\delta^2 \ways/24))$, this implies that $\local{\calA}_k$ is $1$-competitive against $\calA_{k'}$. This concludes the proof.

\section{Set-Associative Caches Are Not Competitive on Sequences of Arbitrary Length}
\label{sec:lower-bounds}

The major limitation of \Cref{thm:monotone-local} is that it does not work for request sequences of arbitrary length. That is, for any valid choice of $\ways$ and $\delta$, competitiveness of the set-associative cache is only guaranteed up to a certain number of requests. A natural question is whether we can extend this result to arbitrarily long input sequences, perhaps by allowing for a larger resource augmentation factor, or for a larger competitive ratio.

The main result of this section is a negative answer to this question for all conservative algorithms, which includes LRU.

\begin{theorem}
\label{thm:tradeoff}
Let $\calA$ be a conservative paging algorithm. Suppose that $|\universe| \geq 16 k \exp(8(1 - \delta)^{-1} \delta^2 \ways)$ and $(1 - \delta)^{-1} \leq k/(2\ways)$ for all large enough $k$. Then, $\ways$-way set-associative $\calA$ is \emph{not} $c$-competitive with $\calA$ using $(1 - \delta)^{-1}$-resource augmentation on request sequences of length
\[
O(c \ways k \exp(16 (1 - \delta)^{-1} \delta^2 \ways)).
\]
\end{theorem}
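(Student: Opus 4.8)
The plan is to exhibit, for any conservative $\calA$ and any target ratio $c$, an oblivious request sequence $\sigma$ of the stated length on which, with probability strictly above $1/2$ over the random hash function $h$, we have $C(\local{\calA}_k,\sigma) > c\cdot C(\calA_{k'},\sigma) + O(1)$, where $k':=(1-\delta)k$ is the size of the fully associative cache under $(1-\delta)^{-1}$-resource augmentation. Write $\rho:=(1-\delta)^{-1}$ and $n:=k/\ways$ for the number of buckets. Fix $m$ pairwise disjoint \emph{candidate sets} $Q_1,\dots,Q_m\subseteq\universe$, each of size exactly $k'$; the value of $m$ is chosen below, and the hypothesis $|\universe|\ge 16k\exp(8\rho\delta^2\ways)$ provides the items needed. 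Let $\sigma:=B_1B_2\cdots B_m$, where block $B_j$ consists of $R$ consecutive scans through $Q_j$ in a fixed cyclic order, $R$ being a repetition count fixed at the end. Since the $Q_j$ are disjoint of size $k'$, each $B_j$ is a consecutive subsequence with at most $k'$ distinct items, so conservatism gives immediately $C(\calA_{k'},\sigma)\le mk'$.

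The probabilistic core is a lower bound on the chance that some candidate \emph{overloads} a bucket, i.e.\ sends more than $\ways$ of its $k'$ items there. For a fixed candidate and bucket, the load is a sum of $k'$ independent $0/1$ variables with mean $\mu=k'/n=(1-\delta)\ways$; writing $\ways+1=(1+\epsilon)\mu$ one computes $2\epsilon^2\mu = 2\rho\delta^2\ways + O(1)$, so the reverse Chernoff bound (\Cref{thm:reverse-chernoff}) yields that this bucket is overloaded with probability $q=\Omega(\exp(-2\rho\delta^2\ways))$; the hypothesis $(1-\delta)^{-1}\le k/(2\ways)$ keeps $n$ (hence the admissible range in \Cref{thm:reverse-chernoff}) large enough. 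Crucially, the $n$ bucket loads of a single candidate are negatively associated, so the probability that \emph{no} bucket of $Q_j$ is overloaded is at most $(1-q)^n\le e^{-nq}$, and the expected number of overloaded buckets of $Q_j$ is $nq$. Choose $m:=\ceiling{c_0/(nq)}$ for a suitable absolute constant $c_0$, so that $mnq\ge c_0$. The number of overloaded (candidate, bucket) pairs across all blocks is then a sum of negatively associated indicators with mean $mnq\ge c_0$, so a Chernoff lower-tail bound gives that with probability above $1/2$ there are at least $\tfrac12 mnq$ such pairs. Note $mk'=O(k'/(nq))=O(\ways\exp(2\rho\delta^2\ways))$, within the universe budget.

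On this event, each overloaded (candidate, bucket) pair forces misses in the set-associative cache. During block $B_j$ of an overloaded candidate $Q_j$, the instance of $\calA_\ways$ on the overloaded bucket $b$ is fed $R$ scans through the set $Q_j\cap h^{-1}(b)$ of more than $\ways$ items; a $\ways$-slot cache cannot hold them all, so it misses at least once per scan (for LRU the bucket thrashes completely, $\Omega(R\ways)$ misses; for lazy algorithms this holds because a hit never changes the bucket state, so all $>\ways$ items cannot be simultaneously cached). Distinct pairs produce misses in distinct (bucket-instance, time-interval) pairs, so $C(\local{\calA}_k,\sigma)\ge R\cdot\tfrac12 mnq$. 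Combined with $C(\calA_{k'},\sigma)\le mk'$, it suffices to take $R:=\ceiling{2ck'/(nq)}+O(1)$, which makes $\tfrac12 Rmnq > c\cdot mk' + O(1)$; then $\local{\calA}_k$ is not $c$-competitive with $\calA_{k'}$ under $(1-\delta)^{-1}$-resource augmentation. Finally $|\sigma| = R\,m\,k'$; plugging in $R=O(ck'/(nq)+1)$, $m=O(1+1/(nq))$, $k'/n\le\ways$, and $1/q=O(\exp(2\rho\delta^2\ways))$, a short computation bounds this by $O\big(c\,\ways k\,\exp(16\rho\delta^2\ways)\big)$, as required.

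The step I expect to be the main obstacle is the candidate count. The naive estimate — that a candidate overloads \emph{some} bucket with probability only $\approx q$, obtained by looking at a single fixed bucket — forces $\Theta(1/q)$ disjoint candidates of $\Theta(k)$ fresh items each, which inflates both $C(\calA_{k'},\sigma)$ and $|\sigma|$ by a factor of $n$ and overshoots the target length. Recovering that factor of $n$ relies on the fact that a candidate has $n$ buckets, each (once exposed) independently likely to overload, which is exactly where negative association of balls-and-bins loads enters. Two further delicate points: (i) the $\Omega(R)$-misses-per-overloaded-bucket bound for the set-associative cache is clean for LRU and for lazy algorithms, but for general conservative algorithms one should prevent the algorithm from exploiting the periodicity of the scans (e.g.\ by permuting the scan order across the $R$ repetitions, so the oblivious adversary still forces $\Omega(R)$ misses); and (ii) the book-keeping of constants — matching $8\rho\delta^2\ways$ in the universe hypothesis and $16\rho\delta^2\ways$ in the conclusion — requires applying \Cref{thm:reverse-chernoff} with $\epsilon$ only a constant above $\rho\delta$ and absorbing the $O(1)$ in $2\epsilon^2\mu$.
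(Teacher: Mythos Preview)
Your approach is correct and is essentially the paper's: the paper's adversary also takes $s$ pairwise disjoint sets of $k'$ items and scans each one $t$ times, invokes the reverse Chernoff bound together with negative association of bin loads (packaged there as \Cref{thm:balls-and-bins-lower-bound}) to lower-bound the number of overloaded buckets per candidate, and then uses independence across the $s$ disjoint candidates to push the failure probability below $1/2$; the only organizational difference is that the paper does the concentration per candidate and then amplifies over candidates, whereas you pool all $(\text{candidate},\text{bucket})$ overload indicators into a single Chernoff lower tail. Your flagged concern (i) is unnecessary---for \emph{any} online paging algorithm on an $\ways$-slot bucket, a single scan through more than $\ways$ distinct items forces at least one miss by pigeonhole (at the start of the scan some scanned item is absent, and in the online model items are fetched only when requested), so no permutation trick is needed and the paper does not use one.
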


Roughly speaking, \Cref{thm:tradeoff} states that as long as the resource augmentation $r = (1-\delta)^{-1}$ is not too large and the universe $\universe$ is large enough, there is a hard limit on the length of the request sequences that a set-associative cache can handle before its performance drops.

\begin{corollary}
\label{cor:tradeoff-1}
Let $\calA$ be a conservative paging algorithm. Suppose $\universe$, $\ways$ and $r = (1 - \delta)^{-1}$ satisfy the hypotheses of \Cref{thm:tradeoff}. Then, for every $c \geq 1$, $\ways$-way set-associative $\calA$ is \emph{not} $c$-competitive with $\calA$ using $r$-resource augmentation on \emph{arbitrarily long} request sequences.
\end{corollary}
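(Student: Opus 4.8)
The plan is to read the corollary straight off Theorem~\ref{thm:tradeoff}. Unpacking the definition, the assertion ``$\local{\calA}$ is not $c$-competitive with $\calA$ using $r$-resource augmentation on arbitrarily long request sequences'' means: for every additive constant $b$ there is a request sequence $\sigma$ (of unrestricted length) with $\prob{C(\local{\calA}_k,\sigma) > c\cdot C(\calA_{k'},\sigma) + b} \geq 1/2$, where $k' = (1-\delta)k$. Since $\universe$, $\ways$ and $r$ are assumed to satisfy the hypotheses of Theorem~\ref{thm:tradeoff}, the theorem already supplies such sequences for every $c$; the only extra information it carries is an \emph{upper} bound $\ell_0(c) = O(c\ways k\exp(16(1-\delta)^{-1}\delta^2\ways))$ on their length, which the corollary simply ignores. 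So, at heart, the proof is: apply Theorem~\ref{thm:tradeoff}, then forget the length bound.

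The one point worth a line of care is the additive constant $b$ in the definition of competitiveness. One reading is that ``not $c$-competitive'' in Theorem~\ref{thm:tradeoff} already quantifies over all $b$, so there is nothing further to do. To make $b$ explicit instead, I would apply Theorem~\ref{thm:tradeoff} with competitive ratio $c' := c + b = O(1)$: it yields a $\sigma$ with $\prob{C(\local{\calA}_k,\sigma) > c'\cdot C(\calA_{k'},\sigma)} \geq 1/2$, and since $\calA_{k'}$ starts empty and $\sigma$ is nonempty we have $C(\calA_{k'},\sigma) \geq 1$, so on that event $c'\cdot C(\calA_{k'},\sigma) = c\cdot C(\calA_{k'},\sigma) + b\cdot C(\calA_{k'},\sigma) \geq c\cdot C(\calA_{k'},\sigma) + b$. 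Either way one obtains non-competitiveness with no restriction on sequence length.

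If ``arbitrarily long'' is to be taken in the stronger literal sense --- witnesses $\sigma$ of every prescribed length, rather than merely the absence of a length cap --- then I would additionally pad: take the bad sequence $\sigma^*$ produced by Theorem~\ref{thm:tradeoff} and run the cache on $(\sigma^*)^t$ for growing $t$. Because the construction behind Theorem~\ref{thm:tradeoff} forces some bucket above its capacity $\ways$ and then cycles through the colliding items, a conservative algorithm keeps thrashing on every repetition, while the $r$-resource-augmented fully associative cache has room for all $\ways+1 \leq k'$ colliding items (using $(1-\delta)^{-1}\leq k/(2\ways)$, which gives $k'\geq 2\ways$) and incurs only $O(\ways)$ misses per repetition; hence the cost gap grows linearly in $t$ while $|(\sigma^*)^t| = t\,|\sigma^*| \to \infty$. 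This is the only step that peeks inside the proof of Theorem~\ref{thm:tradeoff} rather than using it as a black box, and it is also the only place a (mild) obstacle could hide: one must check that each repetition re-establishes the oversubscribed configuration rather than being absorbed by the caches --- which it does, since the colliding items stay colliding under the fixed hash function, and conservative algorithms do not escape the cyclic-access pathology.
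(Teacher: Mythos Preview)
Your proposal is correct and matches the paper's approach: the paper states the corollary without proof, treating it as an immediate consequence of Theorem~\ref{thm:tradeoff}, which is exactly your first paragraph. Your additional care about the additive constant and the alternative reading of ``arbitrarily long'' is more thorough than the paper itself, but unnecessary here since (i) the proof of Theorem~\ref{thm:tradeoff} already establishes an $\omega(1)$ (in fact $\Omega(k)$) gap, which absorbs any fixed additive constant, and (ii) the intended meaning of ``arbitrarily long'' is simply that no length cap is imposed, not that witnesses of every prescribed length exist.
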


\noindent
This corollary applies, for instance, to all $\ways \leq k/8$ and $r \leq 4$, provided that the universe is large enough. We can further use \Cref{thm:tradeoff} to show that competitiveness is impossible in some common regimes of $\ways$, $r$ and $c$. For comparison, recall that $1$-competitiveness on polynomial request sequences is possible for $\ways = \omega(\log k)$ and $r = 1 + \Theta(\sqrt{\log(k)/\ways})$, by \Cref{prp:super-logarithmic}.

\begin{proposition}
\label{prp:tradeoff}
Let $\calA$ be a conservative paging algorithm. Suppose (for simplicity) that $\universe$ is infinite. In the following cases, $\ways$-way set-associative $\calA$ is \emph{not} $c$-competitive with $\calA$ using $r$-resource augmentation on request sequences of length $O(\ways k^{1 + o(1)}) = O(k^{2.01})$:
\begin{enumerate}
    \item for $\ways = \Omega(\log k)$ such that $\ways \leq k/3$, $r = 1 + o(\sqrt{\log(k)/\ways})$ and $c = O(1)$;
    \item for $\ways = o(\log k)$, $r = O(1)$ and $c = O(1)$;
    \item for $\ways = 1$, $r = o(\log k)$ and $c = O(1)$.
\end{enumerate}
\end{proposition}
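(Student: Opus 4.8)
The plan is to obtain all three cases as instantiations of \Cref{thm:tradeoff}, via the substitution $\delta := 1 - 1/r$, so that $(1-\delta)^{-1} = r$ and $\delta = (r-1)/r \in [0,1)$. Since $\universe$ is infinite, the first hypothesis of \Cref{thm:tradeoff}, namely $|\universe| \geq 16k\exp(8(1-\delta)^{-1}\delta^2\ways)$, holds vacuously. Thus the only things left to do are: (i) verify the side condition $(1-\delta)^{-1} = r \leq k/(2\ways)$ in each regime, and (ii) show that the length bound $O(c\ways k\exp(16(1-\delta)^{-1}\delta^2\ways))$ supplied by \Cref{thm:tradeoff} collapses to $O(\ways k^{1+o(1)}) = O(k^{2.01})$.

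The arithmetic core is the identity
\[
(1-\delta)^{-1}\delta^2\ways \;=\; r\cdot\frac{(r-1)^2}{r^2}\cdot\ways \;=\; \frac{(r-1)^2\ways}{r},
\]
so the exponential factor in the length bound is $\exp\!\big(16(r-1)^2\ways/r\big)$, and it suffices to prove $(r-1)^2\ways/r = o(\log k)$ in each case, which makes this factor $k^{o(1)}$. In case (1), $r-1 = o(\sqrt{\log(k)/\ways})$ gives $(r-1)^2 = o(\log(k)/\ways)$, hence $(r-1)^2\ways = o(\log k)$, and dividing by $r\geq 1$ preserves this; moreover, since $\ways = \Omega(\log k)$ we have $\sqrt{\log(k)/\ways} = O(1)$, so $r = 1+o(1) \leq 3/2 \leq k/(2\ways)$ for $k$ large (using $\ways \leq k/3$). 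In case (2), $(r-1)^2/r = O(r) = O(1)$ while $\ways = o(\log k)$, so the product is $o(\log k)$, and $r = O(1) \leq k/(2\ways)$ eventually. In case (3), $\ways = 1$, so $(r-1)^2/r \leq r = o(\log k)$, and $r = o(\log k) \leq k/2 = k/(2\ways)$ eventually.

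Feeding this back into \Cref{thm:tradeoff} (with $c = O(1)$) yields non-$c$-competitiveness with $r$-resource augmentation on request sequences of length $O(c\ways k\cdot k^{o(1)}) = O(\ways k^{1+o(1)})$. Finally, $\ways = O(k)$ in every case ($\ways \leq k/3$ in (1) and $\ways = O(\log k)$ in (2) and (3)), so $\ways k^{1+o(1)} = O(k^{2+o(1)}) \leq O(k^{2.01})$ for all sufficiently large $k$, which is the claimed bound.

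I do not anticipate a genuine obstacle: the statement is a bookkeeping exercise over \Cref{thm:tradeoff}. The only points requiring care are the asymptotic translations — in particular that $r = 1 + o(\sqrt{\log(k)/\ways})$ implies $(r-1)^2\ways/r = o(\log k)$, and that $\exp(o(\log k)) = k^{o(1)}$ — along with checking the mild side condition $r \leq k/(2\ways)$, and noting that the degenerate endpoint $r = 1$ (so $\delta = 0$) is harmless, since then the length bound is merely $O(\ways k)$, still well within $O(k^{2.01})$.
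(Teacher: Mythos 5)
Your proposal is correct and follows essentially the same route as the paper: substitute $\delta := 1 - 1/r$ into \Cref{thm:tradeoff}, check the side condition $r \leq k/(2\ways)$, and show the exponent $16(1-\delta)^{-1}\delta^2\ways = 16(r-1)^2\ways/r$ is $o(\log k)$ so the length bound collapses to $O(\ways k^{1+o(1)})$. The paper only writes out case (1) and declares the others analogous, whereas you verify all three; your uniform bookkeeping via the identity $(1-\delta)^{-1}\delta^2\ways = (r-1)^2\ways/r$ matches the paper's computation exactly.
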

\begin{proof}
The proof of all three cases is similar, so we will only prove the first one. Let $\delta := 1 - 1/r$. Notice that $(1 - \delta)^{-1} = r$. Because $\ways = \Omega(\log k)$, we have $r = 1 + o(1)$ and, therefore, $r = 1 + o(1) \leq 3/2 \leq k/(2\ways)$, so we are in the conditions of \Cref{thm:tradeoff}. Thus, $\ways$-way set-associative $\calA$ is not $c$-competitive with $\calA$ using $r$-resource augmentation on request sequences of length
\begin{align*}
O(c\ways k \exp(16 r \delta^2 \ways)) &= O(\ways k \exp(16 r \delta^2 \ways)) \tag{as $c = O(1)$}\\
&= O(\ways k \exp(o(\log k))) \tag{as $\delta = o(\sqrt{\log(k)/\ways})$ and $r = 1 + o(1)$}\\
&= O(\ways k^{1 + o(1)}).
\end{align*}
\end{proof}

Interestingly, part 3 in \Cref{prp:tradeoff} implies that direct-mapped caches (i.e., set-associative caches with $\ways = 1$) are \emph{not} constant-competitive with a fully associative cache using sub-logarithmic resource augmentation on request sequences as small as $O(k^{1 + o(1)})$. This justifies the conventional wisdom that the performance of direct-mapped caches is subpar.

Parts 1 and 2 in \Cref{prp:tradeoff} imply that \Cref{prp:super-logarithmic} is optimal, in the sense that the set size and resource augmentation factor cannot be shrunk, even if we permit competitive ratio $O(1)$ and sub-polynomial success probability (but at least $1/2$). The following proposition shows that, in fact, the polynomial length of the request sequences in \Cref{prp:super-logarithmic} cannot be improved either, even if the competitive ratio is polynomial.

\begin{proposition}
Suppose that $\universe$ is infinite. For $\ways = \Omega(\log k)$ such that $\ways \leq k/3$, $c = \poly(k)$ and $r = 1 + \Theta(\sqrt{\log(k)/\ways})$, $\ways$-way set-associative $\calA$ is \emph{not} $c$-competitive with $\calA$ using $r$-resource augmentation on request sequences of super-polynomial length.
\end{proposition}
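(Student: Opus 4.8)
The plan is to derive this proposition directly from \Cref{thm:tradeoff}, in exactly the spirit of the proof of \Cref{prp:tradeoff}. The one conceptual addition is the elementary observation that if $\ways$-way set-associative $\calA$ fails to be $c$-competitive on request sequences of some length $\ell_0 = \ell_0(k)$, then it also fails to be $c$-competitive on request sequences of every length $\ell \geq \ell_0$: the witness sequence $\sigma$ produced by \Cref{thm:tradeoff} has $|\sigma| \leq \ell_0 \leq \ell$, so it is still a valid witness against length-$\ell$ competitiveness. Hence it suffices to show that \Cref{thm:tradeoff} already rules out $c$-competitiveness on sequences of some fixed polynomial length $P(k)$, and then to note that any super-polynomial function eventually dominates $P(k)$.

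Concretely, I would set $\delta := 1 - 1/r$, so that $(1-\delta)^{-1} = r$. Since $\ways = \Omega(\log k)$, we have $\sqrt{\log(k)/\ways} = O(1)$, hence $r = 1 + \Theta(\sqrt{\log(k)/\ways}) = \Theta(1)$ is bounded away from both $0$ and $\infty$; consequently $\delta = (r-1)/r = \Theta(r - 1) = \Theta(\sqrt{\log(k)/\ways})$. Next I check the two hypotheses of \Cref{thm:tradeoff} (taking $\calA$ conservative, as \Cref{thm:tradeoff} requires). The universe condition is immediate because $\universe$ is infinite. For the second condition, $(1-\delta)^{-1} = r \leq k/(2\ways)$, I multiply through by $2\ways$ and use $\ways \leq k/3$: the left-hand side is $2\ways \cdot r = 2\ways + \Theta(\sqrt{\ways \log k}) \leq \tfrac{2}{3}k + O(\sqrt{k \log k}) < k$ for all large $k$, as needed. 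This verification — which is where the hypothesis $\ways \leq k/3$ is used — is the closest thing to an obstacle, and it is routine.

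With both hypotheses in place, \Cref{thm:tradeoff} gives that $\ways$-way set-associative $\calA$ is not $c$-competitive with $\calA$ using $r$-resource augmentation on request sequences of length $O(c\ways k \exp(16 r \delta^2 \ways))$. Now $r \delta^2 \ways = \Theta(1) \cdot \Theta(\log(k)/\ways) \cdot \ways = \Theta(\log k)$, so $\exp(16 r \delta^2 \ways) = k^{O(1)}$; combined with $c = \poly(k)$ and $\ways \leq k$, the length bound is a fixed polynomial $P(k)$ for all large $k$.

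Finally, given any super-polynomial $\ell = \ell(k)$, we have $\ell(k) \geq P(k)$ for all sufficiently large $k$. For such $k$, the sequence $\sigma$ guaranteed by \Cref{thm:tradeoff} satisfies $|\sigma| \leq P(k) \leq \ell(k)$ and witnesses that $\ways$-way set-associative $\calA$ is not $c$-competitive with $\calA$ using $r$-resource augmentation on request sequences of length $\ell$, which completes the argument. I do not expect any genuine difficulty: the proposition is essentially a repackaging of \Cref{thm:tradeoff} together with \Cref{prp:tradeoff}, plus the trivial monotonicity remark that a polynomial-length lower bound is \emph{a fortiori} a super-polynomial-length lower bound.
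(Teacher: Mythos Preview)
Your proposal is correct and follows exactly the approach the paper intends: the paper simply states that the proof ``is analogous to that of \Cref{prp:tradeoff}'' and omits it, and your argument is precisely that analogue, with the additional trivial monotonicity remark (a polynomial-length witness is \emph{a fortiori} a witness for any super-polynomial length bound) made explicit. Your verification of the hypothesis $(1-\delta)^{-1} \le k/(2\ways)$ using $\ways \le k/3$ is slightly more careful than the corresponding step in the proof of \Cref{prp:tradeoff}, and correctly so, since here $r$ is only $\Theta(1)$ rather than $1+o(1)$.
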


\noindent
The proof is analogous to that of \Cref{prp:tradeoff}, so we omit it.

\subsection{An Auxiliary Result}

Consider a balls-and-bins game, in which $m$ balls are thrown randomly and independently into $n$ bins. Let $h := m/n$. A bin is said to be \defn{$a$-saturated} if, at the end of the game, the bin contains at least $h + a$ balls. Let $f(n, m, \epsilon) := n \exp(-2 \epsilon^2 h)$.

\begin{lemma}
\label{thm:balls-and-bins-lower-bound}
Suppose $m$ balls are thrown into $n \geq 2$ bins. Let $h := m/n$. For every $\epsilon \in [0, n - 2]$, the number of $\epsilon h$-saturated bins is more than $f(n, m, \epsilon)/8$ with probability at least $1 - \exp(-f(n, m, \epsilon)/32)$.
\end{lemma}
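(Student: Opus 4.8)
The plan is to prove that many bins are $\epsilon h$-saturated by first showing that \emph{each individual bin} is $\epsilon h$-saturated with probability at least (roughly) $f(n,m,\epsilon)/n = \exp(-2\epsilon^2 h)$, and then bootstrapping from a first-moment statement to a concentration statement using negative association of the bin loads. The per-bin lower bound on the saturation probability is exactly the reverse Chernoff bound: fix a bin $i$, write its load $L_i = \sum_{j=1}^m X_{ij}$ where $X_{ij}$ is the indicator that ball $j$ lands in bin $i$, so the $X_{ij}$ are i.i.d.\ with $\Pr[X_{ij}=1] = 1/n$ and $\expect{L_i} = m/n = h$. Since $n \geq 2$ we have $1/n \leq 1/2$, and since $\epsilon \leq n-2 = 1/p - 2$ we are in the hypothesis range of \Cref{thm:reverse-chernoff}, which gives $\Pr[L_i \geq (1+\epsilon)h] \geq \tfrac14 \exp(-2\epsilon^2 h)$. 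Note $(1+\epsilon)h = h + \epsilon h$, so this is precisely $\Pr[\text{bin } i \text{ is } \epsilon h\text{-saturated}] \geq \tfrac14 \exp(-2\epsilon^2 h)$.

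Next, let $Y_i$ be the indicator that bin $i$ is $\epsilon h$-saturated and $Y := \sum_{i=1}^n Y_i$ be the number of saturated bins. By linearity, $\expect{Y} = \sum_i \Pr[Y_i = 1] \geq \tfrac{n}{4}\exp(-2\epsilon^2 h) = f(n,m,\epsilon)/4$. To turn this into a high-probability lower bound I would invoke the Chernoff lower-tail bound from \Cref{thm:chernoff} with $\epsilon' = 1/2$: $\Pr[Y \leq \tfrac12 \expect{Y}] \leq \exp(-\expect{Y}/8) \leq \exp(-f(n,m,\epsilon)/32)$, and on the complementary event $Y > \tfrac12\expect{Y} \geq f(n,m,\epsilon)/8$, which is exactly the claimed bound. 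The one thing that needs care here is the hypothesis of \Cref{thm:chernoff}: it requires the summands $Y_1,\dots,Y_n$ to be negatively associated $0$/$1$ random variables. The loads $(L_1,\dots,L_n)$ in a balls-into-bins process are a standard example of negatively associated random variables (by the permutation-distribution / zero-one principle arguments for NA), and each $Y_i$ is a monotone nondecreasing function of $L_i$ alone; since applying nondecreasing functions to disjoint subsets of NA variables preserves negative association, $(Y_1,\dots,Y_n)$ is NA, so \Cref{thm:chernoff} applies.

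The main obstacle is really just bookkeeping around these two boundary conditions --- verifying that $\epsilon \in [0, n-2]$ places us in the valid range of the reverse Chernoff bound (using $p = 1/n$), and confirming that the bin-load vector is negatively associated so that the (forward) Chernoff lower tail can be applied to $Y$. Neither is deep: the first is a direct substitution $1/p - 2 = n - 2$, and the second is a cited structural fact about balls-and-bins. A minor subtlety worth stating explicitly is that the constant $\tfrac14$ from \Cref{thm:reverse-chernoff} and the constant $\tfrac18$ from the Chernoff exponent with $\epsilon' = 1/2$ combine to give denominators $8$ and $32$ in the statement; one should check the arithmetic $\tfrac14 \cdot \tfrac12 = \tfrac18$ and $\tfrac14 \cdot \tfrac18 \cdot \tfrac14^{-1}$... more simply, $\expect{Y} \geq f/4 \Rightarrow \tfrac12\expect{Y} \geq f/8$ and $\exp(-\expect{Y}/8) \leq \exp(-f/32)$, so the constants match as claimed. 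Everything else is routine.
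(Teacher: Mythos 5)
Your proposal is correct and follows essentially the same route as the paper's proof: a per-bin lower bound via the reverse Chernoff bound (\Cref{thm:reverse-chernoff}) giving $\expect{X} \geq f(n,m,\epsilon)/4$, followed by the lower-tail Chernoff bound from \Cref{thm:chernoff} applied to the negatively associated saturation indicators with deviation parameter $1/2$. Your extra remark that the indicators inherit negative association from the bin loads because they are coordinatewise nondecreasing functions of disjoint subsets is exactly the justification the paper leaves to the cited reference.
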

\begin{proof}
Let $X_i$ be an indicator random variable that is $1$ exactly when bin $i$ is $\epsilon h$-saturated. Let $L_i$ be the load of bin $i$. Then, $\prob{X_i = 1} = \prob{L_i \geq h + \epsilon h}$. The random variable $L_i$ is a sum of $m$ i.i.d. $0$/$1$ random variables, each one indicating whether a ball lands in bin $i$. In particular, $\expect{L_i} = h$. Because each ball has probability $1/n \leq 1/2$ of landing in bin $i$, and $\epsilon \in [0, n-2]$, we have, by the reverse Chernoff bound from \Cref{thm:reverse-chernoff}, that
\[
\prob{L_i \geq h + \epsilon h} \geq \frac{1}{4} \exp(-2\epsilon^2 h).
\]
Let $X := \sum_i X_i$ be the number of $\epsilon h$-saturated bins. By linearity of expectation, $\expect{X} = \sum_i \expect{X_i} \geq f(n, m, \epsilon) / 4$.

Using standard results about negatively associated random variables~\cite{Wajc2017NA}, one can show that the random variables $X_1, \dots, X_n$ are negatively associated. Thus, by the Chernoff bound,
\begin{align*}
\prob{X \leq f(n, m, \epsilon)/8} &\leq \prob{X \leq \expect{X}/2}\\
&\leq \exp(-\expect{X}/8) \leq \exp(-f(n, m, \epsilon)/32).
\end{align*}
\end{proof}

\subsection{Proof of \Cref{thm:tradeoff}}

Consider the request sequence produced by the following adversary.

\begin{tbox}
\textbf{Adversary:} Request sequence of \Cref{thm:tradeoff}
\begin{enumerate}
    \item Let $s := 16 \exp(8(1 - \delta)^{-1} \delta^2 \ways)$ and $t := c \ways s^2$.
    \item Pick $S_1, \dots, S_{s} \subseteq \universe$ \emph{disjoint} sets of $k' = (1-\delta)k$ items.
    \item For $i = 1, \dots, s$:
    \begin{enumerate}
        \item Repeat $t$ times:
        \begin{enumerate}
            \item Sequentially access every item in $S_i$.
        \end{enumerate}
    \end{enumerate}
\end{enumerate}
\end{tbox}

\noindent
Notice that the adversary can carry out step 2 because the universe is large enough, by assumption. Let $\sigma$ be the request sequence produced. Its length is
\[
|\sigma| = k't < k t = k c \ways s^2 = O(c\ways k \exp(8 (1-\delta)^{-1} \delta^2 \ways)^2),
\]
as required.
\noindent
We will show that $C(\local{\calA}_k, \sigma) \geq c\,C(\calA_{k'}, \sigma) + \omega(1)$ with probability strictly larger than $1/2$, which implies the claim.

Because $\calA_{k'}$ is conservative, it only misses the first time each page in $S_i$ is accessed. Thus, $C(\calA_{k'}, \sigma) = k's$. Now we analyze the cost of $\local{\calA}_k$ on $\sigma$.

\begin{claim}
In each iteration of the loop of step 3, there are at least $2csk$ evictions in $\local{\calA}_k$ with probability at least $1 - \exp(-k/(2\ways s))$.
\end{claim}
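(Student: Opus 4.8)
The plan is to fix one iteration $i$ of the loop in step~3 and show that, with the claimed probability, a large number of buckets of $\local{\calA}_k$ are \emph{overloaded} --- receive at least $\ways+1$ items of $S_i$ --- and that each overloaded bucket is forced to evict at least once during every one of the $t$ passes over $S_i$. Since the adversary is oblivious, $S_i$ is independent of the hash function, so the placement of the $k'=(1-\delta)k$ items of $S_i$ into the $k/\ways$ buckets is a balls-and-bins experiment with $m=k'$ balls, $n=k/\ways$ bins, and average load $m/n=(1-\delta)\ways$. I would invoke \Cref{thm:balls-and-bins-lower-bound} with $\epsilon:=2\delta(1-\delta)^{-1}$; the hypothesis $(1-\delta)^{-1}\le k/(2\ways)$ of \Cref{thm:tradeoff} is exactly $k'\ge 2\ways$, which yields $\epsilon\le n-2$, so the lemma applies. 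A short computation gives $2\epsilon^2(m/n)=8\delta^2(1-\delta)^{-1}\ways$, so by the definition of $s$ we get $\exp(-2\epsilon^2 m/n)=16/s$ and hence $f(n,m,\epsilon)=16k/(\ways s)$. Therefore, with probability at least $1-\exp(-f/32)=1-\exp(-k/(2\ways s))$, more than $f/8=2k/(\ways s)$ buckets are $\epsilon(m/n)$-saturated; such a bucket holds at least $(1+\epsilon)(m/n)=(1+\delta)\ways>\ways$ items of $S_i$, i.e.\ at least $\ways+1$ of them, since loads are integral.

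Conditioning on this event, I would then bound the evictions inside a single overloaded bucket, say with $\ell\ge\ways+1$ items of $S_i$ hashed to it. During iteration $i$ the requests routed to this bucket form exactly $t$ consecutive passes, each accessing all $\ell$ (distinct) items; at the start of the first pass none of the $\ell$ items is resident, because every item placed in the bucket so far belongs to $S_1\cup\dots\cup S_{i-1}$, which is disjoint from $S_i$. Using that $\calA_\ways$ is lazy --- as are all the conservative algorithms of interest, e.g.\ LRU, FIFO, LFU and clock --- the bucket fills to exactly $\ways$ items partway through the first pass and stays full thereafter; so in \emph{every} one of the $t$ passes the last $\ell-\ways\ge 1$ of the accessed items are misses on a full bucket, each triggering an eviction. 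Hence the bucket incurs at least $t$ evictions over iteration~$i$, and summing over the more than $2k/(\ways s)$ overloaded buckets yields more than $(2k/(\ways s))\cdot t=(2k/(\ways s))\cdot c\ways s^2=2csk$ evictions during iteration~$i$, which is the claim.

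The part that needs the most care is the calibration of constants: $\epsilon$ must be large enough that $\epsilon(m/n)$-saturation forces strictly more than $\ways$ items into a bucket, yet small enough that \Cref{thm:balls-and-bins-lower-bound} still delivers $\Omega(k/(\ways s))$ such buckets with failure probability no larger than $\exp(-k/(2\ways s))$ --- and the definition $s=16\exp(8(1-\delta)^{-1}\delta^2\ways)$ is precisely what makes the choice $\epsilon=2\delta(1-\delta)^{-1}$ hit this sweet spot. The only other subtlety is the warm-up of the first pass, which I handle above by observing that once the bucket fills (partway through that pass) its remaining $\ell-\ways$ misses already evict. I do not anticipate any genuine obstacle beyond this accounting; the conceptual weight rests entirely on \Cref{thm:balls-and-bins-lower-bound} together with the adversary being oblivious to the hash function.
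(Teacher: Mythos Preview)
Your proposal is correct and follows essentially the same approach as the paper: the same balls-and-bins modeling with $m=(1-\delta)k$, $n=k/\ways$, the same choice $\epsilon=2\delta/(1-\delta)$, the same verification that $\epsilon\le n-2$ via the hypothesis $(1-\delta)^{-1}\le k/(2\ways)$, the same computation $f(n,m,\epsilon)=16k/(\ways s)$, and the same conclusion via \Cref{thm:balls-and-bins-lower-bound} that at least $t\cdot 2k/(\ways s)=2csk$ evictions occur. One small imprecision: your phrase ``the last $\ell-\ways$ of the accessed items are misses on a full bucket'' is not literally true after the first pass (which items miss depends on the eviction policy, and the bucket may already be full at the start of pass~1 with items from earlier $S_j$), but your count---at least $\ell-\ways\ge 1$ evictions per pass once the bucket is full---is correct regardless, and the paper simply states this pigeonhole fact directly (``regardless of the state of $\local{\calA}_k$ at the beginning of the step'').
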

\begin{proof}
Fix an execution of step 3-a-i. We model the accesses to items in $S_i$ as a balls-and-bins game, in the usual way: each ball represents an item, and each bin represents a bucket. There are $m = (1-\delta)k$ balls thrown into $n = k / \ways$ bins; these are at least $2$ bins, since $\ways \leq k/2$. Let $h := m/n = (1-\delta)\ways$. Let $\epsilon := 2\delta/(1-\delta)$. Notice that 
$\epsilon \leq n - 2$, because this is equivalent to the assumption $(1-\delta)^{-1} \leq k/(2\ways)$. We have
\[
h + \epsilon h = (1 + \epsilon)h = (1 + \epsilon)(1-\delta)\ways > \ways.
\]
In the last inequality we have used the choice of $\epsilon$.
Thus, if a bin is $\epsilon h$-saturated, its associated bucket incurred at least one eviction, regardless of the state of $\local{\calA}_k$ at the beginning of the step. Let $X$ be the number of $\epsilon h$-saturated bins. The number of evictions in $\local{\calA}_k$ during this step is, hence, at least $X$.

Crucially, for a fixed iteration of the loop of step 3 (the outer loop) the random bin choices in the balls-and-bins game of \emph{every} iteration of the loop of step 3-a (the inner loop) are the same. Thus, the number of evictions during a fixed iteration of the outer loop is at least $tX$.

Notice that
\begin{align*}
f(n, m, \epsilon) &= f\left(\frac{k}{\ways}, (1-\delta)k, \frac{2\delta}{1-\delta}\right)\\
&= \frac{k}{\ways} \exp\left(-2 \left(\frac{2\delta}{1-\delta}\right)^2 (1-\delta)\ways\right)\\
&= \frac{k}{\ways} \exp\left(-\frac{8\delta^2 \ways}{1-\delta}\right) = \frac{16 k}{\ways s}.
\end{align*}
\noindent
By \Cref{thm:balls-and-bins-lower-bound}, $X \geq f(n, m, \epsilon)/8 = 2k/(\ways s)$ with probability at least $1 - \exp(-f(n, m, \epsilon)/32) = 1 - \exp(-k/(2\ways s))$. The proof concludes by noting that $t (2k/(\ways s)) = 2csk$.
\end{proof}

Hence, the cost of each iteration of the loop of step 3 is lower bounded by
\begin{align*}
    2csk &\geq csk' + \Omega(k)  = c \, C(\calA_{k'}, \sigma) + \Omega(k),
\end{align*}
with probability $1 - \exp(-k/(2 \ways s))$. Because the $S_i$'s are disjoint, and the hash function is fully independent, this event takes place independently across all iterations of the loop of step 3. Hence, the probability that some iteration of step 3 has cost at least $c \, C(\calA_{k'}, \sigma) + \Omega(k)$ is at least
\[
1 - \exp(-k/(2\ways s))^{s} = 1 - \exp(-k/(2\ways)) \geq 1 - e^{-1} \approx 0.63.
\]
Therefore, $C(\local{\calA}_k, \sigma) \geq c \, C(\calA_{k'}, \sigma) + \omega(1)$ with probability at least $0.63$.

\section{From Long Sequences To All Sequences}
\label{sec:long}

The proof of \Cref{thm:tradeoff} shows that if we wish $\local{\calA}$ to remain competitive on all request sequences, it must be immune to adversarial sequences that exploit repetition. A natural way to attempt this is by providing the algorithm with the ability to draw new hash functions, or \defn{rehashing}. Then, whenever the algorithm deems the current hash function compromised, it transitions to a new one.\footnote{Interestingly, rehashing has been used in recent cache designs to protect from timing attacks that exploit conflict misses~\cite{Qureshi2018Rehashing,Qureshi2019Rehashing}.}

We will rehash every $\poly(k)$ \emph{cache misses}. Notice that running periodic rehashes after some fixed number of accesses (independent of the number of cache misses) is detrimental, because it gives the adversary infinitely many opportunities to exploit load imbalances in the allocations. In fact, the adversary can now simply fix a single set of $k' = (1-\delta)k$ items, and sequentially access every element in the set, ad infinitum.

A downside of rehashing is that it can potentially hurt the paging cost, because it ``cools down'' the buckets. Upon a rehash, buckets must abruptly switch to caching a different set of items, of which they have limited (or no) information about past accesses. When the paging decisions of $\calA$ are largely dependent on the full history of accesses (e.g., in LFU), bucket instances become impaired after a rehash, and may never catch up.

A paging algorithm that does not suffer from this problem is LRU. This is because LRU depends only on the most recent access of the items, so as soon as an item $x$ is accessed for the first time after a rehash, the new bucket assigned to $x$ has the most up-to–date access information about $x$. In the rest of this section we focus on $\calA = \LRU$.

When a new hash function is drawn, the items currently hashed with the old function must be remapped. Consider the following simple method, that we call \defn{full flushing}: When a rehash is triggered, flush the cache and draw a new hash function. If those items are accessed again, they will be placed in their new buckets.

\begin{tbox}
\textbf{Algorithm:} $\ways$-way set-associative LRU with full flushing and cache size $k$

\begin{itemize}
    \item Augment $\ways$-way set-associative LRU with cache size $k$ as follows:
    \begin{itemize}
        \item Keep a counter of the number of cache misses.
        \item Let $d \geq 2$ be an arbitrary constant. Every $k^d$ misses evict all the items in cache, draw a new hash function, and reset the counter.
    \end{itemize}
\end{itemize}
\end{tbox}

We denote this algorithm $\fflocal{\LRU}$.

\begin{theorem}
\label{thm:all-sequences}
Suppose $\ways = \omega(\log k)$. Then, $\ways$-way set-associative LRU with full flushing is $(1+1/\poly(k))$-competitive with LRU w.h.p. using $(1 + \Theta(\sqrt{\log(k)/\ways}))$-resource augmentation on \emph{all} request sequences.
\end{theorem}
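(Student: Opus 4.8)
The plan is to chop an arbitrary request sequence $\sigma$ into \emph{epochs}, where an epoch is a maximal block of requests between two consecutive rehashes, together with the special ``warm-up'' portion at the start of each epoch. By construction each epoch contains exactly $k^d$ cache misses of $\fflocal{\LRU}_k$, except possibly the last one, which contains at most $k^d$. Since the rehash draws a fresh fully random hash function, the bucket assignments within one epoch are independent of those in every other epoch, and also independent of the (oblivious) adversary's choices. So it suffices to bound, within a single epoch, the cost of $\fflocal{\LRU}_k$ against the cost of fully associative $\LRU_{k'}$ restricted to that epoch, and then sum over epochs.

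First I would handle a single epoch. Within an epoch the algorithm behaves exactly like plain $\ways$-way set-associative LRU started from an empty cache, run on the sub-sequence of requests in that epoch. I would like to invoke \Cref{thm:monotone-local} with $\delta = \Theta(\sqrt{\log(k)/\ways})$ (the same choice as in \Cref{prp:super-logarithmic}), which gives $1$-competitiveness of $\local{\LRU}_k$ against $\LRU_{k'}$, with probability $1 - 1/\poly(k)$, on sequences of length up to $O(\exp(\delta^2\ways/24)) = \poly(k)$. The subtlety is that an epoch is delimited by a \emph{number of misses}, not a number of requests, so a priori an epoch could be arbitrarily long in the number of requests. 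Here I would use that LRU (being conservative, hence a fortiori) cannot have a long run of all-hits forever on a changing request stream without... more precisely: the key observation is that the relevant quantity is not the raw length of the epoch but the number of \emph{distinct} items, or better, the behavior of $\LRU_{k'}$ on the epoch. I would instead re-examine the proof of \Cref{thm:monotone-local}: it bounds $C(\local{\LRU}_k,\sigma) \le C(\LRU_{k'},\sigma) + |\sigma|\exp(-\delta^2\ways/24)$ via Markov on the number of bad evictions. The cleaner route is to bound bad evictions directly by the number of misses $k^d$ in the epoch: each bad eviction is triggered by an access that is a miss in $\fflocal{\LRU}_k$ (an eviction only happens on a miss, for a lazy algorithm), so the number of bad evictions in an epoch is at most $k^d$; and \Cref{clm:unstale-eviction-probability} bounds each by $\exp(-\delta^2\ways/12)$. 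Summing over the at most $k^d$ miss-steps in the epoch, $\expect{B_{\text{epoch}}} \le k^d \exp(-\delta^2\ways/12)$, and by choosing $d$ and the constant in $\delta$ appropriately this is $1/\poly(k)$; a Markov bound then gives $B_{\text{epoch}} = 0$ with probability $1 - 1/\poly(k)$. When $B_{\text{epoch}} = 0$, \Cref{lem:unstale-evictions} gives $C(\fflocal{\LRU}_k, \sigma_{\text{epoch}}) \le C(\LRU_{k'}, \sigma_{\text{epoch}}) + O(1)$, i.e., the epoch is served $1$-competitively.

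Next I would account for the overhead of rehashing and stitch the epochs together. Flushing at the start of an epoch can cost extra misses: items that $\LRU_{k'}$ keeps across the rehash boundary must be re-fetched by $\fflocal{\LRU}_k$, costing up to $k'$ extra misses per epoch (one per cached item), plus the at most $k$ ``cold'' misses $\LRU_{k'}$ itself pays on the epoch if we charge its cost epoch-by-epoch. So per epoch, $C(\fflocal{\LRU}_k, \sigma_{\text{epoch}}) \le C(\LRU_{k'}, \sigma_{\text{epoch}}) + O(k)$. Now I sum over all epochs. The number of epochs is $\lceil C(\fflocal{\LRU}_k,\sigma)/k^d\rceil$, and each contributes an additive $O(k)$ slack, so
\[
C(\fflocal{\LRU}_k,\sigma) \le C(\LRU_{k'},\sigma) + O(k)\cdot\frac{C(\fflocal{\LRU}_k,\sigma)}{k^d} + O(k).
\]
Since $d \ge 2$, the middle term is $O(k^{-1}) \cdot C(\fflocal{\LRU}_k,\sigma) = o(1)\cdot C(\fflocal{\LRU}_k,\sigma)$; rearranging, $C(\fflocal{\LRU}_k,\sigma)(1 - O(k^{1-d})) \le C(\LRU_{k'},\sigma) + O(k)$, hence $C(\fflocal{\LRU}_k,\sigma) \le (1 + 1/\poly(k))\, C(\LRU_{k'},\sigma) + O(k)$. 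Finally, by a union bound over all epochs — and here I must be careful, since there can be super-polynomially many of them — the ``$B_{\text{epoch}}=0$'' event must hold \emph{simultaneously} in every epoch, which a naive union bound cannot guarantee. I expect this to be the main obstacle. The fix is that the per-epoch failure probability $1/\poly(k)$ must be beaten down by choosing $d$ large and the constant in $\delta$ large, and then absorbing the (at most $\ell$-many) epochs — but $\ell$ is arbitrary. The right resolution: do not ask every epoch to be clean; instead note that a ``dirty'' epoch (with $B_{\text{epoch}}>0$) still has cost at most $|\sigma_{\text{epoch}}|$ in the worst case, but in fact an epoch has exactly $k^d$ misses by definition, so even a dirty epoch costs only $k^d$ misses of $\fflocal{\LRU}_k$ and at least $\Omega(1)$ (typically $\Omega(k^{d}/\poly)$) misses of $\LRU_{k'}$ on that block once the block is long enough to force $\LRU_{k'}$ to miss; carefully, since each epoch of $\fflocal{\LRU}_k$ has $k^d$ misses and $\LRU_{k'}$ on the same block has at least $k^d/\poly(k)$ misses w.h.p.\ by the clean-epoch analysis applied in reverse, the competitive ratio degrades by at most $\poly(k)$ on a dirty epoch, and dirty epochs occur a $1/\poly(k)$ fraction of the time in expectation — so in expectation, and then w.h.p.\ by a Chernoff bound over the (independent across epochs!) cleanliness indicators, the total extra cost from dirty epochs is a $1/\poly(k)$ fraction of the total, preserving $(1+1/\poly(k))$-competitiveness overall. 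Independence across epochs (from the fresh hash function) is exactly what makes this Chernoff step legitimate.
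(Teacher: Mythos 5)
Your epoch decomposition---cutting $\sigma$ at the rehash boundaries, i.e., every $k^d$ misses of $\fflocal{\LRU}_k$---is where the argument breaks, and you correctly sense the danger but misdiagnose it as a union-bound-over-epochs issue. The real gap is one level down: your bound $\expect{B_{\text{epoch}}} \leq k^d \exp(-\delta^2\ways/12)$ does not follow from \Cref{clm:unstale-eviction-probability}. That claim bounds $\prob{B_i = 1}$ for a \emph{fixed time step} $i$; summing it over the (random) set of miss-steps is not legitimate, because ``which steps are misses'' is correlated with the hash function, and $\prob{\text{the $j$-th miss is a bad eviction}}$ is a different (stopping-time) quantity that the claim does not control. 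Summing over all time steps in the epoch instead gives $|\sigma_{\text{epoch}}|\exp(-\delta^2\ways/12)$, and an epoch delimited by a miss count can contain arbitrarily many requests, so this is vacuous. This is precisely the obstruction the paper's proof is built to avoid: it partitions $\sigma$ into \emph{phases} defined by the misses of the fully associative $\LRU_{k'}$, so that during an entire phase the comparator's cache content is a single frozen deterministic set $S_r$, and the event ``some regular bad eviction occurs anywhere in this (arbitrarily long) subphase'' collapses to the single balls-and-bins event ``$S_r$ overflows a bucket,'' with no summation over time at all. That decomposition also makes the number of phases equal to $C(\LRU_{k'},\sigma)$, so the expected bad-eviction count is charged directly against the comparator's cost rather than against the sequence length or against $C(\fflocal{\LRU}_k,\sigma)$; combined with the observation that every \emph{middle} subphase of a phase is forced to contain a bad eviction (pigeonhole: $k^d$ misses on at most $|S_r|\le k$ items), the independent hash functions give geometric decay in the number of bad subphases per phase and hence $\expect{R_r}=O(k^{-2c})$ per phase. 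Your decomposition has no deterministic set to hash and no way to charge against $C(\LRU_{k'},\sigma)$, so both the per-epoch probability bound and the final stitching are unsupported.

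Two smaller problems. First, even per fixed time step, you cannot invoke \Cref{clm:unstale-eviction-probability} verbatim after a rehash: its proof routes through stability applied to $\tau[X]$ versus $\tau$, but a flushed bucket has not seen the full history $\tau[X]$, so the stability implication does not apply (the paper flags exactly this and substitutes an LRU-specific recency argument: if the evicted $x$ is in $\LRU_{k'}$, every more-recently-used item in its bucket is too). Second, your ``dirty epoch'' repair asserts that $\LRU_{k'}$ incurs $k^d/\poly(k)$ misses on any block where $\fflocal{\LRU}_k$ incurs $k^d$; this is false---the adversary of \Cref{thm:tradeoff} produces blocks with $k^d$ set-associative misses and only $O(k)$ fully associative misses. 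Your alternative accounting (charging $k^d$ per dirty epoch against $C(\fflocal{\LRU}_k,\sigma)$ and rearranging) would survive, but only if the per-epoch dirtiness probability were actually $1/\poly(k)$, which returns you to the first gap.
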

\begin{proof}
For an arbitrary constant $c > d$, let $\delta := \sqrt{36 c \ln(k) / \ways}$. Let $k' := (1 - \delta)k$. We will compare the paging cost of $\fflocal{\LRU}_k$ with that of $\LRU_{k'}$.

Partition $\sigma$ into phases as follows. Each phase is a maximal sequence of time steps in which LRU$_{k'}$ incurs exactly one cache miss. (The only miss takes place in the first access of the phase.) The first phase starts at the beginning of $\sigma$, and all other phases starts right after the previous one. Let $\pi_r$ be the $r$-th phase. We further partition each $\pi_r$ into subphases. Each subphase starts either at the beginning of $\pi_r$ or when a rehash is triggered in $\fflocal{\LRU}_k$, and extends until right before the next rehash or the end of $\pi_r$, whatever comes first. We denote $\pi_r^s$ the $s$-th subphase within $\pi_r$.

Let $i_r$ be the time step when $\pi_r$ begins, and let $S_r := \LRU_{k'}(i_r)$. Note that because LRU$_{k'}$ only misses on the very first access of $\pi_r$, the set of items cached by LRU$_{k'}$ is invariantly $S_r$ during the phase.

Divide the bad evictions of $\fflocal{\LRU}_k$ into two classes: those that are due to the flushes, and the rest. We call the latter ones \defn{regular bad evictions}. Let $F$ be the number of bad evictions due to flushes, and let $R$ be the number of regular bad evictions. By \Cref{lem:unstale-evictions},
\begin{equation}
\label{eq:long-sequences-1}
C(\fflocal{\LRU}_k, \sigma) \leq C(\LRU_{k'}, \sigma) + R + F.
\end{equation}
Since the number of rehashes is at most a $1/k^d$-fraction of the total number of cache misses of $\local{LRU}_k$, and every rehash causes $k$ evictions,
\begin{equation}
\label{eq:long-sequences-2}
F \leq k^{-d + 1} C(\fflocal{\LRU}_k, \sigma).
\end{equation}
Combining \Cref{eq:long-sequences-1} and \Cref{eq:long-sequences-2},
\begin{align*}
C(\fflocal{\LRU}_k, \sigma) &\leq \frac{1}{1 - k^{-d + 1}}(C(\LRU_{k'}, \sigma) + R)\\
&\leq (1 + 2k^{-d + 1})(C(\LRU_{k'}, \sigma) + R) \numberthis \label{eq:long-sequences}.
\end{align*}

\begin{claim}
\label{clm:unstale-evictions-phase}
We have $\prob{\text{There is a regular bad eviction during $\pi_r^s$}} \leq k^{-3c}$.\footnote{This claim is similar to \Cref{clm:unstale-eviction-probability}, but it differs in two crucial ways: it deals with rehashes, and it targets a whole subphase (which can be arbitrarily long) as opposed to a single point in time.}
\end{claim}
\begin{proof}
Suppose a regular bad eviction of some item $x$ in $\local{\LRU}_k$ occurs during $\pi_r^s$, say at time step $j$. Then, $x \in \LRU_{k'}(j)$. By definition of LRU, every other item in $x$'s bucket has been accessed more recently than $x$. This implies that all of those items are in LRU$_{k'}(j)$ as well.\footnote{Notice that this part of the proof would fail for LRU-$K$ (with $K \geq 2$) and LFU, because rehashes clear the access history in the set-associative cache.} Thus, there are $\ways+1$ items in LRU$_{k'}(j)$ that hash to the same bucket. Furthermore, since the set of cached items in LRU$_{k'}$ doesn't change during a phase, we have LRU$_{k'}(j) = S_r$, so, in fact, all the items in $x$'s bucket are in $S_r$. Hence,
\begin{align*}
\Pr[\text{There} & \text{ is a regular bad eviction during $\pi_r^s$}]\\
&\leq \Pr[\text{There are $\ways + 1$ items in $S_r$ that hash to the same bucket}].
\numberthis \label{eq:pr-bad-eviction-subphase}
\end{align*}

We upper bound this probability using a balls-and-bins analysis. Crucially, $S_r$ is a deterministic set. The choice of $\delta$ satisfies the hypothesis of \Cref{lem:balls-and-bins}, so the probability that $S_r$ causes a bucket overflow is at most $\exp(-\delta^2 \ways / 12) = k^{-3c}$, as desired.
\end{proof}

\begin{claim}
\label{clm:subphases-rehashing}
If there is no regular bad eviction during $\pi_r^s$, then it's either the first subphase or the last one in $\pi_r$.
\end{claim}
\begin{proof}
We prove the contrapositive. Suppose $\pi_r^s$ is neither the first nor the last subphase. Then, $\pi_r^s$ begins with a rehash, and ends right before the following rehash. Thus, there are $k^d$ cache misses in $\fflocal{\LRU}_k$ during $\pi_r^s$. Because only items from $S_r$ are requested during $\pi_r$, and $\fflocal{\LRU}_k$ misses more than $k > |S_r|$ times during $\pi_r^s$, there exists some item $x \in S_r$ that was evicted by $\fflocal{\LRU}_k$ at least once during $\pi_r^s$.
\end{proof}

Let $R_r$ be the number of regular bad evictions during $\pi_r$.

\begin{claim}
\label{clm:unstale-evictions-subphase}
We have $\expect{R_r} \leq O(k^{-2c})$.
\end{claim}
\begin{proof}
Since every subphase can have at most $k^d$ misses, it can also have at most $k^d$ evictions. In particular, it can have at most $k^d$ regular bad evictions. Thus,
\begin{align*}
\prob{R_r \geq \ell} \leq \Pr[\text{At least $\lceil\ell / k^d\rceil$ subphases in $\pi_r$ have regular bad evictions}].
\end{align*}
If $\lceil\ell/k^d\rceil$ subphases in $\pi_r$ have regular bad evictions, then, by \Cref{clm:subphases-rehashing}, either $\pi_r^1, \pi_r^2, \dots, \pi_r^{\ceiling{\ell/k^d}}$ all have regular bad evictions, or $\pi_r^2, \pi_r^3, \dots, \pi_r^{\ceiling{\ell/k^d} + 1}$ all do, because all intermediate subphases have regular bad evictions. Thus, by union bound,
\begin{align*}
\prob{R_r \geq \ell} &\leq \Pr\Big[\text{$\pi_r^1, \pi_r^2, \dots, \pi_r^{\ceiling{\ell/k^d}}$ all have regular bad evictions}\Big]\\
&\quad + \Pr\Big[\text{$\pi_r^2, \pi_r^3, \dots, \pi_r^{\ceiling{\ell/k^d} + 1}$ all have regular bad evictions}\Big] \numberthis \label{eq:subphases}\\
&\leq 2(k^{-3c})^{\ceiling{\ell/k^d}}. \tag{because subphases use independent hash functions, and by \Cref{clm:unstale-evictions-phase}}
\end{align*}
\noindent
Finally,
\begin{align*}
\expect{R_r} &= \sum_{\ell \geq 1} \prob{R_r \geq \ell} = \sum_{\ell \geq 1} 2(k^{-3c})^{\ceiling{\ell/k^d}}\\
    &= 2k^d \sum_{a \geq 1} (k^{-3c})^a \leq k^{d} O(k^{-3c})\\
    &= O(k^{-2c}) \tag{as $c > d$}.
\end{align*}
\end{proof}

Observe that the number of phases is exactly $C(\LRU_{k'}, \sigma)$. By \Cref{clm:unstale-evictions-subphase} and linearity of expectation, $\expect{R} \leq C(\LRU_{k'}, \sigma) \, O(k^{-2c})$. By Markov's inequality, $R \geq C(\LRU_{k'}, \sigma) \, O(k^{-c})$ with probability $O(k^{-c})$. Plugging this into \Cref{eq:long-sequences}, we get that
\begin{align*}
C(\fflocal{\LRU}_k, \sigma) &\leq (1 + 2k^{-d + 1})(1 + O(k^{-c})) \, C(\LRU_{k'}, \sigma)\\
&\leq (1 + O(k^{-d + 1})) \, C(\LRU_{k'}, \sigma) \tag{as $c > d$}
\end{align*}
with probability $1 - O(k^{-c})$. This concludes the proof.
\end{proof}

\subsection{Incremental Flushing}

The downside of full flushing is that it brings the execution to a halt and performs up to $k$ uninterrupted writes to slow memory. Alternatively, we can deamortize the flushing, spreading out the evictions over time. We call this method \defn{incremental flushing}, and it works as follows: When a rehash is triggered, we draw a new hash function, and immediately resume the execution. At $k$ arbitrary points in time before the next rehash, pick an arbitrary cached item that is still not remapped (i.e., it was cached when the rehash began, and it is still inhabiting its old bucket), and evict it. If during the course of the execution a non-remapped item is hit, we insert it into the new bucket (potentially causing an eviction, if the new bucket is full), and command the LRU instance of the old bucket to delete it. The rehash concludes when there are no more items mapped with the old function; at that point we discard the old hash function. Importantly, because every rehash finishes before the next one begins, there are at most two hash functions in use at all times. On every access, we search the requested item using both of them.

With incremental flushing, however, two complications arise in the competitive analysis from \Cref{thm:all-sequences}, stemming from the fact that there are two hash functions in use simultaneously. The first one is that because the indexing function is changing throughout a subphase, the hash function in the upper bound from \Cref{eq:pr-bad-eviction-subphase} depends on the time at which the bad eviction occurs. Fortunately, this does not require a fix, because no matter the particular moment in time, the changing hash function will always be fully random, which is all we need to further upper bound the probability via the balls-and-bins process.

The second issue is that the random placements \emph{before} a rehash begins are \emph{not} independent of the placements \emph{after} the rehash begins. This means that the random choices in the subphases are no longer independent, and the step following \Cref{eq:subphases} does not hold. There is a simple fix for this: Because every hash function only lives during at most two consecutive subphases, we can bound the probability that a bad eviction occurs on \emph{every other} subphase. Thus, the bound becomes
\[
\prob{R_r \geq \ell} \leq 2 (k^{-3c})^{\ceiling{\ell/k^d} / 2},
\]
which only changes the constants in the end result. The rest of the proof holds without changes.

We denote $\ways$-way set-associative LRU with incremental flushing as $\iflocal{\LRU}$.

\begin{proposition}
\label{prp:all-sequences}
\Cref{thm:all-sequences} holds for $\iflocal{\LRU}$.
\end{proposition}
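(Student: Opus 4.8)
The plan is to re-run the proof of \Cref{thm:all-sequences} almost verbatim for $\iflocal{\LRU}$, keeping the same decomposition of $\sigma$ into phases $\pi_r$ (maximal runs in which $\LRU_{k'}$ incurs a single miss) and subphases $\pi_r^s$ (delimited by consecutive rehashes), and the same accounting of bad evictions. Writing $F$ for the number of bad evictions produced by the incremental flush and $R$ for the number of regular bad evictions, \Cref{lem:unstale-evictions} still yields $C(\iflocal{\LRU}_k, \sigma) \le C(\LRU_{k'}, \sigma) + R + F$; here I would invoke the lemma in the general form that permits several evictions in a single time step, since incremental flushing evicts non-remapped items at arbitrary moments, and I would note that $\iflocal{\LRU}$ fetches a page only when it is requested (a non-remapped item is re-inserted into its new bucket only on a hit). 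The bound $F \le k^{-d+1} C(\iflocal{\LRU}_k, \sigma)$ of \Cref{eq:long-sequences-2} is unchanged, because every rehash removes at most $k$ items and is triggered once per $k^d$ misses, so \Cref{eq:long-sequences} goes through. Hence the only thing to redo is the estimate $\expect{R_r} = O(k^{-2c})$ of \Cref{clm:unstale-evictions-subphase}, along with its ingredients \Cref{clm:unstale-evictions-phase} and \Cref{clm:subphases-rehashing}.

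Two steps of that argument are sensitive to the two hash functions that coexist under incremental flushing, and they are precisely the ones discussed above. First, within a subphase the function governing $x$'s bucket can change between the triggering rehash and the actual eviction of $x$; but whichever function is in force at the moment of a regular bad eviction is fully random, which is all \Cref{eq:pr-bad-eviction-subphase} and \Cref{lem:balls-and-bins} require, so \Cref{clm:unstale-evictions-phase} survives unchanged. Second, since each hash function lives during at most two consecutive subphases, the events ``$\pi_r^s$ contains a regular bad eviction'' are no longer independent across consecutive $s$, so the inference following \Cref{eq:subphases} breaks. The fix I would use is to take the union bound over \emph{alternate} subphases, which genuinely use independent hash functions: this replaces the exponent $\ceiling{\ell/k^d}$ by $\ceiling{\ell/k^d}/2$, giving $\prob{R_r \ge \ell} \le 2(k^{-3c})^{\ceiling{\ell/k^d}/2}$, and the same geometric sum then bounds $\expect{R_r}$ by $O(k^{-2c})$ up to constants. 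Summing over the $C(\LRU_{k'},\sigma)$ phases, applying Markov, and substituting into \Cref{eq:long-sequences} reproduces $(1 + O(k^{-d+1}))$-competitiveness w.h.p., exactly as in \Cref{thm:all-sequences}.

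The hard part will be re-checking \Cref{clm:subphases-rehashing} in the incremental setting: now the at most $k$ forced remapping evictions are spread throughout a subphase rather than concentrated at its boundaries, so I must be sure that an ``interior'' subphase (one that both begins and ends with a rehash) contains a \emph{regular} bad eviction, not merely some eviction. The argument I would give is that such a subphase has $k^d$ misses, all on items of $S_r$; since $|S_r| < k$, some item of $S_r$ misses at least $k^{d-1} \ge 2$ times, and after its first access in the subphase it is already remapped, so every later eviction of it is an ordinary LRU eviction in a new-function bucket --- hence a regular bad eviction, as that item lies in $\LRU_{k'}(j) = S_r$. Everything else --- the $F$-accounting, the balls-and-bins estimates, and the final arithmetic --- is routine bookkeeping identical to the full-flushing proof, so I would not belabor it.
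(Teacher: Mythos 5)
Your proposal is correct and follows essentially the same route as the paper: it keeps the phase/subphase decomposition and the $F$/$R$ accounting, and it identifies exactly the two complications the paper singles out (the time-varying but still fully random hash function within a subphase, and the loss of independence between consecutive subphases, fixed by bounding over alternate subphases to get the exponent $\ceiling{\ell/k^d}/2$). Your additional re-verification of \Cref{clm:subphases-rehashing} in the incremental setting --- arguing that an item of $S_r$ missing twice in an interior subphase is already remapped after its first access, so the intervening eviction is regular --- is a sound check of a detail the paper leaves implicit.
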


\subsection{Competitive Analysis Against the Optimal Offline Algorithm}

It is well known that LRU is $(1 + 1/(r-1))$-competitive against the offline optimal fully associative paging algorithm OPT, using $r$-resource augmentation~\cite{Sleator1985Paging}. Combining this with \Cref{thm:all-sequences} and \Cref{prp:all-sequences}, we can compare set-associative LRU caches with OPT.

\begin{proposition}
Suppose $\ways = \omega(\log k)$. For all $r > 1$, $\fflocal{\LRU}$ and $\iflocal{\LRU}$ are $(1 + 1/(r-1) + o(1))$-competitive with \OPT w.h.p. using $(1+o(1))r$-resource augmentation on all request sequences.
\end{proposition}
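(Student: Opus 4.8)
The plan is to \emph{compose} two guarantees: the competitiveness of $\fflocal{\LRU}$ (resp.\ $\iflocal{\LRU}$) against fully associative $\LRU$ from \Cref{thm:all-sequences} (resp.\ \Cref{prp:all-sequences}), and the classical resource-augmented competitiveness recalled just above, namely that $\LRU_m$ is $(1+1/(r-1))$-competitive with $\OPT_{m/r}$ using $r$-resource augmentation~\cite{Sleator1985Paging}, for \emph{every} request sequence.

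Fix a constant $r>1$. \Cref{thm:all-sequences} provides a gap parameter $\delta=\Theta(\sqrt{\log(k)/\ways})$, which is $o(1)$ since $\ways=\omega(\log k)$, such that $\fflocal{\LRU}_k$ is $(1+1/\poly(k))$-competitive with $\LRU_{k'}$ w.h.p., where $k'=(1-\delta)k$, on all request sequences. I would then apply the $\LRU$-vs-$\OPT$ bound with competitor size $k'':=k'/r=(1-\delta)k/r$: for every $\sigma$, $C(\LRU_{k'},\sigma)\le (1+1/(r-1))\,C(\OPT_{k''},\sigma)+O(1)$. Chaining the two inequalities — the first holds w.h.p.\ for each fixed $\sigma$, the second holds deterministically for each fixed $\sigma$, so the composition holds w.h.p.\ for each fixed $\sigma$, exactly as the definition of probabilistic competitiveness requires — and absorbing the additive constants, I obtain, w.h.p.,
\[
C(\fflocal{\LRU}_k,\sigma)\ \le\ (1+1/\poly(k))(1+1/(r-1))\,C(\OPT_{k''},\sigma)+O(1)\ =\ \big(1+\tfrac{1}{r-1}+o(1)\big)\,C(\OPT_{k''},\sigma)+O(1),
\]
where the last step uses that $r$ is constant, so the multiplicative $1/\poly(k)$ slack contributes only an additive $o(1)$ to the competitive ratio. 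Finally, the overall resource augmentation is $k/k''=r/(1-\delta)=r(1+\Theta(\delta))=(1+o(1))r$, again because $\delta=o(1)$. This proves the claim for $\fflocal{\LRU}$; replacing the appeal to \Cref{thm:all-sequences} by \Cref{prp:all-sequences} yields the identical conclusion for $\iflocal{\LRU}$.

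There is essentially no obstacle beyond careful bookkeeping: one must verify that $\delta$ lies in the admissible range of \Cref{thm:all-sequences} (guaranteed by $\ways=\omega(\log k)$), that the two $o(1)$ terms — one in the competitive ratio, one in the resource-augmentation factor — both trace back to $\delta=o(1)$ and $1/\poly(k)=o(1)$, and that the probabilistic and deterministic guarantees compose correctly per-$\sigma$ (they do, since the $\LRU$-vs-$\OPT$ inequality is a worst-case statement). The constancy of $r$ is what lets $(1+1/\poly(k))(1+1/(r-1))$ collapse to $1+1/(r-1)+o(1)$; allowing $r$ to vary with $k$ would additionally require the mild conditions $r-1=\omega(1/\poly(k))$ and $\delta=o(r-1)$.
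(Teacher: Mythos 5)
Your proposal is correct and follows essentially the same route as the paper: invoke \Cref{thm:all-sequences} (resp.\ \Cref{prp:all-sequences}) to compare $\fflocal{\LRU}_k$ with $\LRU_{k'}$ at $k'=(1-\delta)k$, then chain with the classical Sleator--Tarjan bound comparing $\LRU_{k'}$ to $\OPT_{k''}$ at $k''=k'/r$, and multiply the ratios and the resource-augmentation factors. The only additions beyond the paper's argument are your (correct) per-$\sigma$ remarks on composing a probabilistic bound with a deterministic one and the aside about non-constant $r$, neither of which changes the substance.
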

\begin{proof}
    Let $k' = k / (1 + \Theta(\sqrt{\log(k)/\ways)})$ and $k'' = k' / r$. We will compare $\fflocal{\LRU}_k$ with $\OPT_{k''}$. The resource augmentation is, thus, $k/k'' = (1 + \Theta(\sqrt{\log(k)/\ways})) r$. We have, with high probability,
    \begin{align*}
        C(\fflocal{\LRU}_k, \sigma) &\leq (1 + 1/\poly(k)) \, C(\LRU_{k'}, \sigma) \tag{by \Cref{thm:all-sequences}}\\
        &\leq (1 + 1/\poly(k))(1 + 1/(r-1)) \, C(\OPT_{k''}, \sigma) \tag{by the competitive analysis of \LRU}\\
        &= (1 + 1/(r-1) + 1/\poly(k)) \, C(\OPT_{k''}, \sigma).
    \end{align*}

    For $\iflocal{\LRU}$ the proof is, of course, analogous.
\end{proof}

\noindent
In particular, setting $r = 2$, we conclude that set-associative LRU caches are $(2 + o(1))$-competitive with \OPT w.h.p. using $(2 + o(1))$-resource augmentation on all request sequences.
\section{The Class of Stable Paging Algorithms}
\label{sec:stable-algorithms}

In this section we set out to understand the class of stable algorithms. We wish to answer the following questions: Which known paging algorithms are stable? Which ones are not? More generally, are there necessary and/or sufficient conditions for an algorithm to be stable?

Throughout this section we assume that items come from a totally ordered space. As usual, all the paging algorithms we consider are online.

\subsection{Stack Algorithms}

Stable algorithms are closely related to stack algorithms~\cite{Mattson1970StorageHierarchies}. A paging algorithm $\calA$ is a \defn{stack algorithm} if $\calA_{k}(\sigma) \subseteq \calA_{k + 1}(\sigma)$ for every $\sigma$ and $k$. The main property of stack algorithms is that they avoid Belady's anomaly,\footnote{A paging algorithm $\calA$ has \defn{Belady's anomaly}~\cite{Belady1969Anomaly} if, for some $\sigma$ and integers $a > b$, $C(\calA_a, \sigma) > C(\calA_b, \sigma)$. That is, unexpectedly, the smaller cache performs better than the larger cache on some request sequence.} because there cannot be a cache miss on a cache of size $k+1$ unless there is also a cache miss on a cache of size $k$. Thus, algorithms that experience Belady's anomaly, like FIFO and clock~\cite{Belady1969Anomaly}, are \emph{not} stack algorithms.

What kind of paging algorithms are stack? Roughly speaking, stack algorithms choose eviction victims based on a ranking that does not depend on $k$. This is formalized as follows. Let $\{\preceq_{\sigma}\}_{\sigma \in \universe^*}$ be a family of total orders over $\universe$. We say that a paging algorithm $\calA$ \defn{conforms to $\{\preceq_{\sigma}\}$} if, for every $\tau \in \universe^*$, $z \in \universe$ and $k$, if $x \in \calA_k(\tau) \setminus \evict(\calA_k, \tau, z)$ and $y \in \evict(\calA_k, \tau, z)$, then $x \preceq_{\tau z} y$. When $\calA$ is lazy, this means that the only evicted item $y$ is exactly the largest according to $\preceq_{\tau z}$, among the items in $\calA_k(\tau)$.

The following theorem exhibits the close relationship between stack algorithms and order families. Although this connection is known~\cite{Mattson1970StorageHierarchies}, we are not aware of any previous work with a full proof, so we include it here.

\begin{theorem}
\label{thm:folklore}
Let $\calA$ be a lazy paging algorithm. Then, $\calA$ is a stack algorithm if and only if it there exists a family $\{\preceq_{\sigma}\}$ such that $\calA$ conforms to it.
\end{theorem}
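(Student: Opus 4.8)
The plan is to prove both directions. For the ``if'' direction, suppose $\calA$ conforms to a family $\{\preceq_\sigma\}$; I want to show $\calA_k(\sigma) \subseteq \calA_{k+1}(\sigma)$ for every $\sigma$ and $k$. I would induct on the length of $\sigma$. The base case ($\sigma$ empty, or shorter than both cache sizes so nothing is evicted) is immediate. For the inductive step, write $\sigma = \tau z$ and assume $\calA_k(\tau) \subseteq \calA_{k+1}(\tau)$. Since $\calA$ is lazy, on request $z$ each cache fetches $z$ (if it is a miss) and evicts at most one item, and only if full. The key point is that because $\calA$ conforms to $\{\preceq_{\tau z}\}$, the item evicted from a full cache is precisely the $\preceq_{\tau z}$-maximum of its current contents (together with $z$, if $z$ was just fetched — one should be slightly careful whether $z$ itself can be the victim, but laziness plus the usual convention rules this out when $z$ is the just-requested item, or handles it uniformly). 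Now $\calA_{k+1}(\tau) \supseteq \calA_k(\tau)$ by induction, and $\calA_{k+1}(\tau)$ has one more element. After serving $z$: if $z \in \calA_k(\tau)$ it is a hit in both and no eviction occurs, so containment is preserved; if $z \notin \calA_k(\tau)$ but $z \in \calA_{k+1}(\tau)$, then the smaller cache possibly evicts its $\preceq_{\tau z}$-max $y_k$ while the larger cache has a hit — and $y_k \notin \calA_{k+1}(\tau z)$ only matters if $y_k \in \calA_{k+1}(\tau)$, which it is, but $y_k \in \calA_k(\tau) \subseteq \calA_{k+1}(\tau)$ remains in the larger cache, so $\calA_k(\tau z) = (\calA_k(\tau) \setminus \{y_k\}) \cup \{z\} \subseteq \calA_{k+1}(\tau) \subseteq \calA_{k+1}(\tau z)$; if $z$ is a miss in both, the larger cache evicts its $\preceq_{\tau z}$-max $y_{k+1}$ and the smaller evicts its $\preceq_{\tau z}$-max $y_k$ — and here I must argue that the surviving set $\calA_k(\tau) \setminus \{y_k\}$ (after adding $z$) is contained in $\calA_{k+1}(\tau) \setminus \{y_{k+1}\}$ (after adding $z$). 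The element that could break this is $y_{k+1}$ if $y_{k+1} \in \calA_k(\tau) \setminus \{y_k\}$; but $y_{k+1}$ is the $\preceq_{\tau z}$-max over the superset $\calA_{k+1}(\tau)$, and $y_k$ is the $\preceq_{\tau z}$-max over the subset $\calA_k(\tau)$, so $y_k \preceq_{\tau z} y_{k+1}$, which together with $y_{k+1} \in \calA_k(\tau)$ forces $y_{k+1} = y_k$ — contradiction. So containment is preserved in every case.

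For the ``only if'' direction, suppose $\calA$ is a lazy stack algorithm; I must manufacture an order family. The natural construction is: for each $\sigma$, define $\preceq_\sigma$ by ``sandwiching'' the nested caches. Concretely, the sets $\calA_1(\sigma) \subseteq \calA_2(\sigma) \subseteq \calA_3(\sigma) \subseteq \cdots$ form an increasing chain; since $\calA$ is lazy, $|\calA_k(\sigma)|$ grows by one each time a new item first appears, so consecutive caches differ by exactly one element (once $k$ exceeds the number of distinct items seen, the chain stabilizes, and I would rank all remaining items arbitrarily, say by the fixed total order on $\universe$, below everything in the stable set). Define $x \preceq_\sigma y$ iff the smallest cache size containing $x$ is at most the smallest cache size containing $y$ (breaking ties, which can only occur for items never cached, by the ambient order). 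Then I need to check that $\calA$ conforms to this family: given $\tau$, $z$, $k$, with $x$ surviving and $y$ evicted at step $\tau z$, show $x \preceq_{\tau z} y$. The evicted item $y$ was in $\calA_k(\tau)$ but is not in $\calA_k(\tau z)$. I want to show the smallest cache containing $y$ (w.r.t. $\preceq_{\tau z}$, i.e. after serving $\tau z$) is at least as large as that of $x$. Equivalently: for every $j \le k$, if $\calA_j(\tau z)$ contains $y$ then it contains $x$ — wait, I actually need the reverse comparison, so let me instead show: $y \notin \calA_j(\tau z)$ for all $j \le k$ that are ``below'' where $x$ lives. The clean way: since $y \notin \calA_k(\tau z)$ and the caches are nested, $y \notin \calA_j(\tau z)$ for all $j \le k$; meanwhile $x \in \calA_k(\tau z)$, so the least cache size containing $x$ is $\le k < $ least cache size containing $y$ (which is $> k$ or $\infty$), giving $x \prec_{\tau z} y$ as required. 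The subtlety is handling items never cached and the tie-breaking, but those items are precisely the ones that are evicted immediately or never fetched, and laziness makes their treatment consistent with the ambient order.

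The main obstacle I anticipate is not any single deep step but rather the bookkeeping around laziness and the role of the just-requested item $z$: in the ``if'' direction one must be careful that $z$, once fetched, is never itself the eviction victim (this is the standard assumption for lazy algorithms but should be stated), and in the ``only if'' direction one must correctly define $\preceq_\sigma$ on the ``garbage'' items (those that never enter any cache under $\sigma$, e.g. items not appearing in $\sigma$ at all, or items whose only request was a miss that got immediately displaced) so that the conformance condition holds vacuously or trivially for them. A secondary subtlety is verifying that consecutive caches $\calA_k(\sigma)$ and $\calA_{k+1}(\sigma)$ really do differ by at most one element for a lazy stack algorithm — this follows because a lazy algorithm's cache size after $\sigma$ is $\min(k, \#\text{distinct items in }\sigma)$, so the chain has the right shape — but it deserves an explicit sentence. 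I do not expect the Chernoff or balls-and-bins machinery to enter; this is a purely combinatorial structural result.
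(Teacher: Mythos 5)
Your proof is correct and follows essentially the same route as the paper's: the same order family for the stack-implies-conforms direction (ranking items by the smallest cache that contains them, i.e., by the increments of the nested chain $\calA_1(\sigma) \subseteq \calA_2(\sigma) \subseteq \cdots$, with unaccessed items ranked last), and the same induction on $|\sigma|$ with an antisymmetry contradiction for the converse. The only organizational difference is that where you argue directly that the two victims must coincide via the max-over-subset versus max-over-superset comparison, the paper factors the same contradiction through a separate technical lemma (\Cref{lem:technical}), which it does mainly so that the lemma can be reused in proving that stable implies stack.
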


To prove this theorem we will need the following technical lemma.

\begin{restatable*}{lemma}{technical}
\label{lem:technical}
Let $\calA$ be a lazy paging algorithm. Then, for every $\tau$, $k$ and $z$, if
\[
\text{$\calA_k(\tau) \subseteq \calA_{k+1}(\tau)$ and $\evict(\calA_{k+1}, \tau, z) \cap \calA_k(\tau z) \neq \emptyset$,}
\]
then
\[
\evict(\calA_k, \tau, z) \cap \calA_{k+1}(\tau z) \neq \emptyset.
\]
\end{restatable*}

\noindent
The proof of this lemma is deferred to the end of this section.

\begin{proof}[Proof of \Cref{thm:folklore}]
($\Rightarrow$) Let $\Sigma = \{\sigma_1, \dots, \sigma_{|\sigma|}\}$. We will define the $i$-th smallest element with respect to $\preceq_{\sigma}$. Let $s = |\Sigma|$. If $i > s$, the answer is the $(i - s)$-th smallest item in the set $\universe \setminus \Sigma$ of unaccessed items. Otherwise, $i \leq s$. If $i = 1$, the answer is $\sigma_{|\sigma|}$. Otherwise, $1 < i \leq s$. Notice that $|\calA_j(\sigma)| = j$ for all $j \leq s$, as $\calA$ is a lazy algorithm. Combining this with the inclusion property of stack algorithms, we deduce that $|\calA_i(\sigma) \setminus \calA_{i-1}(\sigma)| = 1$. We define the $i$-th smallest element w.r.t. $\preceq_{\sigma}$ as the only item in $\calA_i(\sigma) \setminus \calA_{i-1}(\sigma)$.

This order family satisfies that $\calA_k(\sigma)$ is exactly the set of the $\min\{k, s\}$ smallest items w.r.t. $\preceq_{\sigma}$. This implies that $\calA$ conforms to $\{\preceq_{\sigma}\}$.

($\Leftarrow$) We want to show that $\calA_k(\sigma) \subseteq \calA_{k+1}(\sigma)$ for every $\sigma$ and $k$. We fix $k$ and use induction on $|\sigma|$. The claim is trivial for $|\sigma| = 0$. Let $\sigma = \tau z$, and assume that the claim holds for $\tau$, that is, 
\begin{equation}
\label{eq:folklore-1}
\calA_k(\tau) \subseteq \calA_{k+1}(\tau).
\end{equation}

If $z$ does not cause an eviction in $\calA_{k+1}$, then \Cref{eq:folklore-1} and the fact that $\calA$ is lazy imply that $\calA_k(\tau z) \subseteq \calA_{k+1}(\tau z)$, so we are done.

Otherwise, let $x \in \evict(\calA_{k+1}, \tau, z)$. Since $\calA$ is lazy, no other item was evicted by $\calA_{k+1}$. Hence, by \Cref{eq:folklore-1}, it suffices to show that $x \notin \calA_k(\tau z)$. By way of contradiction, assume $x \in \calA_k(\tau z)$. Then,
\begin{equation*}
\label{eq:folklore-2}
x \in \evict(\calA_{k+1}, \tau, z) \cap \calA_k(\tau z).
\end{equation*}
\noindent
By \Cref{lem:technical}, there exists a
\begin{equation*}
\label{eq:folklore-3}
y \in \evict(\calA_k, \tau, z) \cap \calA_{k+1}(\tau z).
\end{equation*}
Thus, in response to $z$, $\calA_k$ evicted $y$ and not $x$, and $\calA_{k+1}$ evicted $x$ and not $y$. Because $\calA$ conforms to $\preceq_{\sigma}$, this means that $x \preceq_{\sigma} y$ and $y \preceq_{\sigma} x$, a contradiction. This completes the induction.
\end{proof}

\subsection{A Necessary Condition}

Next, we establish the relationship between stable algorithms and stack algorithms.

\begin{theorem}
Let $\calA$ be a lazy paging algorithm. If $\calA$ is stable, then $\calA$ is stack.
\end{theorem}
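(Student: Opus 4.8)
The plan is to prove the slightly stronger statement that $\calA_b(\sigma) \subseteq \calA_a(\sigma)$ for every request sequence $\sigma$ and every pair of positive integers $a > b$, by induction on $|\sigma|$; specializing to $a = b+1$ then gives the stack property. The crucial step is to instantiate stability with $X = \universe$, so that $\tau[X] = \tau$ and the requirement $z \in X$ is vacuous. In this form, \Cref{eq:stable} reads: for all $\tau$, $z$ and $a > b$, if $\calA_b$ evicts in response to $z$ (after $\tau$) an item that is still present in $\calA_a(\tau z)$, then $\calA_b(\tau z) \subseteq \calA_a(\tau z)$. Reading the contrapositive, and using that a lazy algorithm evicts at most one item, gives exactly the tool we need: whenever $\calA_b(\tau z) \not\subseteq \calA_a(\tau z)$, the unique item evicted by $\calA_b$ in response to $z$ does not belong to $\calA_a(\tau z)$.

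The base case $|\sigma| = 0$ is immediate, as both caches are empty. For the inductive step, write $\sigma = \tau z$ and assume $\calA_b(\tau) \subseteq \calA_a(\tau)$. All cases but one follow directly from laziness and the inductive hypothesis: (i) if $z \in \calA_b(\tau)$, then $z \in \calA_a(\tau)$ and neither cache changes; (ii) if $z \notin \calA_b(\tau)$ but $z \in \calA_a(\tau)$, then $\calA_b(\tau z) \subseteq \calA_b(\tau) \cup \{z\} \subseteq \calA_a(\tau) = \calA_a(\tau z)$; (iii) if $z$ is a miss in both caches but $\calA_b$ does not evict, then fewer than $b$ (hence fewer than $a$) distinct items have been requested in $\tau$, so $\calA_a$'s cache is not full and it does not evict either, and $\calA_b(\tau z) = \calA_b(\tau) \cup \{z\} \subseteq \calA_a(\tau) \cup \{z\} = \calA_a(\tau z)$.

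The only remaining case is that $z$ is a miss in both caches and $\calA_b$ evicts a single item $x$, so $\calA_b(\tau z) = (\calA_b(\tau) \setminus \{x\}) \cup \{z\}$. Suppose toward a contradiction that $\calA_b(\tau z) \not\subseteq \calA_a(\tau z)$. By the specialized form of stability above, $x \notin \calA_a(\tau z)$. Since $z \in \calA_a(\tau z)$ (it is fetched on request), the failure of the inclusion is witnessed by some $y \in \calA_b(\tau) \setminus \{x\}$ with $y \notin \calA_a(\tau z)$. Both $x$ and $y$ lie in $\calA_b(\tau) \subseteq \calA_a(\tau)$ by the inductive hypothesis, they are distinct from one another and from $z$, and both are absent from $\calA_a(\tau z)$; hence $\{x, y\} \subseteq \evict(\calA_a, \tau, z)$. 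But $z$ is a miss in $\calA_a$, so laziness allows $\calA_a$ to evict at most one item — a contradiction. Therefore $\calA_b(\tau z) \subseteq \calA_a(\tau z)$, which closes the induction.

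I expect no serious obstacle here; the things to be careful about are the elementary bookkeeping facts about lazy algorithms (that $\calA_k(\tau)$ holds exactly $\min(k, \text{number of distinct requests in } \tau)$ items, and evicts exactly one item precisely on a miss when its cache is full), the correct reduction of stability to its $X = \universe$ instance, and tracking which items belong to which cache in the final contradiction. In particular, the technical lemma \Cref{lem:technical} used for the stack-versus-order-family characterization is not needed for this direction.
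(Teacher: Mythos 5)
Your proof is correct, and it follows the same overall skeleton as the paper's (induction on $|\sigma|$, reduce to the case where the small cache evicts on a miss, instantiate stability with $X = \universe$), but it finishes differently. The paper assumes $x \in \evict(\calA_{k+1}, \tau, z) \cap \calA_k(\tau z)$, invokes \Cref{lem:technical} to manufacture an item $y \in \evict(\calA_k, \tau, z) \cap \calA_{k+1}(\tau z)$, and then applies stability in the \emph{forward} direction to conclude $\calA_k(\tau z) \subseteq \calA_{k+1}(\tau z)$, contradicting the presence of $x$. You instead assume the inclusion fails, use the \emph{contrapositive} of stability to place the evicted item $x$ outside $\calA_a(\tau z)$, locate a second witness $y$ of the failed inclusion, and observe that both $x$ and $y$ must then lie in $\evict(\calA_a, \tau, z)$ — contradicting the fact that a lazy algorithm evicts at most one item per miss. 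Your route is slightly more self-contained: it bypasses \Cref{lem:technical} entirely, which in the paper is a separate argument shared with the proof of \Cref{thm:folklore}. What the paper's choice buys is reuse of that lemma across both theorems; what yours buys is a shorter, more direct contradiction for this particular statement. Your bookkeeping in the routine cases (hit in both, hit only in the large cache, miss with a non-full cache) is accurate, as is the observation that the failure witness $y$ is necessarily distinct from both $x$ and $z$.
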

\begin{proof}
We want to show that $\calA_k(\sigma) \subseteq \calA_{k+1}(\sigma)$ for every $k$ and $\sigma$. The proof's structure is similar to that of \Cref{thm:folklore}. Once again, we fix $k$ and use induction on $|\sigma|$. The case $|\sigma| = 0$ is trivial. Let $\sigma = \tau z$, and assume that the claim holds for $\tau$, that is, $\calA_{k}(\tau) \subseteq \calA_{k+1}(\tau)$. Because $\calA$ is lazy, it suffices to show that if $z$ causes $\calA_{k+1}$ to evict an item $x$, then $x \notin \calA_k(\tau z)$.

Suppose that $x \in \evict(\calA_{k+1}, \tau, z)$ and, to reach a contradiction, that $x \in \calA_k(\tau z)$. Then, by \Cref{lem:technical}, there exists a $y \in \evict(\calA_k, \tau, z) \cap \calA_{k+1}(\tau z)$.

Intuitively, this contradicts the fact that $\calA$ is stable, as the smaller instance $\calA_k$ should have evicted $x$, which is no longer in the larger instance $\calA_{k+1}$, as opposed to $y$, which is still in $\calA_{k+1}$. Formally, by the definition of stable algorithm for $X = \universe$, $a = k + 1$ and $b = k$, we have that $y \in \evict(\calA_k, \tau, z) \cap \calA_{k+1}(\tau) \neq \emptyset$ implies $\calA_k(\tau z) \subseteq \calA_{k+1}(\tau z)$, which gives a contradiction, because $x \in \calA_k(\tau z) \setminus \calA_{k+1}(\tau z)$.
\end{proof}

This theorem provides a necessary condition for lazy stable algorithms, which we can use to rule out some well-known lazy paging algorithms that are not stack algorithms.

\begin{corollary}
FIFO and clock are \emph{not} stable.
\end{corollary}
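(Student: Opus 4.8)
The plan is to obtain this as an immediate consequence of the theorem just proved, namely that every lazy stable algorithm is a stack algorithm. Taking the contrapositive, it suffices to verify two things: that FIFO and clock are lazy, and that neither is a stack algorithm.

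For the first point, I would simply recall from \Cref{sec:preliminaries} that FIFO and clock are lazy: each fetches an item only on a miss, evicts at most one item per miss, and evicts only when the cache is full. For the second point, I would invoke the classical fact that FIFO and clock exhibit Belady's anomaly, i.e., there is a request sequence $\sigma$ and cache sizes $a > b$ with $C(\calA_a, \sigma) > C(\calA_b, \sigma)$. As already observed in the discussion of stack algorithms, a stack algorithm cannot have this behavior: if $\calA_k(\cdot) \subseteq \calA_{k+1}(\cdot)$ always holds, then any miss on a cache of size $k+1$ is also a miss on a cache of size $k$ at the same step, so $C(\calA_{k+1}, \sigma) \le C(\calA_k, \sigma)$ for every $\sigma$, contradicting the anomaly. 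Hence FIFO and clock are not stack, and therefore, being lazy, they cannot be stable.

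There is essentially no mathematical obstacle here; the only thing to be careful about is the bookkeeping — making sure the "lazy" hypothesis of the preceding theorem genuinely applies to FIFO and clock, and either citing or exhibiting a concrete Belady-anomaly instance (the standard one uses a short sequence over a handful of items with $b = 3$, $a = 4$) so the argument is self-contained. The statement then follows by one line of contraposition.
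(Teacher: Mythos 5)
Your proposal matches the paper's intended argument exactly: the corollary is derived by contraposition from the theorem that lazy stable algorithms are stack algorithms, combined with the earlier observation that FIFO and clock are lazy but exhibit Belady's anomaly and hence are not stack algorithms. The proof is correct and takes essentially the same route as the paper.
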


One may wonder whether stable and stack algorithms are, in fact, the same, at least when restricted to the class of lazy algorithms. We show that this is not the case.

\begin{proposition}
There exists a lazy stack algorithm that is not stable.
\end{proposition}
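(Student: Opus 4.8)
The plan is to construct an explicit lazy paging algorithm $\calA$, show it is a stack algorithm by exhibiting an order family $\{\preceq_\sigma\}$ to which it conforms (invoking \Cref{thm:folklore}), and then exhibit a single instance $\tau, X, z, a, b$ witnessing the failure of the stability condition \Cref{eq:stable}. The key idea is that stability is a much more delicate property than the stack property: stability constrains how a \emph{small} cache on a \emph{projected} request sequence $\tau[X]$ relates to a \emph{large} cache on the \emph{full} sequence $\tau$, whereas the stack property only compares two caches running on the \emph{same} sequence. So I want an algorithm whose eviction ranking is ``consistent across sizes on a fixed input'' but ``sensitive to which items have been seen'', so that deleting some requests (passing from $\tau$ to $\tau[X]$) scrambles the ranking in a way that violates \Cref{eq:stable}.

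Concretely, I would define $\calA$ via a conforming order family $\{\preceq_\sigma\}$: on input $\sigma$, rank the accessed items by some rule that depends on the \emph{set} of distinct items seen so far, not just their recency. For instance, a natural candidate is a ``parity-flipped LRU'': maintain the LRU order, but if the number of distinct items accessed so far is even, keep the usual LRU ranking, and if it is odd, swap the two least-recently-used items. Since for each fixed $\sigma$ this is a single fixed total order over the accessed items, any lazy algorithm evicting the maximum element of this order is, by \Cref{thm:folklore}, automatically a stack algorithm. I would double-check laziness (one eviction, only when full, fetch only on miss) — this is immediate from the construction. The point is then that projecting out items from $\tau$ changes the parity of the distinct-item count, hence flips the ranking at the bottom, so the small projected cache can be induced to evict an item that the large cache on $\tau$ still holds while the small cache is \emph{not} contained in the large one — exactly the negation of \Cref{eq:stable}.

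The bulk of the work, and the main obstacle, is engineering the witness sequence so that all four moving parts line up simultaneously: (i) in the large cache $\calA_a(\tau z)$, item $z$ causes no problematic eviction and the contents are some known set; (ii) in the small cache $\calA_b(\tau[X] z)$, the parity flip forces eviction of an item $w$ that happens to lie in $\calA_a(\tau z)$ (so the hypothesis of \Cref{eq:stable} holds); yet (iii) $\calA_b(\tau[X] z) \not\subseteq \calA_a(\tau z)$, because the small cache retains some ``junk'' item from $X$ that was flushed out of, or never placed in, the large cache's working set. I expect to need $b$ small (say $b = 2$ or $3$) and $X$ chosen so that $\tau[X]$ has odd distinct-count while $\tau$ has even distinct-count at the relevant moment; then a short hand-crafted $\tau$ of constant length suffices and the verification is a finite case check. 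I would present $\tau$, $X$, $z$, $a$, $b$ explicitly, tabulate the cache contents of both instances step by step, and read off the contradiction with \Cref{eq:stable}.

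A secondary subtlety to watch: I must make sure the chosen $\calA$ is \emph{genuinely} a lazy algorithm and not merely an order family — in particular that the ranking rule is computable online (it depends only on the prefix seen so far, which it is) and that on a miss-with-full-cache it evicts exactly the $\preceq_\sigma$-maximum, so that \Cref{thm:folklore}'s characterization applies verbatim. If the ``parity flip'' variant turns out to complicate the bookkeeping, a cleaner alternative is an algorithm that ranks unaccessed/rarely-relevant items by a fixed global order of $\universe$ interleaved with LRU in a size-independent but set-dependent way; either way the template is the same — conform to a fixed-per-$\sigma$ order to get ``stack'' for free, then exploit sequence-projection to break ``stable''.
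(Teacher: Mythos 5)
Your overall strategy is exactly the paper's: define the algorithm by a per-$\sigma$ total order so that conformance (and hence the stack property, via \Cref{thm:folklore}) comes for free, then exhibit a sequence $\tau$, a subset $X$, and sizes $a > b$ where projection from $\tau$ to $\tau[X]$ scrambles the ranking enough to violate \Cref{eq:stable}. The paper instantiates this with the reuse-distance rule (evict the item with the largest gap between its last two accesses) and the explicit witness $\sigma = AYZZZZABYYBC$, $X = \{A,B,C,Y\}$, $a = 4$, $b = 3$, for which the final access to $C$ makes $\calR_3$ evict $B$ (still in $\calR_4$) while retaining $A$ (not in $\calR_4$).

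The gap is that you never actually produce the witness. For an existential claim proved by counterexample, the counterexample \emph{is} the proof; your paragraph (ii)--(iii) correctly identifies all the constraints that must hold simultaneously, labels satisfying them ``the bulk of the work,'' and then defers it. Until you write down $\tau$, $X$, $z$, $a$, $b$ and tabulate both caches, the claim that your parity-flipped LRU violates stability is a conjecture. There is also a specific pitfall you flag but do not resolve: the hypothesis of \Cref{eq:stable} requires the evicted item to lie in $\calA_a(\tau z)$ while the conclusion must fail, so $\calA_a(\tau z)$ must simultaneously contain the small cache's victim and be missing some survivor of $\calA_b(\tau[X]z)$; this forces $a$ to be small enough that the large cache has itself evicted something, and it is not obvious that a parity flip at the bottom two positions of the LRU order can be arranged to produce both effects at once (note that plain LRU, which your rule perturbs only slightly, \emph{is} stable, being LRU-$1$). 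I would either carry out the finite case check for your rule or switch to a rule, like the paper's reuse distance, where removing requests from $\sigma$ visibly reorders items far apart in the ranking rather than only swapping adjacent ones.
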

\begin{proof}
Let $\calR$ be the paging algorithm that always evicts the item $x$ with the largest number of requests between the last two accesses to $x$; this metric is known as the \defn{reuse distance}~\cite{Jiang2002LIRS}. Let $\Phi(\sigma, x)$ be the number of requests between the last $2$ accesses to item $x$ in $\sigma$. We define $\Phi(\sigma, x) = \infty$ if $x$ has not been accessed at least twice. Consider the total order $x \preceq_{\sigma} y$ if and only if $\Phi(\sigma, x) < \Phi(\sigma, y)$ or ($\Phi(\sigma, x) = \Phi(\sigma, y)$ and $x \leq y$). This defines an order family, and $\calR$ conforms to it, by definition. Thus, by \Cref{thm:folklore}, $\calR$ is a stack algorithm.

We claim that $\calR$ is not stable. Let $\universe = \{A, B, C, Y, Z\}$ be the universe of items. We will compare $\calR_3$ and $\calR_4$. Let $\sigma = AYZZZZABYYBC$. Let $X = \{A, B, C, Y\}$ be the set of items read by $\calR_3$. We will show that the definition of stable algorithm is violated in the last access of $\sigma$.

On the one hand, $\calR_3$ reads the sequence $\sigma[X] = AYABYYBC$, and $Y \preceq_{\sigma[X]} A \preceq_{\sigma[X]} B$. Thus, during the final access to $C$, $\calR_3$ evicts $B$. On the other hand, $\calR_4$ reads the whole sequence $\sigma$, and $Y \preceq_{\sigma} Z \preceq_{\sigma} B \preceq_{\sigma} A$. Hence, when $C$ is accessed, $\calR_4$ evicts $A$. Because $\calR_3$ evicted $B$ (which is still in $\calR_4$) and it still contains $A$ (which is not in $\calR_4$), $\calR$ is not stable.
\end{proof}

\subsection{A Sufficient Condition}

Suppose some paging algorithm conforms to a family $\{\preceq_{\sigma}\}$. Notice that, in the extreme case, the orders can change abruptly from one step to the next, that is, $\preceq_{\sigma}$ may be completely different from $\preceq_{\sigma x}$. This is because the orders are simply a rule the algorithm uses to decide which item to evict, and this rule is free to change at any point in time.

We will restrict our attention to order families that do not present this type of instability. We say that $\{\preceq_{\sigma}\}$ is \defn{monotone} if for every $\sigma$, and $x, y, z \in \sigma$ with $y \neq z$,
\[
\text{if $x \preceq_{\sigma} y$, then $x \preceq_{\sigma z} y$.}
\]
In words, a family is monotone when (1) an access can only change the relative order of the accessed item, and (2) the accessed item does not become larger in the order.

\begin{restatable*}{lemma}{monotone}
\label{lem:monotone}
    Let $\calA$ be a lazy paging algorithm that conforms to a monotone family $\{\preceq_{\sigma}\}$. Fix an arbitrary request sequence $\sigma$. Let $n = |\sigma|$, $\Sigma_n = \{\sigma_1, \dots, \sigma_n\}$, and $s = |\Sigma_n|$. Let $x_1, \dots, x_s$ be the elements of $\Sigma_n$ ordered by $\preceq_{\sigma}$. Then, for every $k \leq s$,
\[
\{x_1, \dots, x_{k-1}\} \subseteq \calA_k(\sigma).
\]
\end{restatable*}

\Cref{lem:monotone} says that the cache contents are essentially dictated by the order family, except perhaps for the largest item in the cache. Its proof can be found at the end of this section. The following corollary captures a crucial implication.

\begin{corollary}
\label{cor:monotone}
Let $\calA$ be a lazy paging algorithm that conforms to a monotone family $\{\preceq_{\sigma}\}$. Then, for every $\sigma$, $k$, and $y \in \sigma$, if $y \notin \calA_k(\sigma)$, then there exists at most one $x \in \calA_k(\sigma)$ such that $y \preceq_{\sigma} x$.
\end{corollary}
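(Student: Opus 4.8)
The plan is to read off \Cref{cor:monotone} from \Cref{lem:monotone} with essentially no extra work beyond handling the boundary values of $k$. Fix $\sigma$, $k$, and $y \in \sigma$ with $y \notin \calA_k(\sigma)$, and adopt the notation of \Cref{lem:monotone}: let $n = |\sigma|$, $\Sigma_n = \{\sigma_1,\dots,\sigma_n\}$, $s = |\Sigma_n|$, and let $x_1 \preceq_{\sigma} x_2 \preceq_{\sigma} \cdots \preceq_{\sigma} x_s$ be the items of $\Sigma_n$ listed in increasing $\preceq_{\sigma}$-order. First I would dispose of the trivial ranges of $k$. If $k = 0$ the cache is empty and the claim is vacuous. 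If $k \geq s$, then since $\calA$ is lazy and at most $s$ distinct items are ever requested, the cache never evicts after the last distinct item has been brought in, so $\calA_k(\sigma) = \Sigma_n$; but $y \in \sigma$ forces $y \in \Sigma_n = \calA_k(\sigma)$, contradicting $y \notin \calA_k(\sigma)$. Hence we may assume $1 \leq k < s$, so in particular $k \leq s$ and \Cref{lem:monotone} applies.

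Next I would pin down the cache contents and the position of $y$ relative to the order. By \Cref{lem:monotone}, $\{x_1,\dots,x_{k-1}\} \subseteq \calA_k(\sigma)$, and since $\calA_k(\sigma)$ has at most $k$ elements while $\{x_1,\dots,x_{k-1}\}$ has exactly $k-1$, the ``leftover'' set $E := \calA_k(\sigma) \setminus \{x_1,\dots,x_{k-1}\}$ has at most one element. Since $y \notin \calA_k(\sigma) \supseteq \{x_1,\dots,x_{k-1}\}$ and $y \in \Sigma_n$, we may write $y = x_i$ with $i \geq k$.

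Finally I would rule out every element of $\calA_k(\sigma)$ outside $E$ as a candidate $x$ with $y \preceq_{\sigma} x$. Any such element is $x_\ell$ with $\ell \leq k - 1 < k \leq i$, so $x_\ell \preceq_{\sigma} x_i = y$ while $x_\ell \neq x_i = y$; since $\preceq_{\sigma}$ is a total order, in particular antisymmetric, this precludes $y \preceq_{\sigma} x_\ell$. Therefore the only element of $\calA_k(\sigma)$ that can satisfy $y \preceq_{\sigma} x$ is the unique element of $E$ (if $E$ is nonempty), giving at most one such $x$, as claimed. I do not expect a genuine obstacle here; the only points needing care are the boundary cases $k = 0$ and $k \geq s$ — both handled via laziness — and the appeal to antisymmetry of $\preceq_{\sigma}$ to turn $x_\ell \preceq_{\sigma} y$ with $x_\ell \neq y$ into the negation of $y \preceq_{\sigma} x_\ell$.
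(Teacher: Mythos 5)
Your proof is correct and matches the paper's intent exactly: the paper leaves \Cref{cor:monotone} as an immediate consequence of \Cref{lem:monotone} (the cache holds the $k-1$ smallest items plus at most one other, and an uncached $y$ sits at position $\geq k$ in the order), which is precisely the argument you spell out. The boundary cases $k=0$ and $k \geq s$ and the appeal to antisymmetry are handled appropriately.
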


Recall that stable algorithms have the property that the paging decisions of two instances serving request sequences $\sigma$ and $\sigma[X]$, respectively, are ``entangled''. We wish to capture this feature via order families. We say that $\{\preceq_{\sigma}\}$ is \defn{self-similar} if for every $\sigma$, $X \subseteq \universe$, and $x, y \in \sigma[X]$,
\[
\text{if $x \preceq_{\sigma[X]} y$, then $x \preceq_{\sigma} y$.}
\]

We are now ready to state the main theorem of this section.

\begin{theorem}
\label{thm:monotone}
Let $\calA$ be a lazy paging algorithm. If $\calA$ conforms to a family $\{\preceq_\sigma\}$ that is monotone and self-similar, then $\calA$ is stable.
\end{theorem}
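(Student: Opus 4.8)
The plan is to verify the defining condition \Cref{eq:stable} of stability directly. Fix $\tau \in \universe^*$, $X \subseteq \universe$, $z \in X$, and integers $a > b$, and assume the hypothesis of \Cref{eq:stable}: $\calA_b$ evicts an item in response to $z$ --- call it $w$, which is unique since $\calA$ is lazy --- and $w \in \calA_a(\tau z)$. Write $\rho := \tau[X]z$, and note $\rho = (\tau z)[X]$ since $z \in X$. Because $\calA$ is lazy and conforms to $\{\preceq_\sigma\}$, the item $w$ is the $\preceq_\rho$-largest element of $\calA_b(\tau[X])$, and $\calA_b(\rho) = (\calA_b(\tau[X]) \setminus \{w\}) \cup \{z\}$. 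Since $\{\preceq_\sigma\}$ is self-similar and $\preceq_\rho$, $\preceq_{\tau z}$ are total orders on nested supports, $\preceq_\rho$ is exactly the restriction of $\preceq_{\tau z}$ to the items occurring in $\rho$; in particular every $y \in \calA_b(\tau[X]) \setminus \{w\}$ satisfies $y \prec_{\tau z} w$. Finally, laziness gives $z \in \calA_a(\tau z)$ (an item is cached immediately after it is requested). So it suffices to show that every $y \in \calA_b(\tau[X])$ with $y \prec_{\tau z} w$ lies in $\calA_a(\tau z)$.

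Next I would pin down the structure of $\calA_a(\tau z)$ via monotonicity. If fewer than $a$ distinct items occur in $\tau z$, then $\calA_a$ never fills, $\calA_a(\tau z)$ is the set of all items occurring in $\tau z$, and this contains $\calA_b(\rho)$, so we are done. Otherwise $\calA_a(\tau z)$ is full; index the items occurring in $\tau z$ as $x_1 \prec_{\tau z} x_2 \prec_{\tau z} \dots$ by $\preceq_{\tau z}$-rank. By \Cref{lem:monotone}, $\{x_1,\dots,x_{a-1}\} \subseteq \calA_a(\tau z)$, so $\calA_a(\tau z) = \{x_1,\dots,x_{a-1}\} \cup \{M\}$ with $M$ the $\preceq_{\tau z}$-maximum of $\calA_a(\tau z)$. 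If $w$ is one of $x_1,\dots,x_{a-1}$, or if $M = x_a$ (so that $\calA_a(\tau z)$ is a genuine $\preceq_{\tau z}$-down-set $\{x_1,\dots,x_a\}$), then every item $\prec_{\tau z} w$ is among $\{x_1,\dots,x_{a-1}\} \subseteq \calA_a(\tau z)$; together with $z \in \calA_a(\tau z)$ this yields $\calA_b(\rho) \subseteq \calA_a(\tau z)$.

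The remaining case --- which I expect to be the only genuinely delicate one --- is $w = M = x_m$ with $m \ge a+1$, i.e.\ $\calA_a(\tau z) = \{x_1,\dots,x_{a-1},x_m\}$ has a ``hole'' among $x_a,\dots,x_{m-1}$. Here one first observes that $z \in \calA_a(\tau z)$ and $z \ne w$ force $z \in \{x_1,\dots,x_{a-1}\}$, so $z$ is among the $a-1$ globally smallest items. The goal reduces to ruling out that $\calA_b(\tau[X])$ contains any of $x_a,\dots,x_{m-1}$ (it cannot contain anything $\succ_{\tau z} x_m$, since $w = x_m$ is its $\preceq_\rho$-maximum, hence by self-similarity its $\preceq_{\tau z}$-maximum). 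I would attack this by combining: (i) $\calA_b(\tau[X]) \setminus \{w\}$ is exactly the set of the $b-1$ smallest items, under $\preceq_{\tau[X]}$, of those occurring in $\tau[X]$ (again the down-set structure of \Cref{lem:monotone}, using that $\calA_b(\tau[X])$ is full), and this order agrees with $\preceq_{\tau z}$ off $z$ (monotonicity of $\preceq_\tau \to \preceq_{\tau z}$ plus self-similarity applied to $\tau,X$); (ii) \Cref{cor:monotone} applied to $\calA_a$ on $\tau z$, so that each putative hole $x_\ell$ ($a \le \ell \le m-1$) would have $w$ as its \emph{unique} $\preceq_{\tau z}$-larger element of $\calA_a(\tau z)$; and (iii) a counting argument over the nested item sets of $\tau[X]$, $\rho$, and $\tau z$, using $b \le a-1$ to argue that any element of $\calA_b(\tau[X])$ has at most $a-2$ items of $X$ below it (hence lies among the $a-1$ smallest items of $X$), and then playing this off against the fact that $w = x_m$, an item of global $\preceq_{\tau z}$-rank $\ge a+1$, is simultaneously the evicted maximum of the small cache and is retained by the hole-possessing large cache, together with the forced smallness of $z$.

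I expect the main obstacle to be exactly step (iii): the adversary choosing $X$ is free to omit globally small items, which is precisely what lets a globally ``mid-ranked'' item look small inside $X$, so the argument must exploit the interaction between the single ``loose slot'' of each of the two caches, the constraint that the just-requested item $z$ lands among the $a-1$ globally smallest items, and the hypothesis that the evicted maximum $w$ of the small cache is kept by the large cache. If this case turns out to be impossible outright --- i.e.\ a bad eviction of $w = x_m$ cannot coexist with $\calA_a(\tau z)$ having a hole below $x_m$ --- the proof collapses to the two easy cases; I would first try to establish that impossibility, and only fall back on the direct counting argument if it fails.
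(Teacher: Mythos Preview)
Your setup and the two ``easy'' cases are correct: reducing to $y \in \calA_b(\tau[X]) \setminus \{w\}$, using self-similarity to identify $\preceq_{\rho}$ with the restriction of $\preceq_{\tau z}$, and invoking \Cref{lem:monotone} to pin down $\calA_a(\tau z)$ up to one loose slot all mirror the paper's ingredients, just organized structurally rather than element by element.

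The gap is the ``remaining case'' $w = M = x_m$ with $m \geq a+1$, which you leave unresolved. Your proposed counting argument (iii) cannot work, for exactly the reason you anticipate: knowing that each $y_i$ is among the $b-1 \le a-2$ smallest items \emph{of $\tau[X]$} under $\preceq_{\tau[X]}$ says nothing about its global $\preceq_{\tau z}$-rank, because $X$ may omit all of $x_1,\dots,x_{a-1}$. Your use of \Cref{cor:monotone} on $\calA_a$ at time $\tau z$ (item (ii)) only re-derives that $w$ is the unique element of $\calA_a(\tau z)$ above the hole, which is the hypothesis of the case. And the ``forced smallness of $z$'' in $\preceq_{\tau z}$ does not transfer back to $\preceq_\tau$, since monotonicity lets $z$ drop arbitrarily far upon access.

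The idea you are missing---and the crux of the paper's proof---is to step back from $\tau z$ to $\tau$ and argue by contradiction. Suppose some $y \in \calA_b(\tau[X]) \setminus \{w\}$ is not in $\calA_a(\tau z)$; by laziness (and $y,w \neq z$), $y \notin \calA_a(\tau)$ and $w \in \calA_a(\tau)$. First apply \Cref{cor:monotone} to $\calA_b$ at time $\tau[X]$ with the missing item $z$: since $z \notin \calA_b(\tau[X])$ and both $y \preceq_{\tau[X]} w$ and $w \in \calA_b(\tau[X])$, one gets $y \preceq_{\tau[X]} z$, hence $y \preceq_\tau z$ by self-similarity (if $z \notin \tau$ this step is unnecessary, as then $z \notin \calA_a(\tau)$ trivially). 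Now apply \Cref{cor:monotone} to $\calA_a$ at time $\tau$: the missing item $y$ lies below both $w$ and $z$ in $\preceq_\tau$, so at most one of them can be in $\calA_a(\tau)$; hence $z \notin \calA_a(\tau)$, and $w$ is the unique element of $\calA_a(\tau)$ above $y$, making $w$ the $\preceq_{\tau z}$-maximum of $\calA_a(\tau)$ by monotonicity. But then the request $z$ is a miss in a full cache and evicts $w$, contradicting $w \in \calA_a(\tau z)$. This is exactly what closes your hard case, and it shows that the case is not impossible as a configuration---rather, in that configuration $\calA_b(\tau[X])\setminus\{w\}$ is forced into $\{x_1,\dots,x_{a-1}\}$.
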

\begin{proof}
Let $\tau \in \universe^*$, $X \subseteq \universe$, $z \in X$, and $a > b$. 
According to the definition of stable algorithm, we wish to show that $\evict(\calA_b, \tau[X], z) \cap \calA_a(\tau z) \neq \emptyset$ implies $\calA_b(\tau[X] z) \subseteq \calA_a(\tau z)$. Assume the antecedent; let $x \in \evict(\calA_b, \tau[X], z) \cap \calA_a(\tau z)$. Let $y \in \calA_b(\tau[X] z)$, and let us show that $y \in \calA_a(\tau z)$.

If $y = z$, there is nothing to show; assume $y \neq z$.
Because $x \in \evict(\calA_b, \tau[X], z)$, we also have $x \neq z$.

Since $y \in \calA_b(\tau[X]z)$, $y \neq z$ and $\calA$ is lazy, we have $y \in \calA_b(\tau[X])$. Hence, in response to $z$, $x$ was evicted by $\calA_b$ while $y$ remained cached, so $y \preceq_{\tau[X]z} x$. Since $y \neq z$, the monotonicity property yields
\begin{equation}
\label{eq:monotone-0}
    y \preceq_{\tau[X]} x.
\end{equation}
Then, the self-similarity property implies
\begin{equation}
\label{eq:monotone-1}
    y \preceq_{\tau} x.
\end{equation}
Since $x \neq z$, we have, by monotonicity,
\begin{equation}
\label{eq:monotone-2}
    y \preceq_{\tau z} x.
\end{equation}

We claim that $y \preceq_{\tau} z$. Since $x \in \evict(\calA_b, \tau[X], z)$ and $\calA$ is lazy, we have $z \notin \calA_b(\tau[X])$. Notice that it cannot be $z \preceq_{\tau[X]} y$, because \Cref{eq:monotone-0} would imply $z \preceq_{\tau[X]} x$, but $z \notin \calA_b(\tau[X])$ cannot be smaller than two items in $\calA_b(\tau[X])$ w.r.t. $\preceq_{\tau[X]}$, by \Cref{cor:monotone}. Hence, $y \preceq_{\tau[X]} z$. By the self-similarity property,
\begin{equation}
\label{eq:monotone-3}
y \preceq_{\tau} z.
\end{equation}

Recall that our goal is to show that $y \in \calA_a(\tau z)$. For the sake of contradiction, assume $y \notin \calA_a(\tau z)$. Since $y \neq z$ and $\calA$ is lazy, we have
\begin{equation}
\label{eq:monotone-4}
y \notin \calA_a(\tau).
\end{equation}
Analogously, since $x \in \calA_a(\tau z)$, $x \neq z$ and $\calA$ is lazy, we have
\begin{equation}
\label{eq:monotone-5}
x \in \calA_a(\tau).
\end{equation}
To avoid contradicting \Cref{cor:monotone}, \Cref{eq:monotone-1}, \Cref{eq:monotone-3}, \Cref{eq:monotone-4} and \Cref{eq:monotone-5} imply that
\begin{equation}
\label{eq:monotone-6}
z \notin \calA_a(\tau),
\end{equation}
and that
\begin{equation}
\label{eq:monotone-7}
\text{$w \preceq_{\tau} y$ for every $w \in \calA_a(\tau) \setminus \{x\}$.}
\end{equation}
Monotonicity implies that
\begin{equation}
\label{eq:monotone-8}
\text{$w \preceq_{\tau z} y$ for every $w \in \calA_a(\tau) \setminus \{x\}$.}
\end{equation}
Thus, by \Cref{eq:monotone-2} and \Cref{eq:monotone-8}, $x$ is the largest item in $\calA_a(\tau)$ w.r.t. $\preceq_{\tau z}$. By \Cref{eq:monotone-6}, the request $z$ was a miss in $\calA_a$ (and the cache was full, because $\calA$ is lazy and there have already been evictions in the past, namely that of $y$). Therefore, $x \in \evict(\calA_a, \tau, z)$, which is a contradiction.
\end{proof}

\subsection{Examples of Stable Algorithms}

\begin{lemma}
LRU-$K$ is stable.
\end{lemma}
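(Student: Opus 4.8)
The plan is to apply \Cref{thm:monotone}: it suffices to exhibit, for each fixed $K$, a family of total orders $\{\preceq_\sigma\}$ to which LRU-$K$ conforms, and to verify that this family is both monotone and self-similar. For a request sequence $\sigma$ and an item $x \in \sigma$, let $\Phi_K(\sigma, x)$ denote the position (counting from the end of $\sigma$) of the $K$-th most recent access to $x$; set $\Phi_K(\sigma, x) := \infty$ (more precisely, a value larger than $|\sigma|$, broken by the time of first access, or by the total order on $\universe$) if $x$ has been accessed fewer than $K$ times. Define $x \preceq_\sigma y$ iff $\Phi_K(\sigma, x) < \Phi_K(\sigma, y)$, with ties broken consistently (say, by recency of the most recent access, and then by the underlying order on $\universe$). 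Since LRU-$K$ evicts precisely the item maximizing $\Phi_K$, it conforms to this family by construction; this mirrors the argument used for the reuse-distance algorithm $\calR$ in the proof that stack $\not\Rightarrow$ stable.

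Next I would check \textbf{monotonicity}. Appending a single request $z$ to $\sigma$ only changes the access history of $z$: for any item $w \neq z$, the $K$ most recent accesses to $w$ are the same in $\sigma$ and $\sigma z$, except that they have all shifted one position away from the end, so $\Phi_K$ is monotone in the relative sense required. Hence for $x, y \neq z$, the relative order of $x$ and $y$ is unchanged. For $z$ itself, appending an access can only make $\Phi_K(\cdot, z)$ smaller (its $K$-th most recent access moves closer to the end, or $z$ transitions from $\infty$ to finite), so $z$ cannot move up in the order, i.e.\ if $x \preceq_\sigma z$ then $x \preceq_{\sigma z} z$, and if $z \preceq_\sigma x$ then $z \preceq_{\sigma z} x$. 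Some care is needed with the tie-breaking rule so that ties are resolved the same way before and after appending $z$; choosing a tie-break that depends only on data that is itself monotone (e.g.\ time of first access, which never changes, together with the fixed order on $\universe$) handles this cleanly.

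Then I would check \textbf{self-similarity}: for $x, y \in \sigma[X]$, if $x \preceq_{\sigma[X]} y$ then $x \preceq_\sigma y$. The key observation is that deleting from $\sigma$ all requests to items outside $X$ does not remove or reorder any of the accesses to $x$ or to $y$; it only deletes some requests that lie between them. Consequently, if $x$ has been accessed at least $K$ times in $\sigma[X]$, the same $K$ most recent accesses to $x$ appear in $\sigma$, and the number of requests after the $K$-th most recent access to $x$ can only grow (or stay the same) when we pass from $\sigma[X]$ to $\sigma$ — and it grows by the same reasoning for $y$. What must be verified is that the comparison direction is preserved: if $\Phi_K(\sigma[X], x) < \Phi_K(\sigma[X], y)$, then after reinserting the deleted requests we still have $\Phi_K(\sigma, x) \le \Phi_K(\sigma, y)$, with the tie-break again handled by the fact that first-access times and the order on $\universe$ are insensitive to restriction. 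The main obstacle — and the step I would spend the most care on — is precisely this monotone-reinsertion argument for $\Phi_K$ together with a tie-breaking convention robust under both appending (for monotonicity) and restriction (for self-similarity); once that bookkeeping is pinned down, the lemma follows immediately from \Cref{thm:monotone}. (For $K = 1$ this specializes to ordinary LRU, recovering that LRU is stable as promised in the proof outline of \Cref{thm:monotone-local}.)
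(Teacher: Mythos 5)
Your proposal is correct and takes essentially the same route as the paper's proof: both invoke \Cref{thm:monotone} on the order induced by the time of the $K$-th most recent access (with $\infty$ for items accessed fewer than $K$ times), verify monotonicity by observing that an access can only promote the accessed item while shifting all others uniformly, and verify self-similarity by the same reinsert-the-deleted-requests argument, under which each reinserted access ages both items or only the older one, so the strict inequality $\Phi(\sigma[X],x) < \Phi(\sigma[X],y)$ is in fact preserved (not merely weakly, as you hedge). The only cosmetic difference is your more elaborate tie-breaking rule; the paper breaks ties simply by the fixed total order on $\universe$, which suffices because distinct items can tie only when both values are $\infty$, i.e.\ both have fewer than $K$ accesses, a condition unaffected by appending or by restriction to $\sigma[X]$.
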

\begin{proof}
We will use \Cref{thm:monotone} on $\calA = \LRU$-$K$. Let $\Phi(\sigma, x)$ be the number of requests since the $K$-th most recent access to item $x$ in $\sigma$. We define $\Phi(\sigma, x) = \infty$ if $x$ has been accessed less than $K$ times. Let $x \preceq_{\sigma} y$ if and only if $\Phi(\sigma, x) < \Phi(\sigma, y)$ or ($\Phi(\sigma, x) = \Phi(\sigma, y)$ and $x \leq y$). This defines the order family that LRU-$K$ conforms to. This family is monotone, because accessing an item does not increase its $K$-th most recent access time, and the relative order of all other items remains the same, as their $K$-th most recent access times increase by $1$ after the new access. To conclude the proof we show that $\{\preceq_{\sigma}\}$ is also self-similar.

Suppose $x \preceq_{\sigma[X]} y$ with $x, y \in \sigma[X]$, and let us show that $x \preceq_{\sigma} y$. If $x = y$, the claim is trivial, so assume $x \neq y$. If $\Phi(\sigma[X], x) = \Phi(\sigma[X], y)$, then they must both be equal to $\infty$, because the $K$-th most recent access times of two different items cannot match. Thus, both $x$ and $y$ have been accessed less than $K$ times in $\sigma[X]$ and, consequently, in $\sigma$, so $\Phi(\sigma, x) = \Phi(\sigma, y) = \infty$. Hence, $x \preceq_{\sigma} y$.

Otherwise, $\Phi(\sigma[X], x) \neq \Phi(\sigma[X], y)$. Without loss of generality, assume $\Phi(\sigma[X], x) < \Phi(\sigma[X], y)$. Starting from $\sigma[X]$, we reinsert each of the accesses removed from $\sigma$, one by one. Crucially, none of these accesses are to $x$ or $y$. Since the $K$-th most recent access to $y$ is older than that of $x$, every time we insert an item access in the sequence either both become one unit time older, or only $y$ does. Hence, at the end of the process, the $K$-th most recent access to $y$ is still older than that of $x$, i.e., $\Phi(\sigma, x) < \Phi(\sigma, y)$. Thus, $x \preceq_{\sigma} y$, completing the proof.
\end{proof}

\begin{lemma}
LFU is stable.
\end{lemma}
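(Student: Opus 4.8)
The plan is to reuse the machinery of \Cref{thm:monotone}, exactly as in the LRU-$K$ argument: I will exhibit an order family $\{\preceq_\sigma\}$ to which LFU conforms, and then verify that it is monotone and self-similar. The natural metric for LFU is frequency. Let $\Phi(\sigma, x)$ be the number of times $x$ is requested in $\sigma$ (so $\Phi(\sigma, x) = 0$ for items not occurring in $\sigma$), and define $x \preceq_\sigma y$ if and only if $\Phi(\sigma, x) > \Phi(\sigma, y)$, or $\Phi(\sigma, x) = \Phi(\sigma, y)$ and $x \le y$. The identity tie-break makes this a total order on all of $\universe$ for each $\sigma$. The $\preceq_\sigma$-largest item among any set of cached items is precisely the least-frequently-used one (with ties broken by identity, which we may assume is LFU's tie-breaking rule), so LFU conforms to $\{\preceq_\sigma\}$. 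As with LRU-$K$, note that $\Phi$ depends only on the request sequence and not on the cache's eviction history, which is what makes the order-family framework applicable.

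Monotonicity is a short case analysis. Appending a request $z$ changes frequencies only for $z$: we have $\Phi(\sigma z, z) = \Phi(\sigma, z) + 1$, while $\Phi(\sigma z, w) = \Phi(\sigma, w)$ for all $w \neq z$. Hence if $x \neq z$ and $y \neq z$, the comparison between $x$ and $y$ is identical under $\preceq_\sigma$ and $\preceq_{\sigma z}$. The only remaining case (the definition of monotonicity already assumes $y \neq z$) is $x = z$: here $x \preceq_\sigma y$ means $\Phi(\sigma, z) > \Phi(\sigma, y)$, or $\Phi(\sigma, z) = \Phi(\sigma, y)$ and $z \le y$, and in either situation $\Phi(\sigma, z) + 1 > \Phi(\sigma, y)$, so $x \preceq_{\sigma z} y$. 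Thus an access never raises an item in the order and leaves every other relative comparison untouched, which is exactly monotonicity.

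Self-similarity is immediate, and in fact cleaner here than for LRU-$K$ (where passing to a subsequence can shift the $K$-th-most-recent-access distances): for any $X \subseteq \universe$ and any item $x$ occurring in $\sigma[X]$, deleting the requests for items outside $X$ does not change how many times $x$ is requested, so $\Phi(\sigma[X], x) = \Phi(\sigma, x)$. Hence $\preceq_{\sigma[X]}$ and $\preceq_\sigma$ induce the same order on the items of $\sigma[X]$, giving that $x \preceq_{\sigma[X]} y$ implies $x \preceq_\sigma y$. \Cref{thm:monotone} then yields that LFU is stable. I do not expect a genuine obstacle: the only points that need care are orienting the order correctly (a \emph{higher} frequency must correspond to a \emph{smaller}, hence less evictable, element) and pinning down LFU's tie-break so that the ``conforms to'' condition holds literally.
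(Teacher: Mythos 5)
Your proof is correct and takes exactly the paper's route: the paper defines the same frequency-based order family (with the same identity tie-break) and invokes \Cref{thm:monotone}, merely sketching the monotonicity and self-similarity checks that you spell out in full. No gaps.
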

\begin{proof}
We only sketch the proof, since it is similar to the preceding one. Let $\Phi(\sigma, x)$ be the number of accesses to $x$ in $\sigma$. Let $x \preceq_{\sigma} y$ if and only if $\Phi(\sigma, x) > \Phi(\sigma, y)$ or ($\Phi(\sigma, x) = \Phi(\sigma, y)$ and $x \leq y$). LFU conforms to this order family, which can be shown to be monotone and self-similar. By \Cref{thm:monotone}, LFU is stable.
\end{proof}

\subsection{Missing Proofs}

In the rest of this section we prove \Cref{lem:technical} and \Cref{lem:monotone}. We restate them here for convenience.

\technical
\begin{proof}
In order to streamline the argument, we will not explicitly invoke the laziness of $\calA$ every time we use it. Suppose that
\begin{equation}
\label{eq:technical-1}
\calA_k(\tau) \subseteq \calA_{k+1}(\tau),
\end{equation}
and
\begin{equation}
\label{eq:technical-2}
\evict(\calA_{k+1}, \tau, z) \cap \calA_k(\tau z) \neq \emptyset,
\end{equation}
and let us show that $\evict(\calA_k, \tau, z) \cap \calA_{k+1}(\tau z) \neq \emptyset$.

Let $x \in \evict(\calA_{k+1}, \tau, z) \cap \calA_k(\tau z)$, as per \Cref{eq:technical-2}. Since $\calA_{k+1}$ evicted $x$ to fetch $z$, it had a full cache when $z$ was accessed. Then, $\calA_k$ was also full at that time. Notice that $z$ was a miss in $\calA_k$; otherwise, if $z \in \calA_k(\tau)$, then, by \Cref{eq:technical-1}, $z \in \calA_{k+1}(\tau)$, contradicting the fact that $z$ caused an eviction in $\calA_{k+1}$. Hence, $\calA_k$ was full when it missed $z$, and it must have evicted an item, say
\begin{equation}    
\label{eq:technical-3}
y \in \evict(\calA_k, \tau, z).
\end{equation}

To conclude the proof, we show that $y \in \calA_{k+1}(\tau z)$. \Cref{eq:technical-3} implies that $y \in \calA_k(\tau)$. Then, by \Cref{eq:technical-1},
\begin{equation}
\label{eq:technical-4}
y \in \calA_{k+1}(\tau).
\end{equation}
\noindent
Notice that $y \neq x$, since $x \in \calA_k(\tau z)$ and $y \notin \calA_k(\tau z)$. In particular, $y$ was not evicted by $\calA_{k+1}$. Combining this with \Cref{eq:technical-4}, we conclude that $y \in \calA_{k+1}(\tau z)$, as desired.
\end{proof}

\monotone
\begin{proof}
    Once again, we will use the fact that $\calA$ is lazy without being explicit about it. The proof is by induction on $n$. The case $n = 0$ is trivial. Let $n \geq 1$. Let $\tau = \sigma_1\dots\sigma_{n-1}$. Let $y_1, \dots, y_{|\Sigma_{n-1}|}$ be the items of $\Sigma_{n-1}$ ordered by $\preceq_{\tau}$.
    
    Fix $k \leq s$. By the inductive hypothesis,
    \begin{equation}
        \label{eq:induction}
        \{y_1, \dots, y_{k-1}\} \subseteq \calA_k(\tau).
    \end{equation}
    Since the order family is monotone,
    \begin{equation}
        \label{eq:contained}
        \{x_1, \dots, x_{k-1}\} \subseteq \{y_1, \dots, y_{k-1}\} \cup \{\sigma_n\}.
    \end{equation}
    (Intuitively, when $\sigma_n$ is accessed, it is taken from its current position in the ordered list of items $y_1, y_2, \dots$ and pulled forward; if $\sigma_n$ was not accessed before, it is inserted somewhere in the list.)

    If $\sigma_n \in \calA_k(\tau)$, then \Cref{eq:induction} implies that
    \[
        \{y_1, \dots, y_{k-1}\} \cup \{\sigma_n\} \subseteq \calA_k(\sigma).
    \]
    Thus, by \Cref{eq:contained}, $\{x_1, \dots, x_{k-1}\} \subseteq \calA_k(\sigma)$.

    Otherwise, $\sigma_n \notin \calA_k(\tau)$. By \Cref{eq:induction}, we have $\calA_k(\tau) = \{y_1, \dots, y_{k-1}, z\}$ for some $z$ that is larger than every other item in $\calA_k(\tau)$ w.r.t. $\preceq_{\tau}$. By monotonicity, $z$ is also the largest in $\calA_k(\tau)$ w.r.t. $\preceq_{\sigma}$. Thus, when $\sigma_n$ is accessed, $z$ is evicted, and
    \[
    \calA_k(\sigma) = \{y_1, \dots, y_{k-1}, \sigma_n\}.
    \]
    By \Cref{eq:contained}, we conclude that $\{x_1, \dots, x_{k-1}\} \subseteq \calA_k(\sigma)$, as desired. This completes the induction.
\end{proof}
\section{Conclusion}

What is the smallest associativity $\ways$ a cache designer may use such that hit rate of set-associative cache is similar to that of a fully associative cache? The answer to this decades-old question is to set $\ways$ slightly above logarithmic in the cache size. This performance guarantee, however, only lasts for a bounded amount of time. If the cache designer further wishes the set-associative cache to sustain comparable performance on arbitrarily long executions, periodically rehashing after a polynomial number of cache misses suffices.

\section*{Acknowledgments}

We thank the anonymous reviewers of SPAA '23 for their feedback that helped us improve the presentation of the paper. We gratefully acknowledge support from NSF grants CCF-2118830, CCF-2106827, CSR-1763680, CNS-1938709, CNS-2118620 and CCF-2106999. This research was also supported in part by NSERC.

\bibliographystyle{plain}
\bibliography{bibliography}

\begin{thebibliography}{10}

\bibitem{Adas2022LimitedAssociativity}
Dolev Adas, Gil Einziger, and Roy Friedman.
\newblock Limited associativity makes concurrent software caches a breeze.
\newblock In {\em Proceedings of the 23rd International Conference on
  Distributed Computing and Networking (ICDCN)}, pages 87--96, New York, NY,
  USA, 2022. Association for Computing Machinery.

\bibitem{Agarwal1988CachePerformance}
Anant Agarwal, John Hennessy, and Mark Horowitz.
\newblock Cache performance of operating system and multiprogramming workloads.
\newblock {\em ACM Trans. Comput. Syst.}, 6(4):393--431, November 1988.

\bibitem{Aggarwal1988ExternalMemory}
Alok Aggarwal and S.~Vitter, Jeffrey.
\newblock The input/output complexity of sorting and related problems.
\newblock {\em Commun. ACM}, 31(9):1116–1127, sep 1988.

\bibitem{AgrawalBeDa20}
Kunal Agrawal, Michael~A. Bender, Rathish Das, William Kuszmaul, Enoch
  Peserico, and Michele Scquizzato.
\newblock Green paging and parallel paging.
\newblock In {\em Proc.\ 32nd ACM Symposium on Parallelism in Algorithms and
  Architectures (SPAA)}, pages 493--495, July 2020.

\bibitem{AgrawalBeDa21}
Kunal Agrawal, Michael~A. Bender, Rathish Das, William Kuszmaul, Enoch
  Peserico, and Michele Scquizzato.
\newblock Tight bounds for parallel paging and green paging.
\newblock In {\em Proc.\ 32th Annual ACM-SIAM Symposium on Discrete Algorithms
  (SODA)}, pages 3022--3041, January 2021.

\bibitem{agrawal2022online}
Kunal Agrawal, Michael~A Bender, Rathish Das, William Kuszmaul, Enoch Peserico,
  and Michele Scquizzato.
\newblock Online parallel paging with optimal makespan.
\newblock In {\em Proceedings of the 34th ACM Symposium on Parallelism in
  Algorithms and Architectures}, pages 205--216, 2022.

\bibitem{Agrawal2009WorstPaging}
Kunal Agrawal, Michael~A. Bender, and Jeremy~T. Fineman.
\newblock The worst page-replacement policy.
\newblock {\em Theory of Computing Systems}, 44:175--185, 2 2009.

\bibitem{Aingaran2015M7}
Kathirgamar Aingaran, Sumti Jairath, Georgios Konstadinidis, Serena Leung, Paul
  Loewenstein, Curtis McAllister, Stephen Phillips, Zoran Radovic, Ram
  Sivaramakrishnan, David Smentek, and Thomas Wicki.
\newblock M7: Oracle's next-generation sparc processor.
\newblock {\em IEEE Micro}, 35(2):36--45, 2015.

\bibitem{Belady1969Anomaly}
L.~A. Belady, R.~A. Nelson, and G.~S. Shedler.
\newblock An anomaly in space-time characteristics of certain programs running
  in a paging machine.
\newblock {\em Commun. ACM}, 12(6):349–353, jun 1969.

\bibitem{Bell1974CacheOrganizations}
James Bell, David Casasent, and C.~Gordon Bell.
\newblock An investigation of alternative cache organizations.
\newblock {\em IEEE Transactions on Computers}, C-23(4):346--351, 1974.

\bibitem{Bender2021AddressTranslation}
Michael~A. Bender, Abhishek Bhattacharjee, Alex Conway, Mart\'{\i}n
  Farach-Colton, Rob Johnson, Sudarsun Kannan, William Kuszmaul, Nirjhar
  Mukherjee, Don Porter, Guido Tagliavini, Janet Vorobyeva, and Evan West.
\newblock Paging and the address-translation problem.
\newblock In {\em Proceedings of the 33rd ACM Symposium on Parallelism in
  Algorithms and Architectures (SPAA)}, pages 105--117, New York, NY, USA,
  2021. Association for Computing Machinery.

\bibitem{Borodin1998Online}
Allan Borodin and Ran El-Yaniv.
\newblock {\em Online Computation and Competitive Analysis}.
\newblock Cambridge University Press, USA, 1998.

\bibitem{Bowhill2015Xeon}
Bill Bowhill, Blaine Stackhouse, Nevine Nassif, Zibing Yang, Arvind Raghavan,
  Charles Morganti, Chris Houghton, Dan Krueger, Olivier Franza, Jayen Desai,
  Jason Crop, Dave Bradley, Chris Bostak, Sal Bhimji, and Matt Becker.
\newblock The xeon{\textregistered} processor e5-2600 v3: A 22nm 18-core
  product family.
\newblock In {\em 2015 IEEE International Solid-State Circuits Conference
  (ISSCC) Digest of Technical Papers}, pages 1--3. IEEE Computer Society, 2015.

\bibitem{Boyar2010LRUK}
Joan Boyar, Martin~R. Ehmsen, Jens~S. Kohrt, and Kim~S. Larsen.
\newblock A theoretical comparison of lru and lru-k.
\newblock {\em Acta Informatica}, 47:359--374, 2010.

\bibitem{Boyar2007Paging}
Joan Boyar, Lene~M. Favrholdt, and Kim~S. Larsen.
\newblock The relative worst-order ratio applied to paging.
\newblock {\em J. Comput. Syst. Sci.}, 73(5):818--843, August 2007.

\bibitem{Brehob2003RestrictedCaching}
Mark Brehob, Richard Enbody, Eric Torng, and Stephen Wagner.
\newblock On-line restricted caching.
\newblock In {\em Proceedings of the Twelfth Annual ACM-SIAM Symposium on
  Discrete Algorithms (SODA)}, pages 374--383, USA, 2001. Society for
  Industrial and Applied Mathematics.

\bibitem{Buchbinder2014RestrictedCaching}
Niv Buchbinder, Shahar Chen, and Joseph~(Seffi) Naor.
\newblock Competitive algorithms for restricted caching and matroid caching.
\newblock In Andreas~S. Schulz and Dorothea Wagner, editors, {\em Proceedings
  of the 22nd Annual European Symposium on Algorithms (ESA)}, pages 209--221,
  Berlin, Heidelberg, September 2014. Springer-Verlag.

\bibitem{Coffman1973OS}
Edward~G. Coffman and Peter~J. Denning.
\newblock {\em Operating Systems Theory}.
\newblock Prentice Hall Professional Technical Reference, 1973.

\bibitem{DasAB20}
Rathish Das, Kunal Agrawal, Michael~A Bender, Jonathan Berry, Benjamin Moseley,
  and Cynthia~A Phillips.
\newblock How to manage high-bandwidth memory automatically.
\newblock In {\em Proceedings of the 32nd ACM Symposium on Parallelism in
  Algorithms and Architectures (SPAA)}, pages 187--199, 2020.

\bibitem{DeLayoZA22}
Daniel DeLayo, Kenny Zhang, Kunal Agrawal, Michael~A Bender, Jonathan Berry,
  Rathish Das, Benjamin Moseley, and Cynthia~A Phillips.
\newblock Automatic hbm management: Models and algorithms.
\newblock In {\em Proceedings of the 34th ACM Symposium on Parallelism in
  Algorithms and Architectures (SPAA)}, 2022.

\bibitem{Dorrigiv2008OnlineAnalysis}
Reza Dorrigiv and Alejandro L\'{o}pez-Ortiz.
\newblock Closing the gap between theory and practice: New measures for on-line
  algorithm analysis.
\newblock In {\em International Workshop on Algorithms and Computation
  (WALCOM)}, pages 13--24. Springer-Verlag Berlin Heidelberg, 2008.

\bibitem{Dorrigiv2009Paging}
Reza Dorrigiv, Alejandro L\'{o}pez-Ortiz, and J.~Ian Munro.
\newblock On the relative dominance of paging algorithms.
\newblock {\em Theor. Comput. Sci.}, 410(38--40):3694--3701, September 2009.

\bibitem{Devdatt2009Concentration}
Devdatt Dubhashi and Alessandro Panconesi.
\newblock {\em Concentration of Measure for the Analysis of Randomized
  Algorithms}.
\newblock Cambridge University Press, USA, 1st edition, 2009.

\bibitem{Fiat1991Marking}
Amos Fiat, Richard~M. Karp, Michael Luby, Lyle~A. McGeoch, Daniel~D. Sleator,
  and Neal~E. Young.
\newblock Competitive paging algorithms.
\newblock {\em Journal of Algorithms}, 12(4):685--699, December 1991.

\bibitem{Fix2003SetAssociativityTrees}
James~D. Fix.
\newblock The set-associative cache performance of search trees.
\newblock In {\em Proceedings of the Fourteenth Annual ACM-SIAM Symposium on
  Discrete Algorithms (SODA)}, pages 565--572, USA, 2003. Society for
  Industrial and Applied Mathematics.

\bibitem{Frigo1999CacheOblivious}
M.~Frigo, Charles~E. Leiserson, Harald Prokop, and Sridhar Ramachandran.
\newblock Cache-oblivious algorithms.
\newblock In {\em Proceedings of the 1999 IEEE 40th Annual Symposium on
  Foundations of Computer Science (FOCS)}, page 285, Los Alamitos, CA, USA,
  October 1999. IEEE Computer Society.

\bibitem{Harper1999NestedLoops}
John~S. Harper, Darren~J. Kerbyson, and Graham~R. Nudd.
\newblock Analytical modeling of set-associative cache behavior.
\newblock {\em IEEE Transactions on Computers}, 48(10):1009--1024, October
  1999.

\bibitem{Hill1988DirectMapping}
Mark~D. Hill.
\newblock A case for direct-mapped caches.
\newblock {\em Computer}, 21(12):25--40, December 1988.

\bibitem{Hill1989Associativity}
Mark~D. Hill and Alan~J. Smith.
\newblock Evaluating associativity in cpu caches.
\newblock {\em IEEE Transactions on Computers}, 38(12):1612--1630, December
  1989.

\bibitem{Jiang2002LIRS}
Song Jiang and Xiaodong Zhang.
\newblock Lirs: An efficient low inter-reference recency set replacement policy
  to improve buffer cache performance.
\newblock In {\em Proceedings of the 2002 ACM SIGMETRICS International
  Conference on Measurement and Modeling of Computer Systems}, SIGMETRICS '02,
  pages 31--42, New York, NY, USA, 2002. Association for Computing Machinery.

\bibitem{Jouppi1990VictimCache}
Norman~P. Jouppi.
\newblock Improving direct-mapped cache performance by the addition of a small
  fully-associative cache and prefetch buffers.
\newblock {\em ACM SIGARCH Computer Architecture News}, 18, 1990.

\bibitem{Karlin1986FWF}
Anna~R. Karlin, Mark~S. Manasse, Larry Rudolph, and Daniel~D. Sleator.
\newblock Competitive snoopy caching.
\newblock In {\em Proceedings of the 27th Annual Symposium on Foundations of
  Computer Science (FOCS)}, pages 244--254, 1986.

\bibitem{Kessler1989SetAssociativity}
Richard~E. Kessler, R.~Jooss, Alvin~R. Lebeck, and Mark~D. Hill.
\newblock Inexpensive implementations of set-associativity.
\newblock In {\em The 16th Annual International Symposium on Computer
  Architecture (ISCA)}, pages 131--139. IEEE Computer Society, 1989.

\bibitem{King1971DemandPaging}
William~F. King.
\newblock Analysis of demand paging algorithms.
\newblock In {\em IFIP Congress}, 1971.

\bibitem{Konstadinidis2016M7}
Georgios~K. Konstadinidis, Hongping~Penny Li, Francis Schumacher, Venkat
  Krishnaswamy, Hoyeol Cho, Sudesna Dash, Robert~P. Masleid, Chaoyang Zheng,
  Yuanjung~David Lin, Paul Loewenstein, Heechoul Park, Vijay Srinivasan, Dawei
  Huang, Changku Hwang, Wenjay Hsu, Curtis McAllister, Jeff Brooks, Ha~Pham,
  Sebastian Turullols, Yifan Yanggong, Robert Golla, Alan~P. Smith, and Ali
  Vahidsafa.
\newblock Sparc m7: A 20 nm 32-core 64 mb l3 cache processor.
\newblock {\em IEEE Journal of Solid-State Circuits}, 51(1):79--91, 2016.

\bibitem{Kurd2014Haswell}
Nasser Kurd, Muntaquim Chowdhury, Edward Burton, Thomas~P. Thomas, Christopher
  Mozak, Brent Boswell, Manoj Lal, Anant Deval, Jonathan Douglas, Mahmoud
  Elassal, Ankireddy Nalamalpu, Timothy~M. Wilson, Matthew Merten, Srinivas
  Chennupaty, Wilfred Gomes, and Rajesh Kumar.
\newblock Haswell: A family of ia 22nm processors.
\newblock In {\em 2014 IEEE International Solid-State Circuits Conference
  Digest of Technical Papers (ISSCC)}, pages 112--113. IEEE Computer Society,
  2014.

\bibitem{Manasse1990Server}
Mark~S. Manasse, Lyle~A. McGeoch, and Daniel~D. Sleator.
\newblock Competitive algorithms for server problems.
\newblock {\em J. Algorithms}, 11(2):208–230, may 1990.

\bibitem{Mattson1970StorageHierarchies}
Richard~L. Mattson, Jan Gecsei, D.~R. Slutz, and I.~L. Traiger.
\newblock Evaluation techniques for storage hierarchies.
\newblock {\em IBM Systems Journal}, 9(2):78--117, June 1970.

\bibitem{Mendel2004CompanionCaching}
M.~Mendel and Steven~S. Seiden.
\newblock Online companion caching.
\newblock {\em Theoretical Computer Science}, 324(2--3):183--200, September
  2004.

\bibitem{Mousavi2012Chernoff}
Nima Mousavi.
\newblock How tight is chernoff bound?
\newblock Notes, 2012.

\bibitem{ONeil1993LRUK}
Elizabeth~J. O'Neil, Patrick~E. O'Neil, and Gerhard Weikum.
\newblock The lru-k page replacement algorithm for database disk buffering.
\newblock In {\em Proceedings of the 1993 ACM SIGMOD International Conference
  on Management of Data}, pages 297--306, New York, NY, USA, 1993. Association
  for Computing Machinery.

\bibitem{ONeil1999LRUKOptimality}
Elizabeth~J. O'Neil, Patrick~E. O'Neil, and Gerhard Weikum.
\newblock An optimality proof of the lru-k page replacement algorithm.
\newblock {\em Journal of the ACM}, 46(1):92--112, January 1999.

\bibitem{Peserico2003ArbitraryAssociativity}
Enoch Peserico.
\newblock Online paging with arbitrary associativity.
\newblock In {\em Proceedings of the Fourteenth Annual ACM-SIAM Symposium on
  Discrete Algorithms (SODA)}, pages 555--564, USA, 2003. Society for
  Industrial and Applied Mathematics.

\bibitem{Przybylski1988Tradeoffs}
S.~Prybylski, M.~Horowitz, and J.~Hennessy.
\newblock Performance tradeoffs in cache design.
\newblock In {\em Proceedings of the 15th Annual International Symposium on
  Computer Architecture (ISCA)}, pages 290--298, Washington, DC, USA, 1988.
  IEEE Computer Society.

\bibitem{Qureshi2018Rehashing}
Moinuddin~K. Qureshi.
\newblock Ceaser: Mitigating conflict-based cache attacks via encrypted-address
  and remapping.
\newblock In {\em Proceedings of the 51st Annual IEEE/ACM International
  Symposium on Microarchitecture (MICRO)}, pages 775--787. IEEE Press, 2018.

\bibitem{Qureshi2019Rehashing}
Moinuddin~K. Qureshi.
\newblock New attacks and defense for encrypted-address cache.
\newblock In {\em Proceedings of the 46th International Symposium on Computer
  Architecture (ISCA)}, pages 360--371, New York, NY, USA, 2019. Association
  for Computing Machinery.

\bibitem{Rao1978CachePerformance}
Gururaj~S. Rao.
\newblock Performance analysis of cache memories.
\newblock {\em Journal of the ACM}, 25(3):378--395, July 1978.

\bibitem{RocksDB2022LRUCache}
RocksDB.
\newblock Block cache.
\newblock RocksDB wiki, 2022.
\newblock Last accessed: 2023-01-09.

\bibitem{Rotem2022AlderLake}
Efraim Rotem, Adi Yoaz, Lihu Rappoport, Stephen~J. Robinson, Julius~Yuli
  Mandelblat, Arik Gihon, Eliezer Weissmann, Rajshree Chabukswar, Vadim Basin,
  Russell Fenger, Monica Gupta, and Ahmad Yasin.
\newblock Intel alder lake cpu architectures.
\newblock {\em IEEE Micro}, 42(3):13--19, 2022.

\bibitem{Sanders1999SetAssociativity}
Peter Sanders.
\newblock Accessing multiple sequences through set associative caches.
\newblock In {\em Proceedings of the 26th International Colloquium on Automata,
  Languages and Programming (ICALP)}, pages 655--664, Berlin, Heidelberg, 1999.
  Springer-Verlag.

\bibitem{Sen2013Models}
Rathijit Sen and David~A. Wood.
\newblock Reuse-based online models for caches.
\newblock In {\em Proceedings of the International Conference on Measurement
  and Modeling of Computer Systems (SIGMETRICS)}, pages 279--292, New York, NY,
  USA, 2013. Association for Computing Machinery.

\bibitem{Sen2002TheoryCacheEfficient}
Sandeep Sen, Siddhartha Chatterjee, and Neeraj Dumir.
\newblock Towards a theory of cache-efficient algorithms.
\newblock {\em Journal of the ACM}, 49(6):828--858, November 2002.

\bibitem{Sleator1985Paging}
Daniel~D. Sleator and Robert~E. Tarjan.
\newblock Amortized efficiency of list update and paging rules.
\newblock {\em Commun. ACM}, 28(2):202–208, February 1985.

\bibitem{Smith1976SetAssociativity}
Alan~J. Smith.
\newblock On the effectiveness of set associative page mapping and its
  application to main memory management.
\newblock In {\em Proceedings of the 2nd International Conference on Software
  Engineering (ICSE)}, pages 286--292, Washington, DC, USA, 1976. IEEE Computer
  Society Press.

\bibitem{Smith1978SetAssociativity}
Alan~J. Smith.
\newblock A comparative study of set associative memory mapping algorithms and
  their use for cache and main memory.
\newblock {\em IEEE Transactions on Software Engineering}, 4(2):121--130, March
  1978.

\bibitem{Tam2018SkyLakeSP}
Simon~M. Tam, Harry Muljono, Min Huang, Sitaraman Iyer, Kalapi Royneogi,
  Nagmohan Satti, Rizwan Qureshi, Wei Chen, Tom Wang, Hubert Hsieh, Sujal Vora,
  and Eddie Wang.
\newblock Skylake-sp: A 14nm 28-core xeon{\textregistered} processor.
\newblock In {\em 2018 IEEE International Solid-State Circuits Conference
  (ISSCC)}, pages 34--36. IEEE Computer Society, 2018.

\bibitem{Topham1999RandomizedCache}
Nigel Topham and Antonio Gonz\'{a}lez.
\newblock Randomized cache placement for eliminating conflicts.
\newblock {\em IEEE Transactions on Computers}, 48(2):185--192, February 1999.

\bibitem{Wajc2017NA}
David Wajc.
\newblock Negative association - definition, properties, and applications,
  2017.

\bibitem{Xiang2013HOTL}
Xiaoya Xiang, Chen Ding, Hao Luo, and Bin Bao.
\newblock Hotl: A higher order theory of locality.
\newblock In {\em Proceedings of the Eighteenth International Conference on
  Architectural Support for Programming Languages and Operating Systems
  (ASPLOS)}, pages 343--356, New York, NY, USA, 2013. Association for Computing
  Machinery.

\bibitem{Young1992Thesis}
Neal Young.
\newblock {\em Competitive Paging and Dual-Guided on-Line Weighted Caching and
  Watching Algorithms}.
\newblock PhD thesis, Princeton University, USA, 1992.

\bibitem{Young1994LooseCompetitiveness}
Neil Young.
\newblock The k-server dual and loose competitiveness for paging.
\newblock {\em Algorithmica}, 11(6):525--541, 1994.

\bibitem{Young2002FileCaching}
Neil Young.
\newblock On-line file caching.
\newblock {\em Algorithmica}, 33(3):371--383, 2002.

\end{thebibliography}

\end{document}